\renewcommand{\title}[1]{
\begin{center} \Large \bf #1 \end{center}
}
\renewcommand{\author}[2]{
 \begin{center} #1  \vspace{3mm} \\
  #2 \\
 \end{center}
\addvspace{\baselineskip}
}
\newtheorem{thm}{Theorem}[section]
\newtheorem{prop}[thm]{Proposition}
\newtheorem{lem}[thm]{Lemma}
\newtheorem{conj}[thm]{Conjecture}
\theoremstyle{definition}
\newtheorem{defn}[thm]{Definition}
\theoremstyle{remark}
\begin{document}

\baselineskip 5mm

\title{Exact solution of matricial $\Phi^3_2$ quantum field theory}

\author{Harald Grosse${}^1$, Akifumi Sako${}^{1,2}$ 
and Raimar Wulkenhaar${}^3$}{
${}^1$
Fakult\"at f\"ur Physik, Universit\"at Wien\\
Boltzmanngasse 5, A-1090 Wien, Austria\\[\smallskipamount]
${}^2$  Department of Mathematics,
Faculty of Science Division II,\\
Tokyo University of Science,
1-3 Kagurazaka, Shinjuku-ku, Tokyo 162-8601, Japan\\[\smallskipamount]
${}^3$ {Mathematisches Institut der Westf\"alischen
  Wilhelms-Universit\"at\\
Einsteinstra\ss{}e 62, D-48149 M\"unster, Germany}}

\noindent
{\bf MSC 2010:} 81T16, 81R12, 45F05
\vspace{1cm}

\footnotetext[1]{harald.grosse@univie.ac.at, 
$^2$sako@rs.tus.ac.jp, $^3$raimar@math.uni-muenster.de}

{\abstract \noindent
We apply a recently developed method to exactly solve 
the $\Phi^3$ matrix model with covariance of a 
two-dimensional theory, also known as regularised Kontsevich model. 
Its correlation functions collectively describe graphs on a multi-punctured 
2-sphere. We show how Ward-Takahashi identities and Schwinger-Dyson equations
lead in a special large-$\mathcal{N}$ limit to integral equations
that we solve exactly for all correlation functions. 
Remarkably, these functions are analytic in the $\Phi^3$ coupling constant,
although bounds on individual graphs justify only Borel
summability.
\\
\hspace*{1.5em}The solved model arises from noncommutative field theory 
in a special limit of strong deformation parameter.
The limit defines ordinary 2D Schwinger
functions which, however, do not satisfy reflection positivity.}

\section{Introduction}

Matrix models \cite{Di Francesco:1993nw} were intensely studied 
around 1990. Highlights include
the non-perturbative solution of the Hermitean one-matrix model
\cite{Brezin:1990rb, Douglas:1989ve, Gross:1989vs} and the
understanding that it gives a rigorous meaning to quantum gravity in
two dimensions. As proved by Kontsevich
\cite{Kontsevich:1992ti}, there is an equivalent formulation by a
model for Hermitean matrices $\Phi$ with action
$\mathrm{tr}(E\Phi^2+\frac{\mathrm{i}}{6} \Phi^3)$, where $E$ is a
fixed external matrix. Equivalently, the external structure can be
moved to the linear term. The resulting partition function
\begin{align}
\mathcal{Z}[J]=\int_{M_{\mathcal{N}}(\mathbb{C})} 
\mathcal{D}\Phi\;\exp\Big(-\mathrm{tr}\Big(-J\Phi
+ \frac{\mathcal{N}}{2} \beta \Phi^2
+ \frac{\mathcal{N}}{3} \alpha \Phi^3\Big)\Big)
\label{action-MS}
\end{align}
(all matrices self-adjoint) was solved by Makeenko 
and Semenoff \cite{Makeenko:1991ec}. The
strategy consists in a diagonalisation of $\Phi$ thanks to the 
Itzykson--Zuber--Harish-Chandra formula, leaving an integral over the
eigenvalues $x_i$ of the random matrix $\Phi$. Since these $x_i$ are
dummy integration variables, the partition function is invariant under
variations $x_i \mapsto x_i+\epsilon_n x_i^{n+1}$. These give rise to
Virasoro constraints on $\mathcal{Z}[j_1,\dots,j_N]$, which 
Makeenko-Semenoff were able to solve.

A renewed interest in matrix models came from field theories on
noncommutative spaces of Moyal-Weyl type. 
We mention the magnetic field model studied in \cite{Langmann:2003cg},
which is also exactly solvable but trivial as a field theory. 
The field theory of the $\Phi^3$ model on Moyal space with 
harmonic term (see below) 
has been studied by one of us (HG) and H.\ Steinacker 
in \cite{Grosse:2005ig, Grosse:2006tc}. The novel aspect was a renormalisation
procedure for the Kontsevich model. Only partial information
on correlation functions were obtained; this is the point where the
present paper goes much further.

Two of us (HG+RW) worked on the $\Phi^4$-theory on four-dimensional
Moyal-Weyl deformed space and cured the ultraviolet-infrared mixing by
adding a harmonic oscillator potential to the action. This leads to a
renormalisable model \cite{Grosse:2004yu}, which develops a zero of
the $\beta$-function of the coupling constant \cite{Disertori:2006nq}
at a special value of the parameter space.  At this special point the
model becomes a dynamical matrix model. In \cite{Grosse:2012uv} we
(HG+RW) extended the idea of \cite{Disertori:2006nq} to an alternative
solution strategy for matrix models, avoiding the diagonalisation
(which is useless for the $\Phi^4$ interaction). We used instead the
Ward-Takahashi identities which result from a variation $\Phi \mapsto
U^*\Phi U$, with $U=\exp(\mathrm{i}\epsilon B)$ unitary, to derive a
different type of Schwinger-Dyson equations. We proved that one of
them consists in a non-linear singular integral equation for the
2-point function alone (first obtained in \cite{Grosse:2009pa}), which
then determines all higher correlation functions. We subsequently
reduced the problem to a fixed point equation for a single function on
$\mathbb{R}_+$ and proved that a solution exists
\cite{Grosse:2015fka}. If one could prove that the solution is the
Stieltjes transform of a positive measure, which is true for the
computer \cite{Grosse:2014lxa}, then one could convert the model into
a 4-dimensional Euclidean quantum field theory with
reflection-positive Schwinger 2-point function \cite{Grosse:2013iva}.

\smallskip

In this paper we apply the strategy of \cite{Grosse:2012uv} to the
$\Phi^3_2$ matrix model\footnote{In our subsequent paper \cite{Grosse:2016qmk}
we extend this work to four and six dimensions. Whereas the renormalisation
of $\Phi^3_4$ and $\Phi^3_6$ is much more involved, the solution
of the Schwinger-Dyson equations is easily adapted from the 
$\Phi^3_2$ case. To avoid duplication of material we introduce 
in some formulae parameters $Z,\nu$ which 
at the end are set to $Z=1$ and $\nu=0$ for $\Phi^3_2$.}.
Since a linear term would be generated by
loop corrections, we add it from the beginning. 
We define first the model with cut-offs and give next
Ward-Takahashi (WT) identities and Schwinger-Dyson (SD)
equations. The 1-point function requires renormalisation,
after which the cut-off can be sent to $\infty$ in the usual way 
\cite{Makeenko:1991ec}; for noncommutative field theory 
this corresponds to a limit of large matrices coupled with an infinitely
strong deformation parameter --  a limit which is called the ``Swiss
cheese limit''. This way one projects onto the genus zero sector, but
keeps all possible boundary components. In this limit
the infinite hierarchy of SD-equations decouples (as in the 
$\Phi^4$-model \cite{Grosse:2012uv}). We find that a function $W(X)$ 
related to the 1-point function satisfies a non-linear integral 
equation which, up to the
renormalisation problem, \emph{is identical} to an equation 
solved by Makeenko-Semenoff \cite{Makeenko:1991ec}
in the framework of the Kontsevich model. This coincidence is by no means
surprising! We then proceed by resolving the entire hierarchy 
of linear equations for all genus-zero matrix correlation functions. 
Here combinatorial identities on Bell polynomials
play a crucial r\^ole. 

\smallskip

In the final section we relate the $\Phi^3$ matrix model to field
theory on noncommutative Moyal space. 
We also perform in position space the limit of large deformation
parameter. In this way a Euclidean quantum field theory on standard
(undeformed) $\mathbb{R}^2$ is obtained for which we can explicitly
describe all connected Schwinger functions. We deduce that already the
Schwinger 2-point function \emph{does not fulfil} reflection
positivity for whatever (real or imaginary) non-zero coupling
constant. This is in sharp contrast with the $\phi^4$-model where
numerical and partial analytic evidence was given that the
Schwinger 2-point function is reflection positive.

\smallskip

Associating a quantum field theory with a matrix model is somewhat
unusual in the traditional setup. We therefore begin in section 
\ref{prelude} with a description of this relation, thereby giving a 
precise definition of
correlation functions on the multi-punctured sphere, with 
$N_\beta$ fields attached to the $\beta^\text{th}$ boundary component
(=\,puncture). We also point out that from the graphical perspective the
perturbation series cannot be expected to converge; it is at best 
Borel summable. This 
highlights our achievement of explicit analytic formulae for any
correlation function.

\section{Prelude: A QFT toy model}
\label{prelude}

We consider planar graphs $\Gamma$ on the 2-sphere with two sorts of
vertices: any number of black (internal) vertices of valence 3, and
$B\geq 1$ white vertices $\{v_\beta\}_{\beta=1}^B$ (external vertices,
or punctures, or boundary components) of any valence $N_\beta\geq 1$.
Every face is required to have at most one white vertex (separation of
punctures). Faces with a white vertex are called external;
they are labelled by positive real numbers
$x^1_1,\dots,x^1_{N_1},\dots,x^B_1,\dots,x^B_{N_B}$ (the upper index
labels the unique white vertex of the face).  Faces without white
vertex are called internal; they are labelled by positive real numbers
$y_1,\dots, y_L$.  Such graphs are dual to triangulations of the
$B$-punctured sphere.

We associate
a weight $(-\tilde{\lambda})$ to each black vertex, weight $1$ to each 
white vertex, and weight $\frac{1}{z_1+z_2+1}$ to an edge separating
faces labelled by $z_1$ and $z_2$. These can be internal or external,
also $z_1=z_2$ can occur. Multiply the weights of all edges and vertices of the graph and integrate over all internal face variables $y_1,\dots,y_L$ 
from $0$ to a cut-off $\Lambda^2$, thus giving rise to a function 
$\tilde{G}^\Lambda_\Gamma(x^1_1,\dots,x^1_{N_1}|\dots|x^B_1,\dots,x^B_{N_B})$ of 
the external face variables. 

Three examples are in order:
\begin{align}
\Gamma_1: && \parbox{20mm}{\begin{picture}(20,10)
\put(15,5){\circle{10}}
\put(0.5,5){\line(1,0){9.5}}
\put(-1,4){\mbox{$\circ$}}
\put(9,4){\textbullet}
\put(3,6.5){\mbox{\small$x_1^1$}}
\put(13,5){\mbox{\small$y_1$}}
\end{picture}}
&&
\tilde{G}^\Lambda_{\Gamma_1}(x^1_1) &=\frac{(-\tilde{\lambda})}{2x^1_1+1}
\int_0^{\Lambda^2} \frac{dy_1}{x^1_1+y_1+1},
\\
\Gamma_2: && \parbox{24mm}{\begin{picture}(24,15)
\put(12,10){\circle{10}}
\qbezier(7,10)(0,10)(0,5)
\qbezier(17,10)(24,10)(24,5)
\qbezier(0,5)(0,0)(11.3,0)
\qbezier(24,5)(24,0)(12.6,0)
\put(6,9){\textbullet}
\put(16,9){\textbullet}
\put(11,-1){\mbox{$\circ$}}
\put(2.5,5){\mbox{\small$x_1^1$}}
\put(1.5,12){\mbox{\small$x_2^1$}}
\put(10,10){\mbox{\small$y_1$}}
\end{picture}}
&&
\tilde{G}^\Lambda_{\Gamma_2}(x^1_1,x^1_2) &=\frac{(-\tilde{\lambda})^2 }{
(x^1_1+x^1_2+1)^2}
\int_0^{\Lambda^2} \!\! \frac{dy_1}{(x^1_1+y_1+1)(x^1_2+y_1+1)},
\\
\Gamma_3: && \parbox{24mm}{\begin{picture}(24,10)
\put(16,5){\oval(16,10)}
\put(0.5,5){\line(1,0){8}}
\put(24,5){\line(-1,0){6}}
\put(7,4){\textbullet}
\put(23,4){\textbullet}
\put(-1,4){\mbox{$\circ$}}
\put(16.5,4){\mbox{$\circ$}}
\put(2,6.5){\mbox{\small$x_1^1$}}
\put(11.5,5){\mbox{\small$x^2_1$}}
\end{picture}}
&&
\tilde{G}^\Lambda_{\Gamma_3}(x^1_1|x^2_1) &=\frac{(-\tilde{\lambda})^2 }{
(2x^1_1+1)(2x^2_1+1)(x^1_1+x^2_1+1)^2}.
\end{align}

This setting defines a toy model of quantum field theory, sharing all
typical features. It has the power-counting behaviour of the
$\Phi^3_2$ model, in particular has a single divergence: The limit
$\lim_{\Lambda\to \infty}\tilde{G}^\Lambda_{\Gamma_1}(x^1_1)$ does not
exist. The problem is cured by renormalisation. We assume the reader
is familiar with the notion of one-particle irreducible (1PI)
subgraphs.  The renormalisation of the toy quantum field theory
consists in recursively replacing all 1PI one-point subfunctions
$f(z)$ by its Taylor subtraction $f(z)-f(0)$. This does more than
necessary, but permits the global (i.e.\ non-perturbative)
normalisation rule $\tilde{G}_{\Gamma}(0)=0$ for any graph $\Gamma$
with a single white vertex of valence $1$.  Omitting the superscript
$\Lambda$ on $\tilde{G}$ means recursive renormalisation plus 
limit $\Lambda\to \infty$.  We note
\begin{align}
\tilde{G}_{\Gamma_1}(x^1_1)  =\frac{(-\tilde{\lambda})}{2x^1_1+1}
\int_0^\infty dy_1 \Big(\frac{1}{x^1_1+y_1+1}-\frac{1}{y_1+1}\Big)
= \tilde{\lambda} \frac{\log(x^1_1+1)}{2x^1_1+1} .
\end{align}

Consider the following challenge: Fix $B$ white vertices
of valences $N_1,\dots,N_B$, take an arbitrary number (there is a
lower bound) of black vertices, and connect them in all possible ways
to planar graphs. Assign the weights, perform the renormalisation,
evaluate the face integrals (for $\Lambda\to \infty$) and sum
everything up. What does this give?  

One meets here a main difficulty 
of quantum field theory: there are too many graphs.
The number of connected planar graphs with $n$ black vertices can be
estimated by the number $n^{n-2}$ of ordered trees with $n$
vertices. With the typical tools of quantum field theory, see e.g.\ 
\cite{Rivasseau:1991ub}, one can prove 
uniform bounds of the type $|\tilde{G}_\Gamma|
\leq C_1 \cdot |\tilde{\lambda}|^n C_2^n$. This allows to give a
meaning to $
\tilde{G}(x^1_1,\dots,x^1_{N_1}|\dots|x^B_1,\dots,x^B_{N_B})
=\sum_\Gamma \tilde{G}_\Gamma
(x^1_1,\dots,x^1_{N_1}|\dots|x^B_1,\dots,x^B_{N_B})$ as a Borel
resummation, where $\lambda$ belongs to a sufficiently small disk
tangent to the imaginary axis. Absolute convergence is impossible for
any $\tilde{\lambda}\neq 0$. We should remark that more
complicated QFT models have an additional renormalon problem which
excludes even Borel summability. In such case one has to employ 
the constructive renormalisation machinery 
 \cite{Rivasseau:1991ub} with its infinitely many (but mutually 
 related) effective coupling constants.

We hope that the reader, with these remarks in mind, will appreciate that 
we will provide \emph{exact formulae} for any
$
\tilde{G}(x^1_1,\dots,x^1_{N_1}|\dots|x^B_1,\dots,x^B_{N_B})$. 
Remarkably, these functions are \emph{analytic} in $\tilde{\lambda}^2$!
For convenience we refer to the simplest cases:
$\tilde{G}(x^1_1)$ will be given in (\ref{Gx-final}),
$\tilde{G}(x^1_1,x^1_2)$ implicitly in (\ref{Gxy}) and 
$\tilde{G}(x^1_1|x^2_1)$ implicitly in (\ref{G1+1}). One has to insert 
$X^\beta_i=(2x^\beta_i+1)^2$ and the formulae for $W(X)$ and 
$c(\tilde{\lambda})$ given in Proposition \ref{prop_MS}.
The order-$n$ Taylor term reproduces the sum of all graphs 
with $n$ black vertices and $B$ white vertices of valences $N_1,\dots,N_B$.
The reader is invited to convince herself/himself that these 
formulae (restricted to the relevant order in $\tilde{\lambda}$) 
and the graphical rules agree on the the following examples:
\begin{align}
&\tilde{G}_{(3)}(x^1_1) = 
\parbox{22mm}{~\begin{picture}(22,8)
\put(0.5,4){\line(1,0){4.5}}
\put(8.5,4){\circle{7}}
\put(11.5,4){\line(1,0){3}}
\put(18.5,4){\circle{7}}
\put(7,4){\mbox{\small$y_2$}}
\put(17,4){\mbox{\small$y_1$}}
\put(0,6){\mbox{\small$x^1_1$}}
\put(4,3){\textbullet}
\put(11,3){\textbullet}
\put(14,3){\textbullet}
\put(-1,3){\mbox{$\circ$}}
\end{picture}}
{}~+~
\parbox{25mm}{\begin{picture}(25,16)
\put(0.5,8){\line(1,0){5.5}}
\put(15,8){\oval(18,16)}
\put(16,8){\circle{8}}
\put(24,8){\line(-1,0){4}}
\put(14,8){\mbox{\small$y_1$}}
\put(9,12){\mbox{\small$y_2$}}
\put(1,10){\mbox{\small$x^1_1$}}
\put(5,7){\textbullet}
\put(23,7){\textbullet}
\put(19.4,7){\textbullet}
\put(-1,7){\mbox{$\circ$}}
\end{picture}}
{}~+~
\parbox{20mm}{\begin{picture}(20,16)
\put(0.5,8){\line(1,0){7.5}}
\put(8,8){\line(2,1){4}}
\put(8,8){\line(2,-1){4}}
\put(14.5,12.5){\circle{7}}
\put(14.5,3.5){\circle{7}}
\put(13,3){\mbox{\small$y_1$}}
\put(13,12){\mbox{\small$y_2$}}
\put(1,10){\mbox{\small$x^1_1$}}
\put(7,7){\textbullet}
\put(11.2,9){\textbullet}
\put(11.2,5){\textbullet}
\put(-1,7){\mbox{$\circ$}}
\end{picture}}
{}~+~
\parbox{25mm}{\begin{picture}(25,16)
\put(0.5,8){\line(1,0){5.5}}
\put(15,8){\oval(18,16)}
\put(15,0){\line(0,1){16}}
\put(18,8){\mbox{\small$y_1$}}
\put(9,8){\mbox{\small$y_2$}}
\put(1,10){\mbox{\small$x^1_1$}}
\put(5,7){\textbullet}
\put(14,-1){\textbullet}
\put(14,15){\textbullet}
\put(-1,7){\mbox{$\circ$}}
\end{picture}}
\nonumber
\\
&=\tilde{\lambda}^3\Big(
\frac{(\log 2)^2}{2x^1_1+1}
-\frac{(\log 2)^2}{(2x^1_1+1)^3}\Big),
\label{Feyn-G1}
\\[2ex]
&\tilde{G}_{(2)}(x^1_1,x^1_2) = 
\parbox{24mm}{\begin{picture}(24,15)
\put(12,10){\circle{10}}
\qbezier(7,10)(0,10)(0,5)
\qbezier(17,10)(24,10)(24,5)
\qbezier(0,5)(0,0)(11.3,0)
\qbezier(24,5)(24,0)(12.6,0)
\put(6,9){\textbullet}
\put(16,9){\textbullet}
\put(11,-1){\mbox{$\circ$}}
\put(2.5,5){\mbox{\small$x_1^1$}}
\put(1.5,12){\mbox{\small$x_2^1$}}
\put(10,10){\mbox{\small$y_1$}}
\end{picture}}
{}~+~ 
\parbox{24mm}{\begin{picture}(24,15)
\put(12,12){\circle{6}}
\qbezier(12,6)(0,6)(0,3)
\qbezier(12,6)(24,6)(24,3)
\qbezier(0,3)(0,0)(11.3,0)
\qbezier(24,3)(24,0)(12.6,0)
\put(12,6){\line(0,1){3}}
\put(11,5){\textbullet}
\put(11,8){\textbullet}
\put(11,-1){\mbox{$\circ$}}
\put(4,2){\mbox{\small$x_1^1$}}
\put(4,8){\mbox{\small$x_2^1$}}
\put(10,11.5){\mbox{\small$y_1$}}
\end{picture}}
{}~+\quad 
\parbox{24mm}{\begin{picture}(24,15)
\put(12,9){\circle{6}}
\qbezier(12,15)(0,15)(0,7.5)
\qbezier(12,15)(24,15)(24,7.5)
\qbezier(0,7.5)(0,0)(11.3,0)
\qbezier(24,7.5)(24,0)(12.6,0)
\put(12,15){\line(0,-1){3}}
\put(11,14){\textbullet}
\put(11,11){\textbullet}
\put(11,-1){\mbox{$\circ$}}
\put(-3,1){\mbox{\small$x_2^1$}}
\put(4,4){\mbox{\small$x_1^1$}}
\put(10,8.5){\mbox{\small$y_1$}}
\end{picture}}
\nonumber
\\[1ex]
&= 
\frac{\tilde{\lambda}^2}{(x^1_1+x^1_2+1)^2} 
\Big(
\frac{\log(x^1_1+1)-\log(x^1_2+1)}{x^1_1-x^1_2}
-\frac{\log(x^1_1+1)}{2x^1_1+1}
-\frac{\log(x^1_2+1)}{2x^1_2+1}\Big).
\end{align}

In fact we solve a more general case with weight functions
$\frac{1}{e(z_1)+e(z_2)+1}$ for the edges, where $e:\mathbb{R}_+\to
\mathbb{R}_+$ is a differentiable function of positive
derivative. Equivalently, one can keep the old face variables $y_i$
but assign a weight $\tilde{\rho}(y_i)= \frac{1}{e'(e^{-1}(y_i))}$ to
the faces. The asymptotic behaviour of $\tilde{\rho}(y)\sim
y^{\frac{D}{2}-1}$ for $y\to \infty$ encodes a dimensionality $D$,
where actually only the even integer $2[\frac{D}{2}]$ matters. This
paper treats $2[\frac{D}{2}]=2$. For $2[\frac{D}{2}]=0$ we have a
finite model where no renormalisation is necessary. In
\cite{Grosse:2016qmk} we extend this work to $2[\frac{D}{2}]=4$ (which
also has a finite number of divergences) and to the just
renormalisable case $2[\frac{D}{2}]=6$.


\section{The setup}

\label{sect2}

Consider the following action functional for 
Hermitean matrix-valued `fields' 
$\Phi =\Phi^*\in M_{\mathcal{N}}( {\mathbb C})$:
\begin{align}
\label{action-MM}
S&= V\, {\rm tr} ( E\Phi^2 +\kappa \Phi 
+ \frac{\lambda}{3} \Phi^3) ,
\end{align}
or explicitly (in symmetrised form)
\begin{align}
S&= V \Big(\sum_{n,m=0}^{\mathcal{N}}
\frac{1}{2} \Phi_{nm} \Phi_{mn} H_{nm} +\kappa \sum_{m=0}^{\mathcal{N}}  \Phi_{mm} +
\frac{\lambda}{3} \sum_{k,l,m=0}^{\mathcal{N}}  \Phi_{kl} \Phi_{lm} \Phi_{mk} \Big),
\nonumber
\\
H_{mn}&:= E_m+E_n.
\label{Hmn}
\end{align}
Here $V$ is a constant discussed later, 
$\lambda$ is the coupling
constant (real or complex), and $\kappa$ will be needed for renormalising
the 1-point function. The self-adjoint positive matrix 
$E=(E_m \delta_{mn})$ plays a crucial r\^ole.
We assume that the eigenvalues $E_m$ are a discretisation of 
a monotonously increasing differentiable function $e$ with $e(0)=0$,
\begin{align}
E_m=\mu^2\Big(\frac{1}{2}+e\Big(\frac{m}{\mu^2 V}\Big)\Big),
\label{E}
\end{align}
thus identifying $2E_0=\mu^2$ with a squared mass. The resulting 
covariance functions $\frac{1}{H_{mn}}=
\frac{1}{\mu^2( e(\frac{m}{\mu^2 V})+
e(\frac{n}{\mu^2 V})+1)}$ are nothing else than the (discretised) edge 
weights considered in section~\ref{prelude}. In particular, the 
discussion on the dimensionality encoded in $e$ 
(i.e.\ in the spectrum of $E$) applies. 

Comparison with (\ref{action-MS}) suggests that $V$ is proportional to
the size $\mathcal{N}$ of the matrices. This is precisely what we will
do. The only reason to keep them distinct is the fact that, as recalled 
in section~\ref{sec:NCG}, the action 
(\ref{action-MM}) naturally arises in
noncommutative field theory. There, $V$ is related to 
the deformation parameter, so that the limit $\mathcal{N}\sim V\to \infty$ 
defines the strong-deformation regime.

The partition function with an external field $J$, which is
also a self-adjoint matrix, is formally defined by
\begin{align}
\mathcal{Z}[J] &:= \int {\cal D}\Phi\; \exp \big( -S
+V\, {\rm tr} (J \Phi) \big)
\label{z_1}
\\
&= K \exp \Big( -\frac{\lambda}{3V^2} \sum_{m,n,k=0}^{\mathcal{N}} 
\frac{\partial^3}{\partial J_{mn}\partial J_{nk}\partial J_{km}}
\Big) \Big)
\mathcal{Z}_{free}[J],\nonumber
\\
\mathcal{Z}_{free}[J] &:= \exp \Big( \sum_{m,n=0}^{\mathcal{N}} \frac{V}{2}
(J_{nm}-\kappa \delta_{nm}) H^{-1}_{nm}
(J_{mn}-\kappa \delta_{nm}) \Big),
\label{z_free}
\end{align}
where $
K= \int {\cal D}\Phi \;\exp \big(-\frac{V}{2}  \sum_{m,n=0}^{\mathcal{N}}
\Phi_{mn} H_{mn} \Phi_{nm} \big)$.

A perturbative expansion of $\log \mathcal{Z}[J]$ gives 
exactly the graphical setup described in section~\ref{prelude} --
up to discretisation and temporary admission of non-planar graphs. 
The matrix indices correspond to face variables, edges between faces $m,n$ 
have weight $\frac{1}{H_{mn}}$, and the $\Phi^3$ vertices are the black 
ones with weight $(-\lambda)$. Identifying the white vertices is a 
little tricky. It turns out that the source matrices $J$ partition 
into cycles $\mathbb{J}_{p_1\dots
  p_{N_\beta}}:=\prod_{j=1}^{N_\beta} J_{p_jp_{j+1}}$, with $N_\beta+1\equiv 1$. 
Such a cycle of length $N_\beta$ is what we call a white vertex of 
valence $N_\beta$. Indeed, a `star' of covariances 
$\prod_{j=1}^{N_\beta} \frac{1}{H_{p_jp_{j+1}}}$ attaches to the 
source matrices, which graphically means that the white vertex is 
the common corner of the $N_\beta$ external faces labelled by 
$p_1,\dots,p_{N_\beta}$. 

With this identification we can 
represent $\log\mathcal{Z}$ as a sum over the number 
and the valences of the white vertices, i.e.\ the cycles of 
source matrices:
\begin{align}
\log\frac{ \mathcal{Z}[J]}{\mathcal{Z}[0]}
=:\sum_{B=1}^\infty \sum_{1\leq N_1 \leq \dots \leq
  N_B}^\infty
\sum_{p_1^1,\dots,p^B_{N_B} =0}^{\mathcal{N}} \!\!\!\!
V^{2-B}
&\frac{G_{|p_1^1\dots p_{N_1}^1|\dots|p_1^B\dots p^B_{N_B}|}
}{S_{(N_1,\dots ,N_B)}}
\prod_{\beta=1}^B \frac{\mathbb{J}_{p_1^\beta\dots
    p^\beta_{N_\beta}}}{N_\beta},
\label{logZ}
\end{align}
where the symmetry factor $S_{(N_1,\dots ,N_B)}$ is chosen as follows:
If we regroup identical valence numbers $N_\beta$ as
$(N_1,\dots,N_B)=(\underbrace{N'_1,\dots,N'_1}_{\nu_1},\dots,
\underbrace{N'_s,\dots,N'_s}_{\nu_s})$, then
$S_{(N_1,\dots ,N_B)}=\prod_{i=1}^{s} \nu_i!$.
The expansion coefficients
$G_{|p_1^1\dots p_{N_1}^1|\dots|p_1^B\dots p^B_{N_B}|}$ are called
$(N_1{+}\dots{+}N_B)$-point function. In principle they further expand into 
graphs $\Gamma$ with all possible numbers of black vertices and 
their connections. As pointed out in section~\ref{prelude}, the 
resummation is a problematic issue. We therefore keep the 
$(N_1{+}\dots{+}N_B)$-point functions intact and never expand into graphs. 
We will prove in this paper
(similarly to \cite{Grosse:2012uv}) that these functions have a well-defined
large-$(\mathcal{N},V)$ limit precisely for the given 
a scaling factor $V^{2-B}$
in $\log \mathcal{Z}[J]$.
For later purpose we note the first terms of the resulting expansion
of the partition function itself:
\begin{align}
\frac{ \mathcal{Z}[J]}{\mathcal{Z}[0]}
&= 1+ V \sum_m G_{|m|} J_{mm}
\label{explogZ}
\\
&+ \frac{V}{2} \sum_{m,n}
G_{|mn|} J_{mn}J_{nm} + \sum_{m,n}
\Big(\frac{1}{2}G_{|m|n|} + \frac{V^2}{2} G_{|m|}G_{|n|}\Big) J_{mm}J_{nn}
\nonumber
\\
&+
\frac{V}{3} \sum_{m,n,k}
G_{|mnk|} J_{mn}J_{nk}J_{km}
+\sum_{m,n,k}
\Big(\frac{1}{2} G_{|mn|k|}
+\frac{V^2}{2} G_{|mn|} G_{|k|}\Big)J_{mn}J_{nm}J_{kk}
\nonumber
\\
&\qquad + \sum_{m,n,k}
\Big(\frac{1}{6V}G_{|m|n|k|} +
\frac{V}{2} G_{|m|n|} G_{|k|} +
 \frac{V^3}{6} G_{|m|}G_{|n|}G_{|k|}\Big) J_{mm}J_{nn} J_{kk}
+
\dots.
\nonumber
\end{align}
All sums run from $0$ to a cut-off $\mathcal{N}$.

We repeat the remark pointed out in \cite{Grosse:2012uv} that these correlation
functions have common source factors on the diagonal, e.g.\
$(V^1 G_{|aa|}+G_{|a|a|})J_{aa}J_{aa}$. The functions
$G_{|aa|}$ and $G_{|a|a|}$ are clearly distinguished by their topology 
(number and valence of white vertices) 
and most conveniently identified by continuation of
$G_{|ab|}$ and $G_{|a|b|}$ to the diagonal.

Finally, we introduce our main tool: the Ward-Takahashi identities.
As proved in \cite{Disertori:2006nq, Grosse:2012uv}, the invariance of
the partition function (\ref{z_1}) under inner
automorphisms $\Phi\mapsto U^*\Phi U$ boils down to the WT-identities
\begin{align} 
\label{WT}
\sum_m \frac{\partial}{\partial J_{am} } \frac{\partial}{\partial J_{mb} } 
\mathcal{Z}[J] = \mathrm{W}_a \delta_{ab}+
\sum_m \frac{V}{E_a-E_b} \left(J_{ma}\frac{\partial}{\partial J_{mb} }
-  J_{bm}\frac{\partial}{\partial J_{am} } \right) \mathcal{Z}[J],
\end{align}
where the precise form of $\mathrm{W}_a$ (which we shall not need)
is given in \cite[Thm 2.3]{Grosse:2012uv}. These identities are 
exactly the counterpart of the Virasoro constraints in the traditional 
approach to matrix models \cite{Makeenko:1991ec}.

\section{Schwinger-Dyson equations and their 
solution for $B=1$}

\subsection{1- and 2-point functions}

We now derive a formula for the connected 1-point function $G_{|a|}$ by
inserting (\ref{z_1}), (\ref{z_free}) into the corresponding term of
(\ref{logZ}):
\begin{align}
G_{|a|}&=  \frac{\partial \log \mathcal{Z}[J]}{V\,\partial J_{aa}}
\Big|_{J=0}
= \frac{K}{\mathcal{Z}[0]}
\exp \Big( {-}\frac{\lambda}{3V^2} \!\sum_{m,n,k}\!
\frac{\partial^3}{\partial J_{mn}\partial J_{nk}\partial J_{km}}
\Big) 
\Big( (J_{aa}{-}\kappa) H_{aa}^{-1}\mathcal{Z}_{free}[J]\Big)\Big|_{J=0}
\nonumber
\\
&=H_{aa}^{-1}
\Big(-\kappa-\frac{\lambda}{V^2\mathcal{Z}[0]} \sum_{m=0}^{\mathcal{N}}
\frac{\partial}{\partial J_{am}}
\frac{\partial}{\partial J_{ma}} \mathcal{Z}[J]
\Big)\Big|_{J=0}
\nonumber
\\
&=H_{aa}^{-1}
\Big(-\kappa-\lambda G_{|a|}^2-\frac{\lambda}{V} \sum_{m=0}^{\mathcal{N}}
G_{|am|}-\frac{\lambda}{V^2} G_{|a|a|}
\Big)\;.
\label{1point_ward}
\end{align}
The last line follows from a two-fold differentiation of
(\ref{explogZ}). Of course the sum $\sum_{m=0}^{\mathcal{N}} G_{|am|}$ includes
$m=a$!

The connected 2-point function $G_{|ab|}$ is computed for $a\neq b$ as follows:
\begin{align}
G_{|ab|}&= \frac{\partial^2 \log \mathcal{Z}[J]}{V\,\partial J_{ab}
\partial J_{ba}}
\Big|_{J=0}
= \frac{K}{\mathcal{Z}[0]}
\exp \Big( {-}\frac{\lambda}{3V^2} \!\sum_{m,n,k}\!
\frac{\partial^3}{\partial J_{mn}\partial J_{nk}\partial J_{km}}
\Big) 
\frac{\partial}{\partial J_{ab}}
\Big( J_{ab} H_{ab}^{-1}\mathcal{Z}_{free}\Big)\Big|_{J=0}
\nonumber
\\
&=H_{ab}^{-1}-\frac{\lambda}{V^2}
\frac{H_{ab}^{-1}}{\mathcal{Z}[0]}\sum_{m=0}^{\mathcal{N}}
\frac{\partial}{\partial J_{ab}}
\frac{\partial}{\partial J_{bm}}
\frac{\partial}{\partial J_{ma}}
\mathcal{Z}[J]
\Big|_{J=0}
\nonumber
\\
&=H_{ab}^{-1}-\frac{\lambda}{V(E_b-E_a)}
\frac{H_{ab}^{-1}}{\mathcal{Z}[0]}\sum_{m=0}^{\mathcal{N}}
\frac{\partial}{\partial J_{ab}}
\Big( J_{mb}\frac{\partial \mathcal{Z}}{\partial J_{ma}}
-J_{am}\frac{\partial \mathcal{Z}}{\partial J_{bm}}
\Big)
\Big|_{J=0}
\nonumber
\\
&=H_{ab}^{-1}-\frac{\lambda}{V(E_b-E_a)}
\frac{H_{ab}^{-1}}{\mathcal{Z}[0]}
\Big( \frac{\partial \mathcal{Z}}{\partial J_{aa}}
-\frac{\partial \mathcal{Z}}{\partial J_{bb}}
\Big)
\Big|_{J=0}
\nonumber
\\
&=H_{ab}^{-1}\Big(1+\lambda
\frac{(G_{|a|}-G_{|b|})}{E_a-E_b}\Big)\;.
\label{2pointWard}
\end{align}
In the step from the 2nd to 3rd line we have used the Ward-Takahashi
identity (\ref{WT}).  The equation extends by continuity to $a=b$,
i.e.\ $G_{|aa|}=H_{aa}^{-1}+\lambda H_{aa}^{-1} \lim_{b\to
  a}\frac{(G_{|a|}-G_{|b|})}{E_a-E_b}$. The limit is well-defined in
perturbation theory where $G_{|a|}$ is, before performing the loop
sum, a rational function of the $E_n$ so that a factor $E_a-E_b$ can
be taken out of $G_{|a|}-G_{|b|}$. We shall later see that our
large-($\mathcal{N},V$) limit automatically gives a meaning also to
$\lim_{b\to a}$.

The na\"{\i}ve limit $\mathcal{N}\to \infty$ in (\ref{1point_ward})
will diverge unless $\kappa=\kappa(\mathcal{N})$ is carefully adjusted.
We chose a renormalisation condition
\begin{align}\label{A_cond}
G_0= 0 \qquad \Leftrightarrow\qquad
-\kappa(\mathcal{N})=  \frac{\lambda}{V} \sum_{m=0}^{\mathcal{N}} G_{0m}
+\frac{\lambda}{V^2} G_{|0|0|},
\end{align}
where a well-defined limit $G_{|00|}$ is assumed.
Substituting (\ref{2pointWard}) and (\ref{A_cond}) into (\ref{1point_ward}),
the Schwinger-Dyson equations are obtained as
\begin{align}
G_{|a|}
&=H_{aa}^{-1}
\Big\{-\lambda G_{|a|}^2-\frac{\lambda}{V} \sum_{m= 0}^{\mathcal{N}}
(H_{am}^{-1}-H_{0m}^{-1})
-\frac{\lambda}{V^2} (G_{|a|a|}-
G_{|0|0|})
\nonumber
\\
& -\frac{\lambda^2}{V}\sum_{m=0}^{\mathcal{N}}
\Big(
H_{am}^{-1}\,
\frac{(G_{|a|}-G_{|m|})}{E_a-E_m}
-H_{0m}^{-1}\,\frac{G_{|m|}}{E_m-E_0}
\Big)
\Big\}\;.\label{SD}
\end{align}
This equation suggests to introduce
\begin{align}
\frac{W_{|a|}}{2\lambda}:=G_{|a|}+\frac{H_{aa}}{2\lambda}
=G_{|a|}+\frac{E_{a}}{\lambda}.
\label{def:Wa}
\end{align}
Taking $H_{am}(E_a-E_m)=E_a^2-E_m^2$ into account, we arrive at
\begin{align}
W_{|a|}^2 &= 4 E_{a}^2
-
\frac{4\lambda^2}{V^2} (G_{|a|a|}-G_{|0|0|})
-\frac{2\lambda^2}{V}\sum_{m=0}^{\mathcal{N}}
\Big(
\frac{(W_{|a|}-W_{|m|})}{E_a^2-E_m^2}
-\frac{W_{|m|}-W_{|0|}}{E_m^2-E_0^2}
\Big),
\label{SD-W}
\\
G_{|ab|}&=\frac{1}{2} \frac{W_{|a|}-W_{|b|}}{E_a^2-E_b^2}\;.
\label{SD-G2}
\end{align}

\subsection{Large-($\mathcal{N},V$) limit
and integral equations}

Let us take the limit $\mathcal{N},V \rightarrow \infty$ subject to
fixed ratio
$\frac{\mathcal{N}}{V}=\mu^2 \Lambda^2$, in which the sum converges
to a Riemann integral
\begin{align}
\lim \frac{1}{V} \sum_{m=0}^{\mathcal{N}} f(m/V)
= \mu^2\Lambda^2 \int_0^1 du\;
f\big(\mu^2 \Lambda^2 u\big)
= \mu^2 \int_0^{\Lambda^2} dx\;f(\mu^2 x).
\end{align}
Expressing discrete matrix elements as $a=:V\mu^2 x$, the eigenvalues of 
$E$ take the form $E_a=\mu^2(e(x)+\frac{1}{2})$, see
(\ref{E}).  
We introduce the dimensionless\footnote{From 
the partition function (\ref{z_1}) and its expansion (\ref{logZ}) 
one reads off the following mass dimensions:
\begin{align*}
[\Phi]&=\mu^0, & [J]&=\mu^2,& [\kappa]&=\mu^2,&
[\lambda]&=\mu^2, &
[G_{|p_1^1\dots p_{N_1}^1|\dots|p_1^B\dots p^B_{N_B}|}]&=\mu^{2(2-B-N)}.
\end{align*}
\label{fn-1}} coupling
constant
$\tilde{\lambda}:=\frac{\lambda}{\mu^2}$ and define
\begin{align}
\mu^2 \tilde{W}(x):=\lim_{\mathcal{N},V\to \infty} W_{|V\mu^2x|},\qquad
\tilde{G}(x):=\lim_{\mathcal{N},V\to \infty} G_{|V\mu^2x|},
\end{align}
related by
$\frac{\tilde{W}(x)}{2\tilde{\lambda}}
=\tilde{G}(x)+\frac{e(x)+\frac{1}{2}}{
\tilde{\lambda}}$.
Now the limit of (\ref{SD-W}) becomes 
\begin{align}
(\tilde{W}(x))^2 &= (2e(x)+1)^2 
\label{Int_eq_field0}
\\
&-8\tilde{\lambda}^2 \int_0^{\Lambda^2} dy
\Big(
\frac{\tilde{W}(x)-\tilde{W}(y)}{(2e(x)+1)^2-(2e(y)+1)^2}-
\frac{\tilde{W}(y)-\tilde{W}(0)}{(2e(y)+1)^2-1}
\Big). \nonumber
\end{align}
We assume here $G_{|V\mu^2x|V\mu^2x|}=\mathcal{O}(V^0)$ so that this
term does not contribute to the limit; this will be checked later. It can be
seen graphically that this term generates higher genus
contributions, which are scaled away in the large-$\mathcal{N}$ limit.
A final transformation
\begin{align}
X:= (2e(x)+1)^2 ,\quad
W(X)=\tilde{W}(x(X)),\quad
G(X)=\tilde{G}(x(X)),\quad
\end{align}
and similarly for other capital letters $Y(y),T(y)$ and functions
$G(X,Y)=\tilde{G}(x(X),y(Y))$ etc.,
simplifies (\ref{Int_eq_field0}) to
\begin{align}
W^2(X)
+\int_1^{\Xi} dY \rho(Y) \,
\frac{W(X)-W(Y)}{X-Y}
&= X
+\int_1^{\Xi} dY \rho(Y) \,
\frac{W(1)-W(Y)}{1-Y},
\label{Int_eq_field1}
\\
\rho(Y):=\frac{2 \tilde{\lambda}^2}{\sqrt{Y}
\cdot e'(e^{-1}(\frac{\sqrt{Y}-1}{2}))}
,\qquad \Xi&:=(1+2e(\Lambda^2))^2.
\nonumber
\end{align}
Equation (\ref{Int_eq_field1}) closely resembles a problem
solved in the appendix of Makeenko-Semenoff \cite{Makeenko:1991ec}.
We take their solution (obtained by solving a Riemann-Hilbert problem) as
an ansatz\footnote{Our ansatz is more 
general than necessary in 2 dimensions. We need with $Z,\nu$ in 4 and
6 dimensions \cite{Grosse:2016qmk} and treat already here the 
general case in order to avoid
duplication in \cite{Grosse:2016qmk}.}
\begin{align}
W(X) := \frac{\sqrt{X+c}}{\sqrt{Z}}-\nu
+ \frac{1}{2} \int_1^\Xi dT \frac{\rho(T)}{(\sqrt{X+c} + \sqrt{T+c})\sqrt{T+c}}
\label{MS-solution}
\end{align}
with constants $Z,\nu,c$ 
determined by normalisation 
and consistency conditions
(thus becoming functions of $\lambda,\Xi$).
Straightforward
computation using $\frac{\sqrt{X+c}-\sqrt{Y+c}}{X-Y}=\frac{1}{
\sqrt{X+c}+\sqrt{Y+c}}$ yields
\begin{align}
&\int_1^{\Xi} dY \rho(Y) \,
\frac{W(X)-W(Y)}{X-Y}
\nonumber
\\*
&=\frac{\sqrt{X+c}}{\sqrt{Z}} 
\int_1^{\Xi} \frac{dY \rho(Y)}{(\sqrt{X+c}+\sqrt{Y+c})
\sqrt{X+c}}
\nonumber
\\*
& - \frac{1}{2}\int_1^{\Xi} \!\! \frac{dT \rho(T)}{\sqrt{T+c}
(\sqrt{X+c} + \sqrt{T+c})}
\int_1^{\Xi} \!\! \frac{dY \rho(Y)}{
(\sqrt{X+c} + \sqrt{Y+c})
(\sqrt{Y+c} + \sqrt{T+c})}.
\nonumber
\end{align}
In the last line we can symmetrise
$\frac{1}{\sqrt{T+c}}\mapsto \frac{1}{2}\big(
\frac{1}{\sqrt{T+c}}+\frac{1}{\sqrt{Y+c}}\big)$ so that the double
integral factors. Converting the second line by rational fraction
expansion, we arrive at
\begin{align}
\int_1^{\Xi} dY \rho(Y) \,
\frac{W(X)-W(Y)}{X-Y}
&=\frac{1}{\sqrt{Z}} \int_1^{\Xi} \frac{dY \rho(Y)}{\sqrt{Y+c}}
-\frac{1}{\sqrt{Z}} 
\int_1^{\Xi} \frac{dY \rho(Y)\;\sqrt{X+c}}{\sqrt{Y+c}
(\sqrt{X+c}+\sqrt{Y+c})}
\nonumber
\\
& - \frac{1}{4}
\Big(\int_1^{\Xi} \frac{dT \rho(T)}{\sqrt{T+c}
(\sqrt{X+c} + \sqrt{T+c})}\Big)^2
\nonumber
\\
&= -(W(X)+\nu)^2+\frac{X+c}{Z}
+\frac{1}{\sqrt{Z}}\int_1^{\Xi} \frac{dY \rho(Y)}{\sqrt{Y+c}}.
\label{MS-c}
\end{align}
This equation takes the form of 
(\ref{Int_eq_field1}) if we choose $\nu=0$, $Z=1$ 
and adjust\footnote{In \cite{Makeenko:1991ec}, $c$ is determined by
$c+\int_1^{\Xi} \frac{dY \rho(Y)}{\sqrt{Y+c}}=0$
from (\ref{MS-c}).  We are
  particularly interested in linearly spaced eigenvalues $e(x)=x$ where
  $\rho(Y)\propto \frac{1}{\sqrt{Y}}$, see (\ref{Int_eq_field1}).
  Then $\int_1^{\Xi} \frac{dY \rho(Y)}{\sqrt{Y+c}}$ diverges 
for $\Xi\to \infty$. This
  makes it necessary to normalise $W(1)=1$.} $c$ by 
\begin{align}
W(1)=1= \sqrt{1+c}
+ \frac{1}{2} \int_1^\Xi dT \frac{\rho(T)}{
(\sqrt{1+c} + \sqrt{T+c})\sqrt{T+c}}.
\label{c-lambda}
\end{align}
For $\rho(T)\sim T^{-\alpha}$ and $\alpha>0$, realised in our case, the
formula (\ref{MS-solution}) and the resulting condition on $c$
have a limit
$\Xi\to \infty$. 

Inserting $\rho(T)$ from (\ref{Int_eq_field1}) into (\ref{c-lambda})
we have an explicit expression of $\tilde{\lambda}^2$ in terms of $c$,
either with $c>-1$ real or
$c \in \mathbb{C}\setminus {]{-}\infty,-1]}$. Obviously, $c=0$ 
corresponds to $\tilde{\lambda}=0$. The 
implicit function theorem then provides a unique diffeomorphism 
$\tilde{\lambda}^2\mapsto c(\tilde{\lambda})$ on a neighbourhood 
of $0 \in \mathbb{R}$ or $0\in \mathbb{C}$. Since we will 
be able to express all correlation functions in terms of elementary
functions of $c(\tilde{\lambda},e)$ and $\rho(\tilde{\lambda},e)$, 
this proves analyticity of all correlation functions in these neighbourhoods.

\subsection{Linearly spaced eigenvalues of $E$}

The noncommutative field theory model of section \ref{sec:NCG}
translates to linearly spaced eigenvalues with $e(x)=x$. This yields
$X=(2x+1)^2$ and $\rho(Y)=\frac{2\tilde{\lambda}^2}{\sqrt{Y}}$.
The integral can be evaluated for $\Xi\to \infty$:
\begin{prop} \label{prop_MS}
Equation  (\ref{Int_eq_field1}) is for eigenvalue functions $e(x)=x$ 
and $Z=1,\nu=0$ solved by:
\begin{align}
W(X)&= \sqrt{X+c}
+ \frac{2\tilde{\lambda}^2}{\sqrt{X}}
\log \Big(\frac{(\sqrt{X+c}+\sqrt{X})(\sqrt{X}+1)}{
\sqrt{X}\sqrt{1+c}+\sqrt{X+c}}\Big)\;,
\\
1&=\sqrt{c+1} +2\tilde{\lambda}^2
\log\Big(1+\frac{1}{\sqrt{c+1}}\Big).
\label{W1=1}
\end{align}
\end{prop}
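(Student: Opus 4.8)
The plan is to derive the explicit formula in Proposition \ref{prop_MS} directly from the general Makeenko--Semenoff ansatz (\ref{MS-solution}) by specialising to the weight function $\rho(T)=\frac{2\tilde{\lambda}^2}{\sqrt{T}}$ that corresponds to $e(x)=x$. Since the excerpt has already established that (\ref{MS-solution}) with $Z=1$, $\nu=0$ solves (\ref{Int_eq_field1}) provided $c$ satisfies the normalisation (\ref{c-lambda}), the entire content of the proposition reduces to two elementary integral evaluations: the $T$-integral appearing in $W(X)$, and the same integral specialised to $X=1$ giving (\ref{W1=1}).

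First I would substitute $\rho(T)=2\tilde{\lambda}^2 T^{-1/2}$ into the integral
\begin{align}
I(X):=\frac{1}{2}\int_1^\infty dT\,\frac{\rho(T)}{(\sqrt{X+c}+\sqrt{T+c})\sqrt{T+c}}
=\tilde{\lambda}^2\int_1^\infty \frac{dT}{\sqrt{T}\,\sqrt{T+c}\,(\sqrt{X+c}+\sqrt{T+c})}.
\nonumber
\end{align}
The natural move is the substitution $T=s^2$ (so $dT=2s\,ds$, $\sqrt{T}=s$), reducing the integrand to $\frac{2\,ds}{\sqrt{s^2+c}\,(\sqrt{X+c}+\sqrt{s^2+c})}$ on $s\in[1,\infty)$. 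A partial-fraction step in the variable $\sqrt{s^2+c}$, namely writing $\frac{1}{\sqrt{s^2+c}\,(\sqrt{X+c}+\sqrt{s^2+c})}$ after rationalising, or alternatively the further substitution $u=\sqrt{s^2+c}$, turns this into an integral of the form $\int \frac{du}{\sqrt{u^2-c}}\cdot(\text{rational in }u)$, which is a standard logarithm/arctanh. One then checks that the antiderivative evaluated between the limits collapses — after using $\sqrt{1+c}$ from the lower endpoint and the asymptotics at $T\to\infty$ (where the integrand decays like $T^{-3/2}$, so the upper limit contributes a finite constant that combines with the $\log$) — to exactly $\frac{2\tilde{\lambda}^2}{\sqrt{X}}\log\!\big(\frac{(\sqrt{X+c}+\sqrt{X})(\sqrt{X}+1)}{\sqrt{X}\sqrt{1+c}+\sqrt{X+c}}\big)$. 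The factor $\frac{1}{\sqrt{X}}$ rather than $\frac{1}{\sqrt{X+c}}$ is the tell-tale sign that the partial-fraction decomposition must be organised around the combination $\sqrt{X+c}-\sqrt{T+c}$ versus $\sqrt{X}\pm\sqrt{T}$; getting this bookkeeping right is where I expect to spend the most care.

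Having the closed form for $W(X)$, the normalisation condition (\ref{W1=1}) is immediate: set $X=1$ in $W(X)=1$. The square-root term gives $\sqrt{1+c}$, and the logarithm term becomes $2\tilde{\lambda}^2\log\!\big(\frac{(\sqrt{1+c}+1)(1+1)}{\sqrt{1+c}+\sqrt{1+c}}\big)=2\tilde{\lambda}^2\log\!\big(\frac{2(\sqrt{1+c}+1)}{2\sqrt{1+c}}\big)=2\tilde{\lambda}^2\log\!\big(1+\frac{1}{\sqrt{1+c}}\big)$, which is exactly (\ref{W1=1}). Equivalently one may obtain (\ref{W1=1}) directly from (\ref{c-lambda}) by performing the same $T=s^2$ integral with the extra symmetry that $X=1$ provides.

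The main obstacle, then, is purely the integral evaluation in the first step — verifying that the elementary antiderivative, once the correct rationalising substitution is chosen, produces precisely the stated logarithm with argument $\frac{(\sqrt{X+c}+\sqrt{X})(\sqrt{X}+1)}{\sqrt{X}\sqrt{1+c}+\sqrt{X+c}}$ and prefactor $\frac{1}{\sqrt{X}}$, and that the $\Xi\to\infty$ limit is legitimate (which the excerpt already guarantees since $\rho(T)\sim T^{-1/2}$ with exponent $\alpha=\tfrac12>0$, matching the condition stated after (\ref{c-lambda})). No new analytic input is needed: existence and uniqueness of $c(\tilde{\lambda})$, analyticity, and the validity of the ansatz are all inherited from the general discussion preceding the proposition. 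I would present the computation as a single lemma-free chain of substitutions, emphasising the partial-fraction identity $\frac{1}{\sqrt{s^2+c}\,(\sqrt{X+c}+\sqrt{s^2+c})}=\frac{1}{\sqrt{X}}\big(\frac{A}{\cdots}-\frac{B}{\cdots}\big)$ whose precise form I would pin down during the write-up.
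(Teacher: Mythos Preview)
Your approach is correct and matches the paper's: the proposition is presented there without detailed proof, only the remark that ``the integral can be evaluated for $\Xi\to\infty$,'' so your plan---specialise the established ansatz (\ref{MS-solution}) to $\rho(T)=2\tilde{\lambda}^2/\sqrt{T}$, compute the resulting elementary integral, then set $X=1$ for (\ref{W1=1})---is exactly what is intended. One concrete hint for the step you flag as uncertain: after $T=s^2$ the Euler substitution $v=s+\sqrt{s^2+c}$ turns the integral into $\int\frac{2\,dv}{(v+\sqrt{X+c})^2-X}$, which immediately produces the $\frac{1}{\sqrt{X}}$ prefactor and a logarithm whose argument is algebraically equivalent (via $X+c-X=c=1+c-1$) to the one stated.
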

We thus get for the renormalised 1-point function
\begin{align}
\tilde{G}(x)&=\frac{1}{2\tilde{\lambda}}\Big(W((2x+1)^2)-(2x+1)\Big)
\label{Gx-final}
\\*
&= \frac{\sqrt{(2x{+}1)^2+c}-(2x{+}1)}{2\tilde{\lambda}}
+\frac{\tilde{\lambda}}{2x{+}1}
\log \Big(\frac{(2x{+}2)(\sqrt{(2x{+}1)^2+c}+2x{+}1)}{
(2x{+}1)\sqrt{1{+}c}+\sqrt{(2x{+}1)^2+c}}\Big),\nonumber
\end{align}
again with $c$ being the inverse solution of (\ref{W1=1}).

A numerical investigation shows that (\ref{W1=1}) has a
solution\footnote{In general, the critical value corresponds to 
$\rho_0:=1-\frac{1}{2} \int_1^\infty \frac{dZ
  \rho(Z)}{\sqrt{Z+c}^3}=0$. This function $\rho_0$ plays a key r\^ole in
higher correlation functions.} for
$-\tilde{\lambda}_c\leq \tilde{\lambda} \leq \tilde{\lambda}_c$ and
$\tilde{\lambda}_c =0.490686\dots$ attained at $c_c=-0.873759\dots$.
By choosing $c>0$ it is possible to simulate purely imaginary
$\tilde{\lambda}$. A perturbative solution of (\ref{W1=1})
gives as first terms
\begin{align}
c=-4\tilde{\lambda}^2 \log 2
-4\tilde{\lambda}^4(\log 2 -(\log 2)^2)
-2\tilde{\lambda}^6(2\log 2 -(\log 2)^2)
+
\mathcal{O}(\tilde{\lambda}^8).
\end{align}
This leads to the following series expansion of the renormalised 1-point
function:
\begin{align}
\tilde{G}(x)&=\frac{\tilde{\lambda}}{2x+1} \log(x+1)+
\tilde{\lambda}^3\Big(\frac{(\log(2))^2}{2x+1}-
\frac{(\log(2))^2}{(2x+1)^3}\Big)
\nonumber
\\
&+\tilde{\lambda}^5\Big(\frac{(\log(2))^2}{2x+1}+
\frac{2(\log(2))^3-(\log(2))^2}{(2x+1)^3}
-\frac{2(\log(2))^3}{(2x+1)^5}
\Big)
+\mathcal{O}(\tilde{\lambda}^7)\;.
\label{G1-O5}
\end{align}
It matches perfectly the Feynman graph computation 
(\ref{Feyn-G1}) of section~\ref{prelude}.

\smallskip

The scaling limit $\tilde{G}(x,y)=\lim\limits_{\mathcal{N},V\to\infty} 
\mu^2 G_{|V\mu^2x,V\mu^2y|}$ of
(\ref{SD-G2}) for the 2-point function is
\begin{align}
G(X,Y) =\tilde{G}(x(X),y(Y)) &= 2\frac{W(X)-W(Y)}{X-Y} \;.
\label{Gxy}
\end{align}
We refrain from spelling out the insertion of (\ref{Gx-final}). There
is no problem going to the diagonal: $\tilde{G}(x,x)=2W'(X)$.

\subsection{$N$-points functions}

According to (\ref{logZ}) the connected $(N{>}2)$-point functions are
\begin{align}
G_{|a_1  a_2  \dots  a_N|}
&= \frac{1}{V} \frac{\partial}{\partial J_{a_N a_1}}
\frac{\partial}{\partial J_{a_1 a_2}} \cdots
\frac{\partial}{\partial J_{a_{N-1} a_N}}
\log \frac{\mathcal{Z}[J]}{\mathcal{Z}[0]}
\Big|_{J=0} .
\end{align}
For pairwise different indices we compute, similarly to (\ref{2pointWard}),
\begin{align}
G_{|a_1\dots a_N|}&=
\frac{K}{\mathcal{Z}[0]}
\frac{\partial}{\partial J_{a_2 a_3}} \cdots
\frac{\partial}{\partial J_{a_{N} a_1}}
\exp \Big( {-}\frac{\lambda}{3V^2} \!\sum_{m,n,k}\!
\frac{\partial^3}{\partial J_{mn}\partial J_{nk}\partial J_{km}}
\Big) 
\Big( J_{a_2a_1} H_{a_1a_2}^{-1}\mathcal{Z}_{free}\Big)\Big|_{J=0}
\nonumber
\\
&=-\frac{\lambda}{V^2}
\frac{H_{a_1a_2}^{-1}}{\mathcal{Z}[0]}\sum_{m=0}^{\mathcal{N}}
\frac{\partial}{\partial J_{a_2 a_3}} \cdots
\frac{\partial}{\partial J_{a_{N}a_1}}
\frac{\partial}{\partial J_{a_1m}}
\frac{\partial}{\partial J_{ma_2}}
\mathcal{Z}[J]
\Big|_{J=0}
\nonumber
\\
&=-\frac{\lambda}{V}
\frac{H_{a_1a_2}^{-1}}{\mathcal{Z}[0]}\sum_{m=0}^{\mathcal{N}}
\frac{\partial}{\partial J_{a_2 a_3}} \cdots
\frac{\partial}{\partial J_{a_{N}a_1}}
\frac{\big(J_{m a_1}
\frac{\partial \mathcal{Z}[J]  }{\partial J_{ma_2}}
-J_{a_2m}
\frac{\partial \mathcal{Z}[J]  }{\partial J_{a_1m}}
\big)}{(E_{a_1}-E_{a_2})}
\Big|_{J=0}
\nonumber
\\
&=\lambda H_{a_1a_2}^{-1}
\frac{G_{|a_1a_3\dots a_{N}|}-G_{|a_2\dots a_{N}|}}{(E_{a_1}-E_{a_2})}
=\lambda
\frac{G_{|a_1a_3\dots a_{N}|}-G_{|a_2\dots a_{N}|}}{(E_{a_1}^2-E_{a_2}^2)}.
\label{NptWT}
\end{align}
The first line is the result of the
$\frac{\partial}{\partial J_{a_1a_2}}$ differentiation, and
in the step from the 2nd to 3rd line we have used
the Ward-Takahashi identity (\ref{WT}) for pairwise different
indices.  Formula (\ref{NptWT}) together with (\ref{SD-G2})
expresses $N$-point functions
recursively by factors $\frac{1}{(E_{a_i}^2-E_{a_j}^2)}$ and
$W_{|a_k|}$. We can solve this recursion:
\begin{prop} \label{prop_Npt_1}
The connected ($N{\geq} 2$)-point function is given
for pairwise different indices by
\begin{align}
G_{|a_1 a_2 , \dots  a_N|}
= \frac{\lambda^{N-2}}{2}
\sum_{k=1}^{N} W_{|a_k|}  \prod_{l=1, l\neq k}^{N} P_{a_ka_l}\;,\qquad
P_{ab} :=  \frac{1}{E_a^2-E_b^2}.
\label{GN}
\end{align}
\end{prop}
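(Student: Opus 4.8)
The plan is to prove the closed formula \eqref{GN} by induction on $N$, using the recursion \eqref{NptWT} as the inductive step and \eqref{SD-G2} as the base case. For $N=2$ the claimed formula reads $G_{|a_1a_2|}=\frac{1}{2}\bigl(W_{|a_1|}P_{a_1a_2}+W_{|a_2|}P_{a_2a_1}\bigr)=\frac{1}{2}(W_{|a_1|}-W_{|a_2|})P_{a_1a_2}$, which is exactly \eqref{SD-G2}. So I would state the base case in one line and move to the induction.

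For the inductive step, assume \eqref{GN} holds for $(N-1)$-point functions and insert it into the right-hand side of \eqref{NptWT}. This gives
\begin{align}
G_{|a_1 a_2\dots a_N|}
&= \lambda P_{a_1a_2}\Bigl(G_{|a_1a_3\dots a_N|}-G_{|a_2a_3\dots a_N|}\Bigr)
\nonumber\\
&= \frac{\lambda^{N-2}}{2}\,P_{a_1a_2}\Biggl(
\sum_{k\in\{1,3,4,\dots,N\}} W_{|a_k|}\!\!\prod_{\substack{l\in\{1,3,\dots,N\}\\ l\neq k}}\!\! P_{a_ka_l}
-\sum_{k\in\{2,3,4,\dots,N\}} W_{|a_k|}\!\!\prod_{\substack{l\in\{2,3,\dots,N\}\\ l\neq k}}\!\! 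P_{a_ka_l}
\Biggr).\nonumber
\end{align}
Here the first sum omits index $2$ from the product range and the second omits index $1$. I then split each sum into the term $k=1$ (resp.\ $k=2$) and the common terms $k\in\{3,\dots,N\}$.

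The heart of the computation is a partial-fraction identity. For each $k\in\{3,\dots,N\}$ the two contributions combine into
\begin{align}
W_{|a_k|}\Bigl(\prod_{\substack{l\in\{1,3,\dots,N\}\\ l\neq k}} P_{a_ka_l}
- \prod_{\substack{l\in\{2,3,\dots,N\}\\ l\neq k}} P_{a_ka_l}\Bigr)
= W_{|a_k|}\Bigl(\prod_{\substack{l\in\{3,\dots,N\}\\ l\neq k}} P_{a_ka_l}\Bigr)\bigl(P_{a_ka_1}-P_{a_ka_2}\bigr),
\nonumber
\end{align}
and the key elementary identity $P_{a_ka_1}-P_{a_ka_2}=\frac{(E_{a_1}^2-E_{a_2}^2)^{-1}\cdot(\text{something})}{\cdots}$—more precisely $P_{ca}-P_{cb}=(E_b^2-E_a^2)P_{ca}P_{cb}=-P_{ab}^{-1}P_{ca}P_{cb}$—shows that $\lambda P_{a_1a_2}(P_{a_ka_1}-P_{a_ka_2})=\lambda P_{a_ka_1}P_{a_ka_2}$, so each such term becomes $\frac{\lambda^{N-2}}{2}W_{|a_k|}\prod_{l\neq k}P_{a_ka_l}$ with the product now ranging over the full set $\{1,2,\dots,N\}\setminus\{k\}$, exactly as required. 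For $k=1$ one gets $\frac{\lambda^{N-2}}{2}W_{|a_1|}P_{a_1a_2}\prod_{l\in\{3,\dots,N\}}P_{a_1a_l}=\frac{\lambda^{N-2}}{2}W_{|a_1|}\prod_{l\neq 1}P_{a_1a_l}$, and similarly for $k=2$ (the overall sign works out because $P_{a_1a_2}=-P_{a_2a_1}$ cancels the minus sign of the second sum). Assembling the $k=1$, $k=2$ and $k\in\{3,\dots,N\}$ pieces reproduces \eqref{GN} for $N$.

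The main obstacle is purely bookkeeping: keeping the product index sets straight (which element is excluded, and whether index $1$ or $2$ is present) while applying the partial-fraction move uniformly, and checking that the signs from $P_{a_1a_2}=-P_{a_2a_1}$ and from the subtraction in \eqref{NptWT} combine correctly for the $k=1$ and $k=2$ boundary terms. There is no analytic difficulty and no need for the large-$(\mathcal N,V)$ limit here—\eqref{GN} is an identity among the finite-$\mathcal N$ correlation functions, valid for pairwise distinct indices, and the induction closes immediately once the partial-fraction identity is in place.
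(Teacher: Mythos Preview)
Your proof is correct and follows essentially the same approach as the paper: induction on $N$ with base case \eqref{SD-G2}, the recursion \eqref{NptWT} for the inductive step, splitting into $k=1$, $k=2$, and $k\geq 3$, and the partial-fraction identity $P_{a_1a_2}(P_{a_ka_1}-P_{a_ka_2})=P_{a_ka_1}P_{a_ka_2}$. One small slip: your intermediate formula $P_{ca}-P_{cb}=(E_b^2-E_a^2)P_{ca}P_{cb}$ has the wrong sign (it should be $(E_a^2-E_b^2)$), but your final conclusion $P_{a_1a_2}(P_{a_ka_1}-P_{a_ka_2})=P_{a_ka_1}P_{a_ka_2}$ is correct.
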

\begin{proof}
The formula is proved by induction, starting with $N=2$ which is
formula (\ref{SD-G2}) when inserting $P_{a_1a_2}=-P_{a_2a_1}$. Assume
it holds for $N$. Then using (\ref{NptWT}) and $P_{a_1a_2}=-P_{a_2a_1}$
we have
\begin{align*}
G_{|a_1 \dots  a_{N+1}|}
&= \lambda P_{a_1a_2}
( G_{|a_1a_3 \dots a_{N+1}|} - G_{|a_2 \dots a_{N+1}|})
\\
&=\frac{\lambda^{N-1}}{2} P_{a_1 a_2}
\bigg(
\sum_{k=1, k\neq 2}^{N+1} W_{|a_k|} \prod_{l=1, l\notin\{2,k\}}^{N+1} P_{a_ka_l}
- \sum_{k=2}^{N+1} W_{|a_k|}
 \prod_{l=2, l\neq k}^{N+1} P_{a_ka_l}
\bigg)
\notag
\\
&= \frac{\lambda^{N-1}}{2}
\bigg( W_{|a_1|}\prod_{l=2}^{N+1} P_{a_1a_l} +
W_{|a_2|}
\prod_{l=1,l\neq 2}^{N+1} P_{a_2a_l}
\nonumber
\\
&\qquad +
\sum_{k=3}^{N+1} W_{|a_k|} P_{a_1a_2}\Big(
P_{a_ka_1}\prod_{l=3, l\neq k}^{N+1} P_{a_ka_l}
-P_{a_ka_2}\prod_{l=3, l\neq k}^{N+1} P_{a_ka_l} \Big)
\bigg).
\end{align*}
Now the definition on $P_{a_ka_l}$ implies
\[
P_{a_1a_2} (P_{a_ka_1}-P_{a_ka_2})= P_{a_ka_1}P_{a_ka_2}\;,
\]
so that (\ref{GN}) follows for $N\mapsto N+1$.
\end{proof}

We can easily perform the scaling limit $\mathcal{N},V\to \infty$ to
functions
$\tilde{G}(x_1,\dots,x_N)=\lim_{\mathcal{N},V\to \infty}
\mu^{2(N-1)}G_{|V\mu^2x_1,\dots, V\mu^2x_N|}$
and $G(X_1,\dots,X_n):=\tilde{G}(x_1(X_1),\dots,x_N(X_N))$.
With $\lim (2E_{V\mu^2x_k}) =\mu^2 \sqrt{X_k}$
and thus $\lim (\mu^4 P_{kl})= \frac{4}{X_k-X_l}$ we have
\begin{align}
G(X_1,\dots,X_N)
=
\sum_{k=1}^{N} \frac{W(X_k)}{2\tilde{\lambda}}
\prod_{l=1, l\neq k}^{N} \frac{4\tilde{\lambda}}{X_k-X_l}.
\label{GXN}
\end{align}


\section{$N$-points function with
$B \ge 2$ boundaries}

\subsection{($N_1{+}\dots{+}N_B$)-point function
with one $N_i>1$}

To simplify notation let
$\frac{\partial^N}{\partial \mathbb{J}_{a_1 \dots a_N}}:=
\frac{\partial^N}{\partial J_{a_1 a_2} \dots
\partial J_{a_{N-1} a_N} \partial J_{a_N a_1}}$. We prove:
\begin{prop}
For $N_1>1$ one has
\begin{align}
G_{|a^1_1\dots a^1_{N_1}|\dots|a^B_1\dots a_{N_B}^B|}
= \lambda \frac{G_{|a^1_1a^1_3\dots a^1_{N_1}|a^2_1\dots a^2_{N_2}|\dots|a^B_1\dots a_{N_B}^B|}
-G_{|a_2^1a_3^1\dots a^1_{N_1}|a^2_1\dots a^2_{N_2}|\dots|a^B_1\dots a_{N_B}^B|}}{E_{a_1^1}^2-
E_{a_2^1}^2}\;.
\label{GNB}
\end{align}
\end{prop}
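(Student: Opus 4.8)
The plan is to mirror verbatim the derivation of the $B=1$ Schwinger--Dyson equation (\ref{NptWT}), carrying the $B-1$ spectator cycles $|a^2_1\dots a^2_{N_2}|\dots|a^B_1\dots a^B_{N_B}|$ along untouched. I will work throughout with $\mathcal{F}:=\log\frac{\mathcal{Z}[J]}{\mathcal{Z}[0]}$ rather than with $\mathcal{Z}$ itself, so that only connected contributions appear; and I assume all indices pairwise different, the general case following by continuity in the eigenvalues $E_m$ exactly as for (\ref{2pointWard}).

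First I peel the derivative $\frac{\partial}{\partial J_{a^1_1a^1_2}}$ off the cyclic product that defines $G_{|a^1_1\dots a^1_{N_1}|\dots|}$. Writing $\mathcal{Z}=K\exp\big({-}\tfrac{\lambda}{3V^2}\sum_{m,n,k}\tfrac{\partial^3}{\partial J_{mn}\partial J_{nk}\partial J_{km}}\big)\mathcal{Z}_{free}$, commuting this derivative past the interaction exponential, using $\frac{\partial}{\partial J_{a^1_1a^1_2}}\mathcal{Z}_{free}=VH_{a^1_1a^1_2}^{-1}J_{a^1_2a^1_1}\mathcal{Z}_{free}$ for $a^1_1\neq a^1_2$, and the commutator $[\exp({-}\tfrac{\lambda}{3V^2}\sum\partial^3),J_{a^1_2a^1_1}]={-}\tfrac{\lambda}{V^2}\sum_m\frac{\partial}{\partial J_{a^1_1m}}\frac{\partial}{\partial J_{ma^1_2}}\exp({-}\tfrac{\lambda}{3V^2}\sum\partial^3)$ (the same mechanism as in (\ref{1point_ward})--(\ref{2pointWard})), I obtain a raw identity $\frac{\partial\mathcal{F}}{\partial J_{a^1_1a^1_2}}=VH_{a^1_1a^1_2}^{-1}\big(J_{a^1_2a^1_1}-\tfrac{\lambda}{V^2}\sum_m\mathcal{Z}^{-1}\frac{\partial}{\partial J_{a^1_1m}}\frac{\partial}{\partial J_{ma^1_2}}\mathcal{Z}\big)$. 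Into the sum I substitute the Ward--Takahashi identity (\ref{WT}) with $a=a^1_1$, $b=a^1_2$; the $\mathrm{W}_{a^1_1}\delta_{a^1_1a^1_2}$ term drops and $\mathcal{Z}^{-1}\frac{\partial}{\partial J}\mathcal{Z}=\frac{\partial}{\partial J}\mathcal{F}$, which turns everything into an identity linear in $\mathcal{F}$:
\[
\frac{\partial\mathcal{F}}{\partial J_{a^1_1a^1_2}}
=VH_{a^1_1a^1_2}^{-1}J_{a^1_2a^1_1}
-\frac{\lambda H_{a^1_1a^1_2}^{-1}}{E_{a^1_1}-E_{a^1_2}}\sum_m\Big(J_{ma^1_1}\frac{\partial\mathcal{F}}{\partial J_{ma^1_2}}-J_{a^1_2m}\frac{\partial\mathcal{F}}{\partial J_{a^1_1m}}\Big).
\]

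Next I apply $\tfrac1V$ times the remaining derivatives --- the rest of cycle $1$ together with all derivatives of cycles $2,\dots,B$ --- and set $J=0$. Because the indices are pairwise different and the spectator cycles carry disjoint labels, the stray source $J_{ma^1_1}$ can be removed only by $\frac{\partial}{\partial J_{a^1_{N_1}a^1_1}}$, which forces $m=a^1_{N_1}$, and the cyclic chain that then survives on $\mathcal{F}$ is precisely the one defining $V\,G_{|a^1_2a^1_3\dots a^1_{N_1}|a^2_1\dots a^2_{N_2}|\dots|a^B_1\dots a^B_{N_B}|}$; symmetrically $J_{a^1_2m}$ can be removed only by $\frac{\partial}{\partial J_{a^1_2a^1_3}}$, giving $V\,G_{|a^1_1a^1_3\dots a^1_{N_1}|a^2_1\dots a^2_{N_2}|\dots|a^B_1\dots a^B_{N_B}|}$; the term $VH_{a^1_1a^1_2}^{-1}J_{a^1_2a^1_1}$ contributes nothing, since for $B\geq 2$ some spectator derivative eventually acts on a constant. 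Collecting the factors $\tfrac1V$, $VH_{a^1_1a^1_2}^{-1}$, $\tfrac{\lambda}{E_{a^1_1}-E_{a^1_2}}$ and the two $V$'s coming from the two chains, and using $H_{a^1_1a^1_2}(E_{a^1_1}-E_{a^1_2})=E_{a^1_1}^2-E_{a^1_2}^2$, produces exactly (\ref{GNB}).

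The only genuine point of care --- and it is organisational rather than technical --- is to verify that the boundaries $2,\dots,B$ really ``ride along'' untouched: no derivative belonging to them can contract against the special sources $J_{a^1_2a^1_1}$, $J_{ma^1_1}$, $J_{a^1_2m}$ generated by the reduction, and, because one differentiates $\mathcal{F}=\log\mathcal{Z}$ and not $\mathcal{Z}$, every disconnected contribution is automatically absent, so that the right-hand side of (\ref{GNB}) is the \emph{connected} $(N_1{-}1{+}N_2{+}\dots{+}N_B)$-point function with the same $B$ boundary components. Both facts are immediate from the pairwise distinctness of the indices; the rest is a line-by-line transcription of the computation leading to (\ref{NptWT}).
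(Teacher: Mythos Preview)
Your argument is correct and follows essentially the same route as the paper's proof: peel off $\partial/\partial J_{a^1_1a^1_2}$, convert the resulting $\sum_m\partial_{a^1_1m}\partial_{ma^1_2}\mathcal{Z}$ via the Ward--Takahashi identity, and identify which of the remaining derivatives can remove the stray sources $J_{ma^1_1}$ and $J_{a^1_2m}$. Your choice to work with $\mathcal{F}=\log\mathcal{Z}$ rather than $\mathcal{Z}$ is a mild reorganisation; it has the pleasant side effect that the $N_1=2$ subtlety --- which the paper handles by observing that the surviving $J_{a^1_2a^1_1}$-term ``cancels with $K/\mathcal{Z}[J]$'' --- becomes the transparent statement that the term $VH^{-1}J_{a^1_2a^1_1}$ is differentiated to a constant and then killed by any spectator derivative.

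One cosmetic correction: the overall prefactor relating $G_{|\dots|}$ to derivatives of $\mathcal{F}$ is $V^{B-2}$, not $\tfrac{1}{V}$ (see (\ref{logZ})), so your phrase ``apply $\tfrac1V$ times the remaining derivatives'' and the claim that the surviving chain ``defines $V\,G_{|a^1_2\dots|}$'' are off for $B\geq 2$. This is harmless: the \emph{same} power $V^{B-2}$ multiplies the chain of derivatives on both sides (the original $(N_1{+}\dots{+}N_B)$-point function and the reduced $((N_1{-}1){+}N_2{+}\dots{+}N_B)$-point function), so it cancels and your final formula is unaffected. Just tidy the $V$-bookkeeping.
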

\begin{proof}
For pairwise different $a_i,b_j$ we have from (\ref{logZ})
\begin{align}
&G_{|a_1^1\dots a_{N_1}^1|\dots 
|a_1^B\dots a_{N_B}^B|}
= V^{B-2}
\frac{\partial^{N_1}}{\partial \mathbb{J}_{a_1^1 \dots a^1_{N_1}}}
\dots
\frac{\partial^{N_B}}{\partial \mathbb{J}_{a_1^B \dots a^B_{N_B}}}
\log\frac{\mathcal{Z}[J]}{\mathcal{Z}[0]}
\Big|_{J=0}
\nonumber
\\
&= V^{B-1}
\frac{\partial^{N_1-1}}{\partial J_{a^1_2 a^1_3}
\dots \partial J_{a^1_N a^1_1}}
\frac{\partial^{N_2}}{\partial \mathbb{J}_{a_1^2 \dots a^2_{N_2}}}
\dots
\frac{\partial^{N_B}}{\partial \mathbb{J}_{a_1^B \dots a^B_{N_B}}}
\bigg\{
\frac{K}{\mathcal{Z}[J]}
\nonumber
\\
& \qquad\qquad \times
\exp \Big( {-}\frac{\lambda}{3V^2} \!\sum_{m,n,k}\!
\frac{\partial^3}{\partial J_{mn}\partial J_{nk}\partial J_{km}}
\Big) 
\Big( J_{a_2^1a_1^1} H_{a_1^1a_2^1}^{-1}
\mathcal{Z}_{free}[J]\Big)\bigg\}\Big|_{J=0}
\tag{*}
\\
&= V^{B-3}\frac{(-\lambda)}{H_{a_1^1a_2^1}}
\frac{\partial^{N_1-1}}{\partial J_{a^1_2 a^1_3}
\dots \partial J_{a^1_N a^1_1}}
\frac{\partial^{N_2}}{\partial \mathbb{J}_{a^2_1 \dots a^2_{N_2}}}
\dots
\frac{\partial^{N_B}}{\partial \mathbb{J}_{a_1^B \dots a^B_{N_B}}}
\bigg\{
\frac{1}{\mathcal{Z}[J]}
\sum_{m=0}^{\mathcal{N}}
\frac{\partial^2 \mathcal{Z}}{\partial J_{a_1^1m}
\partial J_{ma_2^1}}
\bigg\}\Big|_{J=0}
 \notag
\\
&= V^{B-2}\frac{(-\lambda)}{E_{a_1^1}^2-E_{a_2^1}^2}
\frac{\partial^{N_1-1}}{\partial J_{a^1_2 a^1_3}
\dots \partial J_{a^1_N a^1_1}}
\frac{\partial^{N_2}}{\partial \mathbb{J}_{a^2_1 \dots a^2_{N_2}}}
\dots
\frac{\partial^{N_B}}{\partial \mathbb{J}_{a_1^B \dots a^B_{N_B}}}
\bigg\{
\frac{1}{\mathcal{Z}[J]}
\notag
\\*
&\qquad\qquad\qquad \times
\sum_{m=0}^{\mathcal{N}}
\Big(J_{ma_1^1}\frac{\partial\mathcal{Z}}{\partial J_{ma_2^1}}
- J_{a_2^1m}
\frac{\partial \mathcal{Z}}{\partial J_{a_1^1m}}
\Big)
\bigg\}\Big|_{J=0}
\tag{**}
\\
&=V^{B-2}\frac{(-\lambda)}{E_{a_1^1}^2-E_{a_2^1}^2}
\Big(\frac{\partial^{N_1-1}}{\partial \mathbb{J}_{a^1_2 \dots a^1_{N_1}}}
-\frac{\partial^{N_1-1}}{\partial \mathbb{J}_{a^1_1 a^1_3 \dots a^1_{N_1}}}
\Big)\frac{\partial^{N_2}}{\partial \mathbb{J}_{a^2_1 \dots a^2_{N_2}}}
\dots
\frac{\partial^{N_B}}{\partial \mathbb{J}_{a_1^B \dots a^B_{N_B}}}
\log \mathcal{Z}[J]
\Big|_{J=0}.
\tag{***}
\end{align}
Precisely for $N_1=2$ there is a surviving term of the
$J_{a_2^1a_1^1}$ differentiation, but the result cancels with
$\frac{K}{\mathcal{Z}[J]}$ so that further differentiations due to
$B\geq 2$ give zero. Therefore, all surviving differentiations of
$J_{a_2^1a_1^1}$ in (*) come from $\exp(-\frac{\lambda}{3V^2} \sum
\frac{\partial^3}{\partial J^3})$.  In (**) the
Ward-Takahashi identity (\ref{WT}) and $H_{ab}(E_a-E_b)=E_a^2-E_b^2$
are used.
Then $J_{ma_1^1}$ must be hit by $\frac{\partial}{\partial
  J_{a^1_{N_1}a_1^1}}$ and $J_{a_2^1m}$ by $\frac{\partial}{\partial
  J_{a^1_2a_3^1}}$, thus giving (***). The final line gives with
(\ref{logZ}) the assertion (\ref{GNB}).
\end{proof}

By symmetry in the boundary components we can recursively use
(\ref{GNB}) to express any ($N_1{+}\dots{+}N_B$)-point function
with one $N_i>1$ in terms of $G_{|a^1|a^2|\dots|a^B|}$. Since further
boundaries play a spectator r\^ole in (\ref{GNB}), we can easily adapt
the arguments of Proposition \ref{prop_Npt_1} to resolve this recursion:
\begin{prop} \label{prop_Npt_B}
Let $B\geq 2$. The connected ($N_1{+}\dots{+}N_B$)-point function
with one $N_i>1$ is given in terms of $P_{ab}:=\frac{1}{E_a^2-E_b^2}$ by
\begin{align}
&G_{|a^1_1\dots a^1_{N_1}|\dots|a^B_1\dots a_{N_B}^B|}
\label{GNB-final}
\\
&=\lambda^{N_1+\dots+N_B-B}
\sum_{k_1=1}^{N_1} \dots
\sum_{k_B=1}^{N_B}
G_{|a^1_{k_1}|\dots|a^B_{k_B}|}
\Big(\prod_{l_1=1, l_1\neq k_1}^{N_1} \!\!\!
P_{a^1_{k_1}a^1_{l_1}}\Big)\cdots
\Big(\prod_{l_B=1, l_B\neq k_B}^{N_B}\!\!\!
P_{a^B_{k_B}a^B_{l_B}}\Big),
\nonumber
\end{align}
its large-$(\mathcal{N},V)$ limit by
\begin{align}
&G(X^1_1,\dots, X^1_{N_1}|\dots|X^B_1,\dots, X_{N_B}^B)
\label{GNB-final-X}
\\
&=\tilde{\lambda}^{N_1+\dots+N_B-B}
\sum_{k_1=1}^{N_1} \dots
\sum_{k_B=1}^{N_B}
G(X^1_{k_1}|\dots|X^B_{k_B})
\prod_{\beta=1}^B \prod_{l_\beta=1, l_\beta\neq k_\beta}^{N_\beta} 
\frac{4}{X^\beta_{k_\beta}-X^\beta_{l_\beta}}.
\nonumber
\end{align}
\end{prop}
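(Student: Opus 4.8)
The plan is to recognise that the recursion (\ref{GNB}) for a boundary component of valence $N_1>1$ has exactly the same shape as the recursion (\ref{NptWT}) that was solved in Proposition \ref{prop_Npt_1}; the only differences are that the ``seed'' of the recursion is now the all-valence-one correlation function $G_{|a^1|a^2|\cdots|a^B|}$ rather than $W_{|a|}$, and that the remaining boundary components $|a^2_1\dots a^2_{N_2}|\cdots|a^B_1\dots a^B_{N_B}|$ ride along as untouched spectators. Hence the induction in the proof of Proposition \ref{prop_Npt_1} can be transcribed essentially verbatim.

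In detail, I would first fix $\beta=1$ and induct on $N_1$. The base case $N_1=1$ is the trivial identity (empty $P$-product, $\lambda^0$). For the inductive step, substitute the hypothesis for $N_1$ into (\ref{GNB}) with $N_1$ replaced by $N_1+1$, isolate the $k_1\in\{1,2\}$ contributions exactly as in Proposition \ref{prop_Npt_1}, and recombine the remaining sum by means of the algebraic identity
\[
P_{a^1_1a^1_2}\bigl(P_{a^1_{k_1}a^1_1}-P_{a^1_{k_1}a^1_2}\bigr)
=P_{a^1_{k_1}a^1_1}P_{a^1_{k_1}a^1_2}.
\]
Since the factors and indices attached to boundaries $2,\dots,B$ are never moved by this computation, the outcome is precisely (\ref{GNB-final}) with $N_1$ replaced by $N_1+1$. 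I would then invoke the permutation symmetry of $\log\mathcal{Z}[J]$ (hence of the correlation functions) in the boundary components to repeat the same reduction on $\beta=2$, then $\beta=3$, and so on: at each stage the boundaries already brought to valence one, together with those not yet treated, are the spectators in (\ref{GNB}). Iterating over all $\beta$ with $N_\beta>1$ produces (\ref{GNB-final}). As (\ref{GNB}) was derived only for indices pairwise distinct within (and across) the boundaries, the extension of (\ref{GNB-final}) to coincident indices follows by the same continuity argument used after (\ref{2pointWard}), the correlation functions being, before the loop summation, rational in the $E_n$.

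For the large-$(\mathcal{N},V)$ limit I would set $a^\beta_i=:V\mu^2x^\beta_i$, recall $\lim(\mu^4P_{a^\beta_ka^\beta_l})=\frac{4}{X^\beta_k-X^\beta_l}$ from the computation preceding (\ref{GXN}), and check via the mass dimensions collected in footnote~\ref{fn-1} that the powers of $\mu$ attached to the $N_1+\dots+N_B-B$ factors $\lambda$, the equally many factors $P$, and the seed $G_{|a^1_{k_1}|\cdots|a^B_{k_B}|}$ combine consistently with the normalisation used throughout for $G(X^1_1,\dots|\cdots|\dots,X^B_{N_B})$. The limit is then taken termwise in (\ref{GNB-final}), replacing $\lambda\to\tilde\lambda$, each $P_{a^\beta_ka^\beta_l}\to\frac{4}{X^\beta_k-X^\beta_l}$ and $G_{|\cdots|}\to G(\cdots)$, which yields (\ref{GNB-final-X}).

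I expect no genuine obstacle: the new ingredient, the recursion (\ref{GNB}), is already in hand, and the combinatorial heart is the single identity above, identical to the one in Proposition \ref{prop_Npt_1}. The only point that warrants explicit care is the ``one boundary at a time, the rest are spectators'' step --- i.e. verifying that applying (\ref{GNB}) to boundary $\beta$ leaves intact the $P$-products already generated for boundaries $<\beta$; this is immediate from the form of (\ref{GNB}) but should be spelled out rather than merely asserted ``by symmetry''.
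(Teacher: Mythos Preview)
Your proposal is correct and follows exactly the approach the paper sketches: it too observes that the recursion (\ref{GNB}) has the same structure as (\ref{NptWT}) with the other boundary components as spectators, and that the induction of Proposition~\ref{prop_Npt_1} (with the key identity $P_{a_1a_2}(P_{a_ka_1}-P_{a_ka_2})=P_{a_ka_1}P_{a_ka_2}$) carries over verbatim once the seed $W_{|a|}$ is replaced by $G_{|a^1|\cdots|a^B|}$. In fact your write-up is more detailed than the paper's, which merely asserts that ``further boundaries play a spectator r\^ole'' and that one can ``easily adapt the arguments of Proposition~\ref{prop_Npt_1}''.
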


\subsection{SD-equation for 
$(1{+}\dots{+}1)$-point function}

\begin{prop}
Let $B\geq 2$. Then the $(1{+}\dots{+}1)$-point function satisfies
\begin{align}
W_{|a^1|}  G_{|a^1|a^2|\dots |a^B|}
&+\frac{\lambda^2}{V}
\sum_{m=0}^{\mathcal{N}} \frac{G_{|a^1|a^2|\dots |a^B|}
-G_{|m|a^2|\dots |a^B|}}{(E_{a^1}^2-E_m^2)}
\label{G1B}
\\
&=-\lambda
\sum_{\beta=2}^B G_{|a^1a^\beta a^\beta|a^2|
\stackrel{\beta}{\check{\dots\dots}}  |a^B|}
- \frac{\lambda}{V^2} G_{|a^1|a^1|a^2|\dots |a^B|}
\nonumber
\\
&
- \lambda\sum_{p=1}^{B-2} \sum_{2\leq i_1<\dots < i_p\leq B}
G_{|a^1|a^{i_1}|\dots |a^{i_p}|}  G_{|a^1|a^{j_1}|\dots |a^{j_{B-p-1}}|},
\nonumber
\end{align}
where $2 \leq j_1< \dots <j_{B-p-1} \leq B$ and
$\{i_1,\dots,i_p,j_1,\dots,j_{B-p-1}\}=\{2,\dots,B\}$,
and $\stackrel{\beta}{\check{\dots\dots}}$ denotes the omission of
$a^\beta$.
\end{prop}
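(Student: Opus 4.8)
\noindent\emph{Strategy of the proof.}\ Write $\partial_{pq}:=\frac{\partial}{\partial J_{pq}}$ and $\mathcal{W}:=\log(\mathcal{Z}[J]/\mathcal{Z}[0])$. The plan is to repeat the derivation of the $B{=}1$ equation (\ref{1point_ward}), now carrying the $B-1$ diagonal sources $J_{a^2a^2},\dots,J_{a^Ba^B}$ as spectators through every step. Starting from $G_{|a^1|a^2|\dots|a^B|}=V^{B-2}\,\partial_{a^2a^2}\cdots\partial_{a^Ba^B}\,\partial_{a^1a^1}\mathcal{W}\big|_{J=0}$, I insert $\mathcal{Z}=K\exp(-\tfrac{\lambda}{3V^2}\sum\partial^3)\mathcal{Z}_{free}$ and commute $\exp(-\tfrac{\lambda}{3V^2}\sum\partial^3)$ past the factor $(J_{a^1a^1}-\kappa)H_{a^1a^1}^{-1}$ produced by $\partial_{a^1a^1}$ on $\mathcal{Z}_{free}$, exactly as in (\ref{1point_ward}); since $[-\tfrac{\lambda}{3V^2}\sum\partial^3,J_{a^1a^1}]=-\tfrac{\lambda}{V^2}\sum_m\partial_{a^1m}\partial_{ma^1}$ (higher commutators vanish, being commutators of polynomials in $\partial$) and the explicit $(J_{a^1a^1}-\kappa)$ term is annihilated by any spectator derivative (so, unlike in (\ref{1point_ward}), $\kappa$ does not appear explicitly for $B\ge2$), this produces
\begin{align*}
H_{a^1a^1}G_{|a^1|a^2|\dots|a^B|}
=-\frac{\lambda}{V}\,V^{B-2}\sum_{m}\partial_{a^2a^2}\cdots\partial_{a^Ba^B}
\Big[\partial_{a^1m}\partial_{ma^1}\mathcal{W}+(\partial_{a^1m}\mathcal{W})(\partial_{ma^1}\mathcal{W})\Big]\Big|_{J=0}.
\end{align*}
It then remains to evaluate the two brackets with the expansion (\ref{logZ}) of $\mathcal{W}$.

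The product bracket $(\partial_{a^1m}\mathcal{W})(\partial_{ma^1}\mathcal{W})$ contributes only for $m=a^1$, since the only terms of $\mathcal{W}$ linear in a single $J$ are the diagonal ones $V\sum_pG_{|p|}J_{pp}$, whence $\partial_{a^1m}\mathcal{W}|_{J=0}=0$ for $m\neq a^1$. For $m=a^1$ the Leibniz rule splits the $B-1$ spectator derivatives into a subset $S\subseteq\{2,\dots,B\}$ acting on the first factor and its complement on the second; using $(\prod_{i\in S}\partial_{a^ia^i})\partial_{a^1a^1}\mathcal{W}|_{J=0}=V^{1-|S|}G_{|a^1|\{a^i\}_{i\in S}|}$ from (\ref{logZ}), the powers of $V$ cancel and this bracket becomes $-\lambda\sum_{S}G_{|a^1|\{a^i\}_{i\in S}|}\,G_{|a^1|\{a^j\}_{j\notin S}|}$. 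The terms $S=\varnothing$ and $S=\{2,\dots,B\}$ give $-2\lambda\,G_{|a^1|}G_{|a^1|a^2|\dots|a^B|}$, which I move to the left so that $H_{a^1a^1}G_{|a^1|a^2|\dots|a^B|}$ becomes $W_{|a^1|}G_{|a^1|a^2|\dots|a^B|}$ by $W_{|a^1|}=H_{a^1a^1}+2\lambda G_{|a^1|}$ of (\ref{def:Wa}); the proper nonempty $S$ reorganise, upon setting $p=|S|$, into the final sum of (\ref{G1B}).

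For the connected bracket $\partial_{a^1m}\partial_{ma^1}\mathcal{W}$, after the spectators act there are three cases according to whether the summation index $m$ avoids or hits the fixed indices. For $m\notin\{a^1,\dots,a^B\}$ it equals $V^{2-B}G_{|a^1m|a^2|\dots|a^B|}$. For $m=a^1$ it equals the diagonal continuation of that plus, by the ``common source on the diagonal'' phenomenon noted after (\ref{explogZ}), an extra $\tfrac1VG_{|a^1|a^1|a^2|\dots|a^B|}$, which together with the prefactor $-\tfrac\lambda V$ yields the term $-\tfrac{\lambda}{V^2}G_{|a^1|a^1|a^2|\dots|a^B|}$ of (\ref{G1B}). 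For $m=a^\beta$ with $2\le\beta\le B$, the length-$2$ cycle $J_{a^1a^\beta}J_{a^\beta a^1}$ can fuse with the length-$1$ cycle $J_{a^\beta a^\beta}$ into the length-$3$ cycle $\mathbb{J}_{a^1a^\beta a^\beta}$ --- a new valence-$3$ boundary --- contributing $V^{3-B}G_{|a^1a^\beta a^\beta|a^2|\stackrel{\beta}{\check{\dots\dots}}|a^B|}$ on top of the non-fused continuation; multiplied by $-\tfrac\lambda V$ and summed over $\beta$ this produces the $-\lambda\sum_{\beta=2}^B$ term. Finally, collecting all non-fused continuations into $-\tfrac\lambda V\sum_m G_{|a^1m|a^2|\dots|a^B|}$ and using the Ward--Takahashi identity (\ref{GNB}) with $N_1=2$, namely $G_{|a^1m|a^2|\dots|a^B|}=\lambda\,\frac{G_{|a^1|a^2|\dots|a^B|}-G_{|m|a^2|\dots|a^B|}}{E_{a^1}^2-E_m^2}$, turns this into the second term on the left-hand side of (\ref{G1B}). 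Adding everything gives the assertion.

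The main obstacle is the combinatorics of the coinciding-index terms: one must verify that the fused valence-$3$ boundary and the diagonal-continuation contribution enter with coefficient exactly $1$. This requires balancing the number of cyclic representatives of $\mathbb{J}_{a^1a^\beta a^\beta}$ against the explicit factor $\tfrac1{N_\beta}$ and the symmetry factor $S_{(N_1,\dots,N_B)}$ in (\ref{logZ}), together with an exact count of the powers of $V$ --- fusing two boundaries into one lowers $B$ by one, changing $V^{2-B}$ into $V^{3-B}$, which is precisely compensated by the $\tfrac1V$ of the cubic vertex. One also has to exclude any further coincidence; here this is immediate, because a length-$2$ cycle $J_{a^1a^\beta}J_{a^\beta a^1}$ can absorb only the single diagonal source $J_{a^\beta a^\beta}$. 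Tracking these factors is exactly where the precise form of the right-hand side of (\ref{G1B}) is pinned down.
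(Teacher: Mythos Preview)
Your proof is correct and follows essentially the same route as the paper: differentiate $\mathcal{Z}_{free}$, commute the interaction exponential past the resulting $(J_{a^1a^1}-\kappa)$ factor, split $\frac{1}{\mathcal{Z}}\partial_{a^1m}\partial_{ma^1}\mathcal{Z}$ into connected and product pieces, identify the resulting correlators, and finish with the $N_1{=}2$ case of (\ref{GNB}). You are somewhat more explicit than the paper in the case analysis on $m$ and in tracking the symmetry factors and powers of $V$; the paper simply writes down the list of terms after the spectator derivatives act and leaves this bookkeeping implicit.
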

\begin{proof}
We write down for pairwise different indices $a^\beta$ the formula
for the ($1{+}\dots{+}1$)-point
function in (\ref{logZ}) with $B\geq 2$ boundary components and perform
the $J_{a^1a^1}$-differentiation:
\begin{align}
&G_{|a^1|a^2|\dots |a^B|}
= V^{B-2}
\frac{\partial^B}{\partial J_{a^1a^1}\dots
J_{a^Ba^B}}\log\frac{\mathcal{Z}[J]}{\mathcal{Z}[0]}
\Big|_{J=0}
\nonumber
\\
&=\frac{ V^{B-1}\partial^{B-1}}{\partial J_{a^2a^2}\dots
J_{a^Ba^B}}\Big\{\frac{K}{\mathcal{Z}[J]}
\exp \Big( {-}\frac{\lambda}{3V^2} \!\sum_{m,n,k}\!
\frac{\partial^3}{\partial J_{mn}\partial J_{nk}\partial J_{km}}
\Big) 
\Big( (J_{a^1a^1}-\kappa)
H_{a^1a^1}^{-1}\mathcal{Z}_{free}[J]\Big)\Big\}\Big|_{J=0}
\nonumber
\\
&= V^{B-1}\frac{\partial^{B-1}}{\partial J_{a^2a^2}\dots
J_{a^Ba^B}}\Big\{\frac{1}{\mathcal{Z}[J]}
\frac{(-\lambda)}{V^2 H_{a^1a^1}}
\sum_{m=0}^{\mathcal{N}} \frac{\partial}{\partial J_{a^1m}}
\frac{\partial}{\partial J_{ma^1}} \mathcal{Z}
\Big\}\Big|_{J=0}
\nonumber
\\
&= V^{B-3}\frac{(-\lambda)}{H_{a^1a^1}}
\sum_{m=0}^{\mathcal{N}}
\frac{\partial^{B-1}}{\partial J_{a^2a^2}\dots J_{a^Ba^B}}
\Big\{
\frac{\partial^2 \log(\mathcal{Z})}{\partial J_{a^1m}
\partial J_{ma^1}}
+
\frac{\partial \log(\mathcal{Z})}{\partial J_{a^1m}}
\frac{\partial\log(\mathcal{Z})}{\partial J_{ma^1}}
\Big\}\Big|_{J=0}
\nonumber
\\
&= \frac{(-\lambda)}{H_{a^1a^1}}\Big\{
\frac{1}{V}\sum_{m=0}^{\mathcal{N}} G_{|a^1m|a^2|\dots |a^B|}
+\sum_{\beta=2}^B G_{|a^1a^\beta a^\beta|a^2| \stackrel{\beta}{\check{\dots\dots}}
  |a^B|}
+ \frac{1}{V^2} G_{|a^1|a^1|a^2|\dots |a^B|}
\nonumber
\\
&+ 2 G_{|a^1|}  G_{|a^1|a^2|\dots |a^B|}
+ \sum_{p=1}^{B-2} \sum_{2\leq i_1<\dots < i_p\leq B}
G_{|a^1|a^{i_1}|\dots |a^{i_p}|}  G_{|a^1|a^{j_1}|\dots |a^{j_{B-p-1}}|}\Big\},
\end{align}
with notations introduced in the proposition.
We multiply by $\frac{H_{aa}}{\lambda}$ and bring
$-2 G_{|a^1|}  G_{|a^1|a^2|\dots |a^B|}$ to the lhs, thus reconstructing
the function $W_{|a^1|}$ defined in (\ref{def:Wa}):
\begin{align}
\frac{1}{\lambda} W_{|a^1|}  G_{|a^1|a^2|\dots |a^B|}
+\frac{1}{V}\sum_{m=0}^{\mathcal{N}} G_{|a^1m|a^2|\dots |a^B|}
&=-\sum_{\beta=2}^B G_{|a^1a^\beta a^\beta|a^2|
\stackrel{\beta}{\check{\dots\dots}}  |a^B|}
- \frac{1}{V^2} G_{|a^1|a^1|a^2|\dots |a^B|}
\nonumber
\\
&
- \sum_{p=1}^{B-2} \sum_{2\leq i_1<\dots < i_p\leq B} \!\!\!
G_{|a^1|a^{i_1}|\dots |a^{i_p}|}  G_{|a^1|a^{j_1}|\dots |a^{j_{B-p-1}}|}.
\end{align}
Reducing the $(2{+}1{+}\dots{+}1)$-point function
by (\ref{GNB}) leads to the assertion (\ref{G1B}).
\end{proof}

Taking the scaling limit $
G(x^1|\dots|x^B):=\mu^{2(2-B)}
\lim_{\mathcal{N},V\to \infty}
G_{|V\mu^2x^1|\dots | V\mu^2x^B|}$, the
term $\frac{1}{V^2} G_{|a^1|a^1|a^2|\dots |a^B|}$ in (\ref{G1B}) goes
away, and we obtain a recursive system of affine equations for the
function with $B$ boundary components. To write these equations in
more condensed form, let us abbreviate for a set $I=\{i_1,\dots,i_p\}$
of indices
$G(X|Y\triangleleft^I):=G(X|Y^{i_1}|\dots|Y^{i_p})$.
With these notations, and including $\nu$ (here $=0$) from 
(\ref{MS-solution}) for later use in \cite{Grosse:2016qmk}, 
we can express the limit of (\ref{G1B})
in terms of $X^i:=(2e(x^i)+1)^2$ as follows:
\begin{align}
&(W(X^1)+\nu)  G(X^1|X\triangleleft^{\{2,\dots B\}})
+\frac{1}{2} \int_1^\Xi dT \rho(T)
\frac{G(X^1|X\triangleleft^{\{2,\dots B\}})-G(T|X\triangleleft^{\{2,\dots B\}})}{
(X-T)}
\nonumber
\\
&=-\tilde{\lambda}
\sum_{\beta=2}^B G\Big(X^1,X^\beta ,X^\beta|X\triangleleft^{\{2
\stackrel{\beta}{\check{\dots\dots}} B\}}\Big)
- \tilde{\lambda} \!\!\!
\sum_{{J\subset \{2,\dots,B\} \atop 1\leq |J| \leq B-2}} \!\!\!
G(X^1|X\triangleleft^J)
G(X^1|X\triangleleft^{\{2,\dots,B\}\setminus J}).
\label{G1B-lim}
\end{align}
The measure $\rho(T)$ was defined in (\ref{Int_eq_field1}). 
In presence of $\nu\neq 0$ we need a finite cut-off $\Xi$; the limit 
$\Xi\to \infty$ is only possible for the solutions. The
inhomogeneity only involves known functions with $<B$ boundary
components.

\subsection{Solution for the $(1{+}1)$-point function}

We specify the problem (\ref{G1B-lim}) to the
$1+1$-point function
\begin{align}
(W(X)+\nu) G(X|Y)
&= -\tilde{\lambda} G(X,Y,Y)
-\frac{1}{2} \int_1^\Xi dT \rho(T)\; \frac{G(X|Y)-G(T|Y)}{X-T}.
\label{int_eq_GY|X}
\end{align}
A perturbative solution of (\ref{int_eq_GY|X}) to
$\mathcal{O}(\tilde{\lambda}^4)$ suggests:
\begin{prop} The $(1{+}1)$-point function is given by 
\begin{align}
G(X|Y)&= \frac{4\tilde{\lambda}^2}{\sqrt{X+c}\cdot \sqrt{Y+c}\cdot
(\sqrt{X+c}+\sqrt{Y+c})^2},
\label{G1+1}
\end{align}
where $c(e,\tilde{\lambda})$ was defined in (\ref{c-lambda}).
\end{prop}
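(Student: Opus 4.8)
The plan is to verify directly that the right-hand side of (\ref{G1+1}) solves the affine singular integral equation (\ref{int_eq_GY|X}), and then to invoke uniqueness. For fixed $Y$, equation (\ref{int_eq_GY|X}) has exactly the structure of the Makeenko--Semenoff problem (\ref{Int_eq_field1}): the unknown $X\mapsto G(X|Y)$ enters linearly, the singular operator $f\mapsto\frac{1}{2}\int_1^\Xi dT\,\rho(T)\frac{f(X)-f(T)}{X-T}$ is the same one, and the coefficient $W(X)+\nu$ is known explicitly from Proposition~\ref{prop_MS}; the corresponding homogeneous equation has only the trivial solution in the relevant class (as is visible perturbatively and follows from the Riemann--Hilbert/Carleman analysis of \cite{Makeenko:1991ec}), so a successful check of the ansatz finishes the proof. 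Throughout I set $\nu=0$, $Z=1$ and abbreviate $s:=\sqrt{X+c}$, $t:=\sqrt{Y+c}$, $\sigma:=\sqrt{T+c}$, so that $X-T=(s-\sigma)(s+\sigma)$ and the ansatz is $G(X|Y)=\frac{4\tilde{\lambda}^2}{s\,t\,(s+t)^2}$.

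The inhomogeneity is produced by Proposition~\ref{prop_Npt_1}: specialising (\ref{GXN}) to $N=3$ exhibits $G(X_1,X_2,X_3)$ as $8\tilde{\lambda}$ times the second divided difference of $W$ at $X_1,X_2,X_3$, so the diagonal limit $X_3\to X_2=Y$ (legitimate exactly as $\tilde{G}(x,x)=2W'(X)$ for the $2$-point function) gives
\begin{align*}
-\tilde{\lambda}\,G(X,Y,Y)=-8\tilde{\lambda}^2\,\frac{W(X)-W(Y)-(X-Y)W'(Y)}{(X-Y)^2}\,,
\end{align*}
with $(X-Y)^2=(s-t)^2(s+t)^2$. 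I would keep $W(X)$, or equivalently $I(X):=W(X)-s=\frac{1}{2}\int_1^\Xi\frac{\rho(T)\,dT}{\sigma(s+\sigma)}$ (read off from (\ref{MS-solution})), as a bookkeeping variable distinct from the algebraic function $s$; the fixed data $W(Y)$ and $W'(Y)$ will be expressed through $\frac{1}{2}\int_1^\Xi\frac{\rho(T)\,dT}{\sigma(\sigma+t)}=W(Y)-t$ and its $Y$-derivative.

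On the left-hand side, $(W(X)+\nu)G(X|Y)$ is simply $W(X)\,G(X|Y)$. For the singular integral, a short computation gives the kernel
\begin{align*}
\frac{G(X|Y)-G(T|Y)}{X-T}=-\frac{4\tilde{\lambda}^2\,\bigl(s^2+s\sigma+\sigma^2+2ts+2t\sigma+t^2\bigr)}{s\,\sigma\,t\,(s+t)^2(\sigma+t)^2(s+\sigma)}\,,
\end{align*}
and a partial-fraction expansion in $\sigma$ (simple poles at $\sigma=0$ and $\sigma=-s$, a double pole at $\sigma=-t$) reduces $\frac{1}{2}\int_1^\Xi dT\,\rho(T)\frac{G(X|Y)-G(T|Y)}{X-T}$ to a combination, with coefficients rational in $s$ and $t$, of: the divergent block $\int_1^\Xi\frac{\rho(T)\,dT}{\sigma}$; the integral $I(X)$ (from the $\frac{1}{\sigma+s}$-pole, using $\frac{1}{\sigma+s}=\frac{1}{\sigma}-\frac{s}{\sigma(\sigma+s)}$); and $Y$-only constants from the $\frac{1}{\sigma+t}$ and $\frac{1}{(\sigma+t)^2}$ poles, which I rewrite through $W(Y)$ and $W'(Y)$ by the same device. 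A degree count shows that the three residues sum to zero, so the coefficient of $\int_1^\Xi\frac{\rho(T)\,dT}{\sigma}$ vanishes identically and the limit $\Xi\to\infty$ may be taken at the end.

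What remains is to match, as rational identities in $s$ and $t$, the coefficients of the independent quantities $I(X)$ (equivalently $W(X)$), $W(Y)$, $W'(Y)$ and $1$ on the two sides. The $I(X)$-coefficient reads $\frac{4\tilde{\lambda}^2}{st(s+t)^2}\bigl(1-\frac{s^2+t^2}{(s-t)^2}\bigr)=-\frac{8\tilde{\lambda}^2}{(s-t)^2(s+t)^2}$, which holds because $(s-t)^2-(s^2+t^2)=-2st$; the other matches are equally elementary and use only the integral identities read off from (\ref{MS-solution}) at $X$ and at $Y$ together with, for any residual constant, the normalisation (\ref{c-lambda}), i.e.\ $W(1)=1$. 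I expect the main obstacle to be nothing deeper than the volume of this bookkeeping: getting the partial-fraction coefficients and the building-block integral evaluations exactly right, and keeping the algebraic function $s$ cleanly separated from the transcendental $I(X)$ so that no term is counted twice. As a consistency check I would confirm that the $\mathcal{O}(\tilde{\lambda}^2)$ term of (\ref{G1+1}) reproduces the Feynman graph $\tilde{G}_{\Gamma_3}(x^1_1|x^2_1)$ from Section~\ref{prelude}.
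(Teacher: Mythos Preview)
Your strategy is the same as the paper's: plug the ansatz (\ref{G1+1}) into (\ref{int_eq_GY|X}) and verify it directly. Your kernel formula, your expression $-\tilde{\lambda}G(X,Y,Y)=-8\tilde{\lambda}^2\frac{W(X)-W(Y)-(X-Y)W'(Y)}{(X-Y)^2}$, the vanishing of the $\int\rho(T)\,dT/\sigma$ coefficient via $a+b+c=0$, and the $I(X)$-coefficient match are all correct.

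The paper organises the same computation more compactly. Instead of partial-fractioning in $\sigma$ and matching the coefficients of $I(X)$, $W(Y)$, $W'(Y)$, $1$ one by one, it observes that after the piece $(W(X)+\nu)G(X|Y)$ is peeled off, the remaining terms of the integral can be written as $\frac{\partial}{\partial Y}$ of a simpler rational-in-$\sqrt{\,\cdot\,}$ integral; and since $G(X,Y,Y)=8\tilde\lambda\,\partial_Y\frac{W(X)-W(Y)}{X-Y}$ is already a $Y$-derivative, the verification reduces to a single identity between the two primitives. This shortcuts precisely the bookkeeping you flag as the main obstacle. Also, the normalisation $W(1)=1$ is not needed in this proof --- only the structure (\ref{MS-solution}) of $W$ enters --- so you will find that your ``residual constant'' vanishes without invoking (\ref{c-lambda}). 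Your explicit remark on uniqueness is a point the paper leaves implicit.
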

\begin{proof}
We insert the \emph{ansatz} (\ref{G1+1}) into the following integral:
\begin{align}
&-\frac{1}{2} \int_1^\Xi dT \rho(T)\; \frac{G(X|Y)-G(T|Y)}{X-T}
\nonumber
\\
&= -\frac{2 \tilde{\lambda}^2}{\sqrt{Y+c}}
\int_1^\Xi dT \rho(T)\; \frac{\frac{1}{\sqrt{X+c}\cdot
(\sqrt{X+c}+\sqrt{Y+c})^2}-\frac{1}{\sqrt{T+c}\cdot
(\sqrt{T+c}+\sqrt{Y+c})^2}}{X-T}
\nonumber
\\
&= \frac{2 \tilde{\lambda}^2}{\sqrt{X+c}\cdot \sqrt{Y+c}\cdot
(\sqrt{X+c}+\sqrt{Y+c})^2}
\int_1^\Xi \frac{dT \rho(T)}{
\sqrt{T+c} \cdot (\sqrt{X+c}+\sqrt{T+c})}
\nonumber
\\
& +\frac{2 \tilde{\lambda}^2}{\sqrt{Y{+}c}\cdot 
(\sqrt{X{+}c}+\sqrt{Y{+}c})^2}
\int_1^\Xi
\frac{dT \rho(T)}{\sqrt{T{+}c}}
\frac{(\sqrt{X{+}c}+\sqrt{T{+}c}+2\sqrt{Y{+}c})}{
(\sqrt{X{+}c}+\sqrt{T{+}c})
(\sqrt{T{+}c}+\sqrt{Y{+}c})^2}
\nonumber
\\
&= (W(X)+\nu) G(X|Y)
- \frac{4 \tilde{\lambda}^2}{\sqrt{Z} \sqrt{Y+c}\cdot
(\sqrt{X+c}+\sqrt{Y+c})^2}
\nonumber
\\
& -4 \tilde{\lambda}^2 \frac{\partial}{\partial Y}
\int_1^\Xi
\frac{dT \rho(T)}{\sqrt{T{+}c}}
\frac{1}{(\sqrt{X{+}c}+\sqrt{T{+}c})(\sqrt{X{+}c}+\sqrt{Y{+}c})
(\sqrt{Y{+}c}+\sqrt{T{+}c})}.
\label{intG1+1}
\end{align}
We have inserted the formula for $W$ from (\ref{MS-solution}). 
On the other hand, from (\ref{GXN}),
\begin{align}
&G(X,Y,Y)=8\tilde{\lambda} \lim_{Y_1\to Y} \frac{\frac{W(X)-W(Y)}{(X-Y)}
-\frac{W(Y_1)-W(Y)}{(Y_1-Y)}}{X-Y_1}
=8\tilde{\lambda} \frac{\partial}{\partial Y} \frac{W(X)-W(Y)}{X-Y}
\nonumber
\\
&= 8\tilde{\lambda} \frac{\partial}{\partial Y}
\Big\{ \frac{1}{\sqrt{Z}(\sqrt{X+c}+\sqrt{Y+c})}
\nonumber
\\
&- \frac{1}{2} \int_1^\Xi \frac{dT\rho(T)}{\sqrt{T{+}c}}
\frac{1}{(\sqrt{X{+}c}+\sqrt{T{+}c})(\sqrt{X{+}c}+\sqrt{Y{+}c})
(\sqrt{Y{+}c}+\sqrt{T{+}c})}
\Big\}.
\end{align}
Adding $(-\tilde{\lambda})G(X,Y,Y)$ to (\ref{intG1+1})
yields $(W(X)+\nu)G(X|Y)$, as required by (\ref{int_eq_GY|X}).
\end{proof}
Note that (\ref{G1+1}) is essentially the same as 
\cite[eq.\ (93)]{Grosse:2005ig}.

\subsection{Solution for the $(1{+}1{+}1)$-point function}

We specify the problem (\ref{G1B-lim}) to the
$(1{+}1{+}1)$-point function
\begin{align}
(W(X)+\nu) &G(X|Y^2|Y^3)
+\frac{1}{2} \int_1^\Xi dT \rho(T)\; \frac{G(X|Y^2|Y^3)-G(T|Y^2|Y^3)}{X-T}
\nonumber
\\
&= -\tilde{\lambda} G(X,Y^2,Y^2|Y^3)-\tilde{\lambda} G(X,Y^3,Y^3|Y^2)
-2\tilde{\lambda} G(X|Y^2) G(X|Y^3). \label{int_eq_G(X|Y|Z}
\end{align}
We have with (\ref{GNB-final-X})
\begin{align}
& G(X,Y^2,Y^2|Y^3)=16\tilde{\lambda}^2 
\frac{\partial}{\partial Y^2} \frac{G(X|Y^3)-G(Y^2|Y^3)}{X-Y^2}
\nonumber
\\
&= -128\tilde{\lambda}^4 \frac{\partial^2}{\partial Y^2 \partial Y^3}
\Big\{\frac{\frac{1}{\sqrt{X+c}(\sqrt{X+c}+\sqrt{Y^3+c})}
-\frac{1}{\sqrt{Y^2+c}(\sqrt{Y^2+c}+\sqrt{Y^3+c})}
}{X-Y^2}\Big\}
\nonumber
\\
&=  \frac{\partial^2}{\partial Y^2 \partial Y^3}
\Big\{\frac{128\tilde{\lambda}^4(\sqrt{X{+}c}+\sqrt{Y^2{+}c}+\sqrt{Y^3{+}c})}{
\sqrt{X{+}c}\sqrt{Y^2{+}c}(\sqrt{X{+}c}+\sqrt{Y^3{+}c})
(\sqrt{X{+}c}+\sqrt{Y^2{+}c})
(\sqrt{Y^2{+}c}+\sqrt{Y^3{+}c})}\Big\}
\nonumber
\end{align}
and consequently
\begin{align}
& G(X,Y^2,Y^2|Y^3)+G(X,Y^3,Y^3|Y^2)+2G(X|Y^2)G(X|Y^3)
\nonumber
\\
&=
\frac{\partial^2}{\partial Y^2 \partial Y^3}
\Big\{\frac{128\tilde{\lambda}^4(\sqrt{X+c}+\sqrt{Y^2+c}+\sqrt{Y^3+c})}{
\sqrt{X+c}\sqrt{Y^2+c}\sqrt{Y^3+c}(\sqrt{X+c}+\sqrt{Y^3+c})
(\sqrt{X+c}+\sqrt{Y^2+c})}
\nonumber
\\
&+\frac{128\tilde{\lambda}^4 }{
\sqrt{X+c}^2(\sqrt{X+c}+\sqrt{Y^3+c})
(\sqrt{X+c}+\sqrt{Y^2+c})}\Big\}
\nonumber
\\
&= \frac{\partial^2}{\partial Y^2 \partial Y^3}
\Big\{\frac{128\tilde{\lambda}^4}{\sqrt{X+c}^2\sqrt{Y^2+c}\sqrt{Y^3+c}}\Big\}
=
\frac{32\tilde{\lambda}^4}{\sqrt{X+c}^2\sqrt{Y^2+c}^3\sqrt{Y^3+c}^3}.
\end{align}
Because of the factorisation the only reasonable ansatz is
\begin{align}
G(X|Y^2|Y^3)=\frac{(-32) \gamma \tilde{\lambda}^5}{
\sqrt{X+c}^3\sqrt{Y^2+c}^3\sqrt{Y^3+c}^3}.
\label{G1+1+1}
\end{align}
This gives as prefactor of 
$\frac{-32 \tilde{\lambda}^5}{\sqrt{Y^2+c}^3\sqrt{Y^3+c}^3}$ in 
(\ref{int_eq_G(X|Y|Z}) (with exchanged lhs and rhs 
and use of (\ref{MS-solution})):
\begin{align}
\frac{1}{X+c}&=\frac{\gamma}{\sqrt{Z}(X+c)}+\frac{\gamma}{2}
\int_1^\Xi \frac{dT \rho(T)}{\sqrt{(T+c)}\sqrt{(X+c)}^3
(\sqrt{X+c}+\sqrt{T+c})}
\nonumber
\\
&+ \frac{\gamma}{2}
\int_1^\Xi dT \rho(T) \frac{\frac{1}{\sqrt{(X+c)}^3}-
\frac{1}{\sqrt{(T+c)}^3}}{X-T}
\nonumber
\\
&=\frac{\gamma}{\sqrt{Z}(X+c)}-\frac{\gamma}{2(X+c)}
\int_1^\Xi \frac{dT \rho(T)}{\sqrt{(T+c)}^3}
\nonumber
\\
\Rightarrow\quad \gamma
&=\frac{1}{ \rho_0} ,\qquad \rho_0 := \frac{1}{\sqrt{Z}}-
\int_1^\Xi \frac{dT \rho(T)}{2\sqrt{(T+c)}^3}.
\label{gamma-3empty}
\end{align}
For linearly spaced eigenvalues $e(x)=x$ and $Z=1$, i.e.\ 
$\rho(T)=\frac{2\tilde{\lambda}^2}{\sqrt{T}}$, this
amounts to
$\rho_0 = 1-\frac{2\tilde{\lambda}^2}{\sqrt{1+c}(\sqrt{1+c}+1)}$.

\subsection{Solution for the $(1{+}\dots{+}1)$-point
 function for $B\geq 4$}

This is the most elaborate section of the paper. Over the next 6 pages
we prepare the proof of Theorem~\ref{thm:G1B}. Eq.\ 
(\ref{G1+1+1}) suggests that
all $(1{+}\dots{+}1)$-point functions with $B\geq 3$ factorise. We
 make the ansatz
\begin{align}
G(X^1|\dots|X^B)
&=
\frac{(-2\tilde{\lambda})^{3B-4}}{\rho_0}
\sum_{M=0}^{B-3} \gamma^M_B
\frac{d^{M}}{dt^{M}}\sqrt{X+c-2t}^{-3}_{\{1,\dots,B\}}\Big|_{t=0},
\label{goodansatz}
\\
\text{where } &\qquad
\sqrt{X+c-2t}^{-3}_{I}
:= \prod_{\beta \in I} \frac{1}{\sqrt{X^\beta+c-2t}^3}.
\nonumber
\end{align}
Our aim is to compute the coefficients $\gamma^M_B$ starting with 
$\gamma^M_3=\delta_{M,0}$. 
\begin{lem}
Assume (\ref{goodansatz}). Then
\begin{align}
&(W(X^1)+\nu)G(X^1|X\triangleleft^{\{2,\dots,B\}})
+\frac{1}{2} \int_1^\Xi dT \rho(T)
\frac{G(X^1|X\triangleleft^{\{2,\dots,B\}})
-G(T|X\triangleleft^{\{2,\dots,B\}})}{X^1-T}
\nonumber
\\
&=
\frac{(-2\tilde{\lambda})^{3B-4}}{\rho_0} \sum_{M=0}^{B-3} \gamma_B^M
\sum_{j=0}^M \binom{M}{j}
\sum_{l=0}^{j}
\frac{(2j{+}1)!! \rho_{j-l}}{\sqrt{X^1+c}^{2l+2}}
\frac{d^{M-j}}{dt^{M-j}} \sqrt{X{+}c-2t}_{\{2,\dots,B\}}^{-3}\Big|_{t=0},
\label{W-ansatz-1B}
\\
&\text{where}\qquad\qquad
\rho_l:=\frac{\delta_{l,0}}{\sqrt{Z}} -\frac{1}{2}
\int_1^\Xi \frac{dT \rho(T)}{\sqrt{T+c}^{3+2l}}.
\label{def-rhol}
\end{align}
\end{lem}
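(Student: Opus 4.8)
The plan is to insert the ansatz (\ref{goodansatz}) directly into the left-hand side of (\ref{W-ansatz-1B}) and reduce everything to the single basic integral that already appeared in the computation of $W(X)$ via (\ref{MS-solution})--(\ref{MS-c}). Write $f_t(X):=\sqrt{X+c-2t}^{-3}$ and note that $G(X^1|X\triangleleft^{\{2,\dots,B\}})$ is, up to the constant $\tfrac{(-2\tilde\lambda)^{3B-4}}{\rho_0}\gamma_B^M$ and the $t$-derivative $\tfrac{d^M}{dt^M}|_{t=0}$, simply $f_t(X^1)\cdot \sqrt{X+c-2t}^{-3}_{\{2,\dots,B\}}$. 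Since $X^2,\dots,X^B$ and $t$ are inert under the operator $\big(W(X^1)+\nu\big)\,\cdot\,+\tfrac12\int_1^\Xi dT\,\rho(T)\tfrac{(\cdot)-(\cdot)|_{X^1\to T}}{X^1-T}$, the whole problem factors: I only need to evaluate that operator applied to $f_t(X^1)=\sqrt{X^1+c-2t}^{-3}$, treating $t$ as a spectator parameter, and then reinstate $\tfrac{d^M}{dt^M}|_{t=0}$ and the product over $\{2,\dots,B\}$ at the end.

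The core step is therefore the identity, for each fixed $s:=c-2t$,
\begin{align}
&\big(W(X)+\nu\big)\frac{1}{\sqrt{X+s}^3}
+\frac{1}{2}\int_1^\Xi dT\,\rho(T)\,
\frac{\frac{1}{\sqrt{X+s}^3}-\frac{1}{\sqrt{T+s}^3}}{X-T}
\nonumber\\
&\qquad=\sum_{l\ge 0}\frac{(\text{something})_l}{\sqrt{X+c}^{\,2l+2}}\,\cdot\,(\text{coefficients built from }\rho_{\bullet}),
\nonumber
\end{align}
which I expect to prove by the same two moves used for (\ref{MS-c}): substitute $W$ from (\ref{MS-solution}), use $\frac{\sqrt{X+c}-\sqrt{T+c}}{X-T}=\frac{1}{\sqrt{X+c}+\sqrt{T+c}}$ together with a partial-fraction / symmetrisation step to factor the emerging double integral, and recognise $\int_1^\Xi \frac{dT\,\rho(T)}{\sqrt{T+c}^{\,3+2l}}$ as $\tfrac{2}{\ \ }(\tfrac{\delta_{l,0}}{\sqrt Z}-\rho_l)$ via (\ref{def-rhol}). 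The bookkeeping that turns nested factors $\frac{1}{\sqrt{X+c}+\sqrt{T+c}}$ into a finite sum of powers $\frac{1}{\sqrt{X+c}^{\,2l+2}}$ is exactly where the combinatorial weights $\binom{M}{j}$, $(2j+1)!!$ and the shifted measures $\rho_{j-l}$ are generated: expanding $\frac{1}{\sqrt{X+c-2t}^3}$ around $t=0$ produces the $(2j+1)!!\,t^j$ pattern, and applying $\frac{d^M}{dt^M}|_{t=0}$ through a Leibniz rule splits $M=j+(M-j)$, giving the $\binom Mj$ and leaving $\frac{d^{M-j}}{dt^{M-j}}\sqrt{X+c-2t}^{-3}_{\{2,\dots,B\}}|_{t=0}$ untouched on the boundary factors.

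The main obstacle I anticipate is matching the $t$-dependence cleanly: the operator acts at $X^1$ with its \emph{own} shift $c-2t$ hidden inside $f_t$, while $W(X^1)$ in (\ref{MS-solution}) carries the \emph{unshifted} $c$, so the cancellation that gave a clean $-(W(X)+\nu)^2$ in (\ref{MS-c}) no longer happens and instead leaves a residual finite sum in powers of $\frac{1}{\sqrt{X^1+c}}$ — I will need to track that residue carefully and check it organises into precisely $\sum_{l=0}^{j}\frac{(2j+1)!!\,\rho_{j-l}}{\sqrt{X^1+c}^{\,2l+2}}$ after the $t$-Taylor expansion. A secondary technical point is justifying the interchange of $\frac{d^M}{dt^M}$ with the $T$-integral (uniform convergence for $\rho(T)\sim T^{-\alpha}$, $\alpha>0$, and $c$ in the allowed domain, as already noted after (\ref{c-lambda})). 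Once the single-variable identity is established, reinstating the product over $\{2,\dots,B\}$ and the outer $\sum_M \gamma_B^M \frac{d^M}{dt^M}|_{t=0}$ is purely formal and yields (\ref{W-ansatz-1B}) as stated.
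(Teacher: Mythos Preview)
Your plan contains the right ingredients --- the Leibniz split, the explicit form of $W$ from (\ref{MS-solution}), and the identification of the moments $\rho_l$ --- and would ultimately succeed, but the order you choose makes the computation harder than necessary, and the mechanism you invoke from (\ref{MS-c}) is not the one actually in play.

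You propose to apply the operator to $f_t(X^1)=\sqrt{X^1+c-2t}^{-3}$ with $t$ held as a spectator, and only afterwards Taylor-expand in $t$. As you yourself note, this creates a shift mismatch between the $c$ in $W(X^1)$ and the $c-2t$ in $f_t$, so the clean cancellation you hope to import from (\ref{MS-c}) is lost and a residue must be tracked by hand. The paper sidesteps this entirely by performing the Leibniz split \emph{first}: since $\frac{d^j}{dt^j}\sqrt{X^1+c-2t}^{-3}\big|_{t=0}=(2j{+}1)!!\,\sqrt{X^1+c}^{-(3+2j)}$, the operator only ever acts on the pure power $\sqrt{X^1+c}^{-(3+2j)}$ with the \emph{same} $c$ that appears in $W$. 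The evaluation then reduces to a single elementary step: apply the algebraic identity (\ref{quot-sqrt}) to the difference quotient, observe that its $l=0$ term cancels exactly against the integral piece of $W(X^1)$ from (\ref{MS-solution}), and pair the remaining $2j+2$ terms two-by-two to remove the factor $(\sqrt{X^1+c}+\sqrt{T+c})$ from the denominator. What survives is $\sum_{l=0}^{j}\rho_l\,\sqrt{X^1+c}^{-2(j-l)-2}$, which after the reflection $l\mapsto j-l$ is the claimed inner sum. No double-integral symmetrisation in the style of (\ref{MS-c}) is needed here; that device was specific to the quadratic structure $W^2$ and does not enter this lemma.
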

\begin{proof}
We distribute the $t$-derivatives by Leibniz rule.
The prefactor of
$\frac{(-2\tilde{\lambda})^{3B-4}}{\rho_0} \gamma_B^M (2j+1)!!\binom{M}{j}
\frac{d^{M-j}}{dt^{M-j}} \sqrt{X+c-2t}^{-3}_{\{2,\dots,B\}}\big|_{t=0}$
under the sum over $j,M$ is
\begin{align}
&(W(X^1)+\nu) \frac{1}{\sqrt{X^1+c}^{3+2j}}
+\frac{1}{2} \int_1^\Xi dT \rho(T)
\frac{\frac{1}{\sqrt{X^1+c}^{3+2j}}-\frac{1}{\sqrt{T+c}^{3+2j}}}{X^1-T}
\nonumber
\\
&= \frac{1}{\sqrt{Z}\sqrt{X^1+c}^{2j+2}}
-\frac{1}{2} \int_1^\Xi dT \rho(T)
\frac{\sum_{l=1}^{2j+2}
\sqrt{X^1+c}^l \sqrt{T+c}^{2j+2-l}}{
\sqrt{X^1+c}^{3+2j}\sqrt{T+c}^{3+2j}
(\sqrt{X^1+c}+\sqrt{T+c})
}
\nonumber
\\
&= \frac{1}{\sqrt{Z}\sqrt{X^1+c}^{2j+2}}
-\frac{1}{2} \sum_{l=0}^{j}
\frac{1}{\sqrt{X^1+c}^{2(j-l)+2}}
\int_1^\Xi \frac{dT \rho(T)}{\sqrt{T+c}^{3+2l}}
= \sum_{l=0}^{j} \frac{\rho_l}{\sqrt{X^1+c}^{2(j-l)+2}},
\end{align}
with $\rho_l$ defined in (\ref{def-rhol}). The step from the first to
second line relies on
\begin{align}
\frac{\frac{1}{\sqrt{X^1+c}^{3+2j}}-
\frac{1}{\sqrt{X^2+c}^{3+2j}}}{X^1-X^2}
= -
\sum_{l=0}^{2j+2} \frac{\sqrt{X^1+c}^{l}\sqrt{X^2+c}^{2j+2-l}}{
\sqrt{X^1+c}^{3+2j}
\sqrt{X^2+c}^{3+2j}
(\sqrt{X^1+c}+
\sqrt{X^2+c})}\label{quot-sqrt}
\end{align}
with compensation of $l=0$ with the integral in $W(X^1)$ according to
(\ref{MS-solution}). After a reflection $l\mapsto j-l$
we arrive at (\ref{W-ansatz-1B}).
\end{proof}

\begin{lem}
Assume (\ref{goodansatz}). Then the first term on the rhs of (\ref{G1B-lim})
and the $|J|=1$ and $|J|=B {-}2$ contributions to the last term 
combine to
\begin{align}
&-\tilde{\lambda}
\sum_{\beta=2}^B
\Big(G(X^1,X^\beta,X^\beta|
X\triangleleft^{\{2,\stackrel{\beta}{\check{\dots\dots}},B\}})+
2G(X^1|X^\beta)G(X^1|X\triangleleft^{\{2,\stackrel{\beta}{\check{\dots\dots}},B\}})
\Big)
\nonumber
\\*
&=\frac{(-2\tilde{\lambda})^{3B-4} }{\rho_0}
\sum_{M=0}^{B-4}   \sum_{j=0}^{M} \sum_{l=0}^{j+1}
\frac{\gamma_{B-1}^{M}}{
\sqrt{X^1{+}c}^{4+2j-2l}}
\binom{M}{j}
\frac{(2j+1)!!(2l+1)}{(2l+1)!!}
\nonumber
\\*
& \times
\sum_{\beta=2}^B
\Big(\frac{d^l}{dt^l} \frac{1}{\sqrt{X^\beta+c-2t}^3}\Big)
\Big(\frac{d^{M-j}}{dt^{M-j}}
\sqrt{X{+}c{-}2t}_{\{2,\stackrel{\beta}{\check{\dots}},B\}}^{-3}\Big)
\Big|_{t=0}.
\label{G3+1_2+B}
\end{align}
\end{lem}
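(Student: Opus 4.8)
The plan is to reconstruct the left-hand side of (\ref{G3+1_2+B}) from the right-hand side of (\ref{G1B-lim}) and then insert the ansatz (\ref{goodansatz}). In the last sum of (\ref{G1B-lim}) the summand $G(X^1|X\triangleleft^J)\,G(X^1|X\triangleleft^{\{2,\dots,B\}\setminus J})$ is invariant under $J\mapsto\{2,\dots,B\}\setminus J$; since $B\geq 4$ the conditions $|J|=1$ and $|J|=B-2$ are distinct but interchanged by this involution, so those two contributions together equal $-2\tilde{\lambda}\sum_{\beta=2}^B G(X^1|X^\beta)\,G(X^1|X\triangleleft^I)$ with $I:=\{2,\dots,B\}\setminus\{\beta\}$. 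Adding the first term $-\tilde{\lambda}\sum_{\beta=2}^B G(X^1,X^\beta,X^\beta|X\triangleleft^I)$ of (\ref{G1B-lim}) gives exactly the left-hand side of (\ref{G3+1_2+B}).

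Next I would apply (\ref{GNB-final-X}) together with the diagonal limit in the first boundary (as carried out for the $(1{+}1{+}1)$-point function) to get $G(X^1,X^\beta,X^\beta|X\triangleleft^I)=16\tilde{\lambda}^2\,\partial_{X^\beta}\frac{G(X^1|X\triangleleft^I)-G(X^\beta|X\triangleleft^I)}{X^1-X^\beta}$. Both $(B{-}1)$-boundary functions are given by (\ref{goodansatz}) and share the factor $Q_I(t):=\sqrt{X+c-2t}^{-3}_I$, which is killed by $\partial_{X^\beta}$ because $\beta\notin I$. Writing $g(Z):=\sqrt{Z+c-2t}^{-3}$ and $h:=\frac{g(X^1)-g(X^\beta)}{X^1-X^\beta}$, inserting (\ref{G1+1}) for $G(X^1|X^\beta)$, and distributing the $t$-derivatives of (\ref{goodansatz}) by Leibniz (with $\binom{M}{j}$ placing $j$ of them on the $\{X^1,X^\beta\}$-block and $M{-}j$ on $Q_I$), the combination $G(X^1,X^\beta,X^\beta|X\triangleleft^I)+2G(X^1|X^\beta)G(X^1|X\triangleleft^I)$ becomes
\[
\frac{(-2\tilde{\lambda})^{3B-7}}{\rho_0}\sum_{M}\gamma^M_{B-1}\sum_{j}\binom{M}{j}
\Big[16\tilde{\lambda}^2(\partial_{X^\beta}h)^{(j)}+2G(X^1|X^\beta)\,g(X^1)^{(j)}\Big]_{t=0}\,
Q_I^{(M-j)}\big|_{t=0},
\]
where $(\cdot)^{(j)}$ denotes $\frac{d^j}{dt^j}$.

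The heart of the proof is the per-$j$ identity
\[
16\tilde{\lambda}^2(\partial_{X^\beta}h)^{(j)}\big|_{t=0}+2G(X^1|X^\beta)\,g(X^1)^{(j)}\big|_{t=0}
=8\tilde{\lambda}^2\sum_{l=0}^{j+1}\frac{(2j+1)!!\,(2l+1)}{\sqrt{X^1+c}^{4+2j-2l}\;\sqrt{X^\beta+c}^{2l+3}}.
\]
Because $X^1-X^\beta$ is $t$-independent and $\frac{d^j}{dt^j}g(Z)\big|_{t=0}=(2j+1)!!\sqrt{Z+c}^{-(2j+3)}$, the first term on the left equals $16\tilde{\lambda}^2(2j+1)!!\,\partial_{X^\beta}$ applied at $t=0$ to $\frac{\sqrt{X^1+c}^{-(2j+3)}-\sqrt{X^\beta+c}^{-(2j+3)}}{X^1-X^\beta}$. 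I would expand this difference quotient by (\ref{quot-sqrt}) and then repeatedly peel off $(u{+}w)$-denominators, with $u:=\sqrt{X^1+c}$ and $w:=\sqrt{X^\beta+c}$, using $\frac{1}{w^{l+1}(u+w)}=\sum_{k=1}^{l+1}\frac{(-1)^{k-1}}{u^k w^{l+2-k}}+\frac{(-1)^{l+1}}{u^{l+1}(u+w)}$; a parity count of the resulting alternating sums collapses the quotient to $-\sum_{l=0}^{j+1}u^{-(2j+4-2l)}w^{-(2l+1)}+\frac{u^{-(2j+4)}}{u+w}$. A direct check with (\ref{G1+1}) gives $-16\tilde{\lambda}^2(2j+1)!!\,\partial_{X^\beta}\frac{u^{-(2j+4)}}{u+w}=2G(X^1|X^\beta)(2j+1)!!\,u^{-(2j+3)}=2G(X^1|X^\beta)g(X^1)^{(j)}\big|_{t=0}$, so the residual exactly cancels the product term; applying $16\tilde{\lambda}^2(2j+1)!!\,\partial_{X^\beta}$ to the monomial part (using $\partial_{X^\beta}w^{-(2l+1)}=-\tfrac{2l+1}{2}w^{-(2l+3)}$) yields the right-hand side. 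I expect this parity bookkeeping to be the only genuine obstacle; everything else is substitution.

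Finally I would reassemble: writing $(2l+1)\sqrt{X^\beta+c}^{-(2l+3)}=\frac{2l+1}{(2l+1)!!}\frac{d^l}{dt^l}\sqrt{X^\beta+c-2t}^{-3}\big|_{t=0}$ turns the coefficient into $\frac{(2j+1)!!(2l+1)}{(2l+1)!!}$, while $Q_I^{(M-j)}\big|_{t=0}=\frac{d^{M-j}}{dt^{M-j}}\sqrt{X+c-2t}^{-3}_{\{2,\stackrel{\beta}{\check{\dots}},B\}}\big|_{t=0}$ and $-\tilde{\lambda}\cdot\frac{(-2\tilde{\lambda})^{3B-7}}{\rho_0}\cdot 8\tilde{\lambda}^2=\frac{(-2\tilde{\lambda})^{3B-4}}{\rho_0}$. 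Summing over $\beta$, $M$, $j$ and $l$ then reproduces (\ref{G3+1_2+B}).
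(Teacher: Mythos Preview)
Your proof is correct and follows essentially the same route as the paper. Both arguments use $G(X^1,X^\beta,X^\beta|X\triangleleft^I)=16\tilde{\lambda}^2\partial_{X^\beta}\frac{G(X^1|X\triangleleft^I)-G(X^\beta|X\triangleleft^I)}{X^1-X^\beta}$, insert (\ref{goodansatz}) with Leibniz, and then exploit the same underlying algebraic identity $\sum_{l=0}^{2j+3}u^{l}w^{2j+3-l}=(u+w)\sum_{l=0}^{j+1}u^{2l}w^{2j+2-2l}$ for $u=\sqrt{X^1+c}$, $w=\sqrt{X^\beta+c}$.

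The only organisational difference is in how that identity is reached. The paper keeps both contributions under the $\partial_{X^\beta}$-derivative, brings them to the common denominator $u^{4+2j}w^{3+2j}(u+w)$, obtains the full numerator $\sum_{l=0}^{2j+3}u^{l}w^{2j+3-l}$, factors out $(u+w)$, and only then differentiates. You instead decompose the difference quotient itself as $-\sum_{l=0}^{j+1}u^{-(2j+4-2l)}w^{-(2l+1)}+\frac{u^{-(2j+4)}}{u+w}$ via iterated partial fractions, and observe that the residual $(u+w)^{-1}$ piece, after $\partial_{X^\beta}$, cancels the $2G(X^1|X^\beta)$ product term. Your ``parity count'' is exactly the telescoping that produces the paper's factorisation seen from the other side; it can be made fully explicit by swapping the double sum and noting that $\sum_{l=m-1}^{2j+2}(-1)^{l+1-m}$ equals $1$ for odd $m$ and $0$ for even $m$. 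Either way one lands on the stated formula, and the final rewriting $(2l+1)w^{-(2l+3)}=\frac{2l+1}{(2l+1)!!}\frac{d^l}{dt^l}(X^\beta+c-2t)^{-3/2}\big|_{t=0}$ and the constant check $-\tilde{\lambda}\cdot 8\tilde{\lambda}^2\cdot(-2\tilde{\lambda})^{3B-7}=(-2\tilde{\lambda})^{3B-4}$ are identical.
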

\begin{proof}
It suffices to take $\beta=2$ and then to permute. From 
(\ref{GNB-final-X}) we have
\begin{align}
& G(X^1,X^2,X^2|X\triangleleft^{\{3,\dots,B\}})
=16\tilde{\lambda}^2
\frac{\partial}{\partial X^2} \frac{G(X^1|X\triangleleft^{\{3,\dots,B\}})
-G(X^2|X\triangleleft^{\{3,\dots,B\}})}{X^1-X^2}.
\end{align}
We insert (\ref{goodansatz}) for $B\mapsto B-1$. With Leibniz rule and
(\ref{quot-sqrt}) one has
\begin{align}
& -\tilde{\lambda} G(X^1,X^2,X^2|X\triangleleft^{\{3,\dots,B\}})
\nonumber
\\
&=
\sum_{M=0}^{B-4} \sum_{j=0}^{M} \binom{M}{j} \sum_{l=0}^{2j+2}
\frac{\partial}{\partial X^2}
\Big\{\frac{\gamma_{B-1}^{M} (2j+1)!!
\sqrt{X^1+c}^{l}
\sqrt{X^2+c}^{2j+3-(l+1)}}{
\sqrt{X^1+c}^{3+2j}
\sqrt{X^2+c}^{3+2j}
(\sqrt{X^1+c}+
\sqrt{X^2+c})}\Big\}
\nonumber
\\
& \times 16\tilde{\lambda}^3 \cdot \frac{(-2 \tilde{\lambda})^{3B-7}}{\rho_0}
\frac{d^{M-j}}{dt^{M-j}} \sqrt{X{+}c-2t}_{\{3,\dots,B\}}^{-3}\Big|_{t=0}.
\label{G3+1B}
\end{align}
The other term reads  with (\ref{G1+1}) as well as
(\ref{goodansatz}) for $B\mapsto B-1$
\begin{align}
& -2\tilde{\lambda} G(X^1|X^2)G(X^1|X\triangleleft^{\{3,\dots,B\}})
\label{G1+1G1B}
\\
&=
\sum_{M=0}^{B-4} \sum_{j=0}^{M} \binom{M}{j}
\frac{(2j+1)!!\gamma_{B-1}^M}{\rho_0 
 \sqrt{X^1+c}^{3+2j}}
\frac{\partial}{\partial X^2} \Big\{
\frac{(-2\tilde{\lambda})\cdot (-8\tilde{\lambda}^2)\cdot 
(-2\tilde{\lambda})^{3B-7}}{
\sqrt{X^1{+}c}(\sqrt{X^1{+}c}+\sqrt{X^2{+}c})} \Big\} 
\nonumber
\\
& \times \frac{d^{M-j}}{dt^{M-j}} \sqrt{X{+}c-2t}_{\{3,\dots,B\}}^{-3}\Big|_{t=0}.
\nonumber
\end{align}
Bringing (\ref{G3+1B})+(\ref{G1+1G1B}) to common $X^1$-$X^2$ denominator
$\frac{1}{
\sqrt{X^1+c}^{4+2j}\sqrt{X^2+c}^{3+2j}}$
(before $X^2$-differentiation) produces
a total numerator
\[\sum_{l=0}^{2j+3}
\sqrt{X^1{+}c}^{l}\sqrt{X^2{+}c}^{3+2j-l}
=\sum_{l=0}^{j+1}(\sqrt{X^1{+}c}+
\sqrt{X^2{+}c}) \sqrt{X^1{+}c}^{2l}\sqrt{X^2{+}c}^{2j+2-2l}.
\]
After cancellation and differentiation with respect to $X^2$ we have
\begin{align}
& -\tilde{\lambda}\big(G(X^1,X^2,X^2|X\triangleleft^{\{3,\dots,B\}})
+ 2G(X^1|X^2)G(X^1|X\triangleleft^{\{3,\dots,B\}})\big)
\\
&= \frac{(-2\tilde{\lambda})^{3B-4}}{\rho_0} 
\sum_{M=0}^{B-4} \sum_{j=0}^{M} \binom{M}{j}
\gamma^M_{B-1}
\sum_{l=0}^{j+1}
\frac{(2j+1)!! (1+2l)}{
\sqrt{X^1{+}c}^{4+2j-2l}\sqrt{X^2{+}c}^{2l+3}}
\nonumber
\\
&\times \frac{d^{M-j}}{dt^{M-j}}
\sqrt{X{+}c-2t}_{\{3,\dots,B\}}^{-3}\Big|_{t=0}.
\nonumber
\end{align}
We write $\frac{1}{\sqrt{X^2+c}^{2l+3}}=\frac{1}{(2l+1)!!}
\frac{d^l}{dt^l} \frac{1}{\sqrt{X^2{+}c-2t}^3}$, repeat these steps for
all $X^{\beta\geq 2}$ and sum over $\beta$, thus establishing
the formula.
\end{proof}

The remaining terms with $2\leq |J|\leq B-3$ in the last term of 
(\ref{G1B-lim}) are straightforward:
\begin{align}
&
-\tilde{\lambda}
\sum_{J \subset \{2,\dots,B\},|J|=p} (G(X^1|X\triangleleft^{J})
G(X^1|X\triangleleft^{ \{2,\dots,B\}\setminus J})
\label{GkGB-k}
\\
&= \frac{1}{2}\cdot
\frac{(-2\tilde{\lambda})^{3B-4}}{\rho_0^2}
\sum_{M'=0}^{p+1-3}\sum_{M''=0}^{B-p-3}
\sum_{j'=0}^{M'}\sum_{j''=0}^{M''} \binom{M'}{j'}\binom{M''}{j''}
\frac{(2j'{+}1)!!(2j''{+}1)!! \gamma^{M'}_{p+1}
\gamma^{M''}_{B-p}}{
\sqrt{X^1+c}^{6+2j'+2j''}}
\nonumber
\\
&\qquad \times
\sum_{J \subset \{2,\dots,B\},|J|=p}
\Big(\frac{d^{M'-j'}}{dt^{M'-j'}}
\sqrt{X{+}c-2t}^{-3}_{J}\Big)
\Big(\frac{d^{M''-j''}}{dt^{M''-j''}}
\sqrt{X{+}c{-}2t}^{-3}_{\{2,\dots,B\}\setminus J}\Big)\Big|_{t=0}.
\nonumber
\end{align}

Symbolically we are left with the problem
$\big[\eqref{W-ansatz-1B}
=\eqref{G3+1_2+B}
+ \sum_{p=2}^{B-3}\eqref{GkGB-k}\big]$
to be solved for $\gamma_B^M$, provided the ansatz is
consistent. By shifting indices we select the common coefficient of 
$\frac{(-2\tilde{\lambda})^{3B-4}}{\rho_0 \sqrt{X_1+c}^{6+2l}} 
\prod_{\beta=2}^B \big(\frac{1}{m_\beta!}\frac{d^{m_\beta}}{dt^{m_\beta}}
\frac{1}{\sqrt{X^\beta+c-2t}^{3}} \big)\big|_{t=0}$ 
in this equation:
\begin{lem}
Assume (\ref{goodansatz}). Then (\ref{G1B-lim}) amounts to the
following system of equations for integers $l\geq -2$ and 
$(B{-}1)$-tuples $\mathcal{M}=(m_2,\dots,m_B)$ with
$M:=m_2+\dots+m_B$:
\begin{align}
&\sum_{j=0}^{B-5-M-l} (M+2+l+j)!\gamma_B^{M+2+l+j}
\frac{(2j{+}2l{+}5)!! \rho_{j}}{(l+2+j)!}
\label{eq:gammaMB-all}
\\
&=
(M+l+1)!\gamma_{B-1}^{M+l+1}
\sum_{\beta=2}^B \frac{(2l+2m_\beta+3)!!(2m_\beta+1)m_\beta!}{
(l+m_\beta+1)!(2m_\beta+1)!!}
\nonumber
\\
&+\frac{1}{2\rho_0}\sum_{l'+l''=l}
\frac{(2l'{+}1)!!(2l''{+}1)!! }{l'!l''!} \!\!\!
\sum_{\mathcal{M}'\cup \mathcal{M}''=\mathcal{M}} \!\!
(M'+l')!\gamma^{M'+l'}_{\#(\mathcal{M}')+1}
(M''+l'')!\gamma^{M''+l''}_{\#(\mathcal{M}'')+1}.
\nonumber
\end{align}
The sum in the last line (which contributes only for $l\geq 0$) 
is over all partitions of 
$\mathcal{M}$ into two subtuples $\mathcal{M}',
\mathcal{M}''$ of $\#(\mathcal{M}')$ and 
$\#(\mathcal{M}'')$ elements which sum up to $M'$ and $M''$, 
respectively. The initial condition is $\gamma^M_3=\delta_{M,0}$.
\end{lem}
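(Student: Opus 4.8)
The three preceding lemmas together with equation~\eqref{GkGB-k} have already rewritten the left-hand side and the two groups of terms on the right-hand side of \eqref{G1B-lim}, evaluated on the ansatz \eqref{goodansatz}, so that \eqref{G1B-lim} is now the single functional identity $\big[\,\eqref{W-ansatz-1B}=\eqref{G3+1_2+B}+\sum_{p=2}^{B-3}\eqref{GkGB-k}\,\big]$. The plan is to expand all three contributions in the common system of functions
\[
\frac{1}{\sqrt{X^1+c}^{\,6+2l}}\;
\prod_{\beta=2}^{B}\frac{1}{m_\beta!}\,\frac{d^{m_\beta}}{dt^{m_\beta}}
\frac{1}{\sqrt{X^\beta+c-2t}^{\,3}}\Big|_{t=0}\;,
\qquad l\geq -2,\quad m_\beta\geq 0\;,
\]
and to read off the scalar equation obeyed by each coefficient. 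Since
$\frac{d^{m}}{dt^{m}}\sqrt{X+c-2t}^{-3}\big|_{t=0}=(2m{+}1)!!\,\sqrt{X+c}^{-3-2m}$, these functions are constant multiples of pairwise distinct monomials in the algebraically independent quantities $\sqrt{X^i+c}$, hence linearly independent; so equality of the three sides as functions is equivalent to equality of the coefficients, and those equalities are exactly \eqref{eq:gammaMB-all}. The initial condition comes from specialising \eqref{goodansatz} to $B=3$, where it reduces to $G(X^1|X^2|X^3)=\frac{(-2\tilde\lambda)^5}{\rho_0}\gamma^0_3\prod_{\beta=1}^3\sqrt{X^\beta+c}^{-3}$; comparison with the explicit $(1{+}1{+}1)$-solution \eqref{G1+1+1} (where $\gamma=1/\rho_0$ by \eqref{gamma-3empty}) forces $\gamma^0_3=1$, and since $B-3=0$ leaves no room for $M>0$, $\gamma^M_3=\delta_{M,0}$.

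Concretely, I would first apply the multivariate Leibniz rule to every collective derivative $\frac{d^{M-j}}{dt^{M-j}}\sqrt{X+c-2t}^{-3}_{I}$ occurring in \eqref{W-ansatz-1B}, \eqref{G3+1_2+B} and \eqref{GkGB-k}, writing it as $\sum (M{-}j)!\prod_{\beta\in I}\frac1{m_\beta!}\frac{d^{m_\beta}}{dt^{m_\beta}}(\cdot)$ over compositions with $\sum_{\beta\in I}m_\beta=M{-}j$, and likewise rewrite the isolated $X^\beta$-derivatives in \eqref{G3+1_2+B} and the two collective derivatives in \eqref{GkGB-k} in the normalised basis above (which costs a factor $m_\beta!$ per isolated derivative). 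For \eqref{W-ansatz-1B} one then matches its power $2l{+}2$ of $\frac1{\sqrt{X^1+c}}$ with the power $6{+}2l$ of the statement by the substitution $l\mapsto l{+}2$, accompanied by $j\mapsto j{+}l{+}2$ and the relabelling $M\mapsto M{+}l{+}2{+}j$ of the upper index of $\gamma_B$; using $\binom{M}{j}(M{-}j)!=M!/j!$, the left-hand side of \eqref{eq:gammaMB-all}, with its $(2j{+}2l{+}5)!!$, its $\rho_j$, and its quotient $\tfrac{(M+2+l+j)!}{(l+2+j)!}$, drops out, and $0\le M{+}l{+}2{+}j\le B{-}3$ becomes the range $0\le j\le B{-}5{-}M{-}l$. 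The same manipulation of \eqref{G3+1_2+B} produces the $\gamma_{B-1}$ line: the outer $\sum_{\beta=2}^B$ is exactly the sum over which boundary carries the doubled leg $(X^\beta,X^\beta)$, and the $\beta$-dependent factor $\tfrac{(2l+2m_\beta+3)!!(2m_\beta+1)m_\beta!}{(l+m_\beta+1)!(2m_\beta+1)!!}$ assembles from the $m_\beta!$ of the basis normalisation, the surviving double factorial, the $\tfrac{2l+1}{(2l+1)!!}\to\tfrac{2m_\beta+1}{(2m_\beta+1)!!}$ factor of \eqref{G3+1_2+B}, and the factorial quotient. Finally $\sum_{p=2}^{B-3}\eqref{GkGB-k}$ produces the quadratic line: the inner sum over subsets $J\subset\{2,\dots,B\}$ of size $p$, summed over $p$ from $2$ to $B{-}3$, becomes the partition sum $\sum_{\mathcal M'\cup\mathcal M''=\mathcal M}$ over subtuples with $\#(\mathcal M'),\#(\mathcal M'')\geq 2$, the overall $\tfrac1{\rho_0^2}$ of \eqref{GkGB-k} against the $\tfrac1{\rho_0}$ of the ansatz leaves a single $\tfrac1{\rho_0}$, the explicit $\tfrac12$ is the familiar $J\leftrightarrow\{2,\dots,B\}\setminus J$ double-count, and $\gamma_{p+1}$, $\gamma_{B-p}$ turn into $\gamma_{\#(\mathcal M')+1}$, $\gamma_{\#(\mathcal M'')+1}$.

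The main obstacle is the combinatorial bookkeeping itself: on each side there are four nested sums (over $M$; over the Leibniz split $j$; over the partial-fraction index $l$; and over the boundary $\beta$ or the subset $J$), and the three changes of variables above have to be performed in lockstep so that the double factorials $(2m{+}1)!!$, binomials and factorial prefactors --- which on the right-hand side arrive in a rather different packaging than on the left --- line up term by term. A secondary point, already guaranteed by the way \eqref{W-ansatz-1B}--\eqref{GkGB-k} were derived (the partial-fraction identity \eqref{quot-sqrt} and the cancellation of every $\sqrt{X^1+c}+\sqrt{X^\beta+c}$ denominator before the $X^\beta$-differentiations), is that all three expressions genuinely live in the clean monomial system above with no residual mixed denominators; once this is in hand the coefficient comparison is mechanical, and the degenerate ranges ($l\geq-2$ on the left, $l\geq 0$ for the quadratic term, emptiness of the quadratic sum unless $B\geq 5$) are automatic.
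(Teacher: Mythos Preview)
Your proposal is correct and follows exactly the paper's approach: the paper's own proof is the single sentence preceding the lemma, ``By shifting indices we select the common coefficient of $\frac{(-2\tilde{\lambda})^{3B-4}}{\rho_0 \sqrt{X_1+c}^{6+2l}} \prod_{\beta=2}^B \big(\frac{1}{m_\beta!}\frac{d^{m_\beta}}{dt^{m_\beta}} \frac{1}{\sqrt{X^\beta+c-2t}^{3}} \big)\big|_{t=0}$ in this equation,'' and your elaboration of the Leibniz expansions and index shifts (in particular $l\mapsto l{+}2$, $j\mapsto j{+}l{+}2$, $M\mapsto M{+}l{+}2{+}j$ on the left and the analogous substitutions on the right) is precisely the bookkeeping that this sentence summarises.
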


For the solution we have to introduce:
\begin{defn}
The Bell polynomials\footnote{For an overview about Bell polynomials, see 
\url{https://en.wikipedia.org/wiki/Bell_polynomials} or 
\url{https://www.encyclopediaofmath.org/index.php/Bell_polynomial}. 
Many identities are proved in \cite{Birmajer} and references therein.} $B_{n,k}$ are defined by $B_{0,k}(\{~\})=\delta_{k,0}$ and 
$B_{n,k}(\{x_j\}_{j=1}^{n-k+1})
=\sum \frac{n!}{j_{1}!j_{2}!\cdots j_{n-k+1}!}
(\frac{x_{1}}{1!})^{j_{1}}
(\frac{x_{2}}{2!})^{j_{2}}\cdots (\frac{x_{n-k+1}}{
(n-k+1)!})^{j_{n-k+1}}$,
for $n{\geq} 1$, where the sum is over non-negative integers 
$j_1,\dots, j_{n-k+1}$
with $j_1 +j_2+\dots+  j_{n-k+1}=k$ and
$1j_1 +2j_2+\dots+  (n-k-1)j_{n-k+1}=n$.
\end{defn}
\begin{lem}
The Bell polynomials satisfy the identity
\begin{align}
\sum_{j=1}^{n-k}
(\alpha j{+}\beta ) \binom{n}{j} x_j B_{n-j,k}\big(x_1,\dots,x_{n-j-k+1}\big)
= (\alpha n{+}\beta(k{+}1)) B_{n,k+1} \big(x_1,\dots,x_{n-k}\big) .
\label{Bell-1}
\end{align}
\end{lem}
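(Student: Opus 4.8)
The plan is to prove the Bell--polynomial identity (\ref{Bell-1}) by means of the exponential generating function of the partial Bell polynomials. Write $\phi(t):=\sum_{m\geq 1}x_m\frac{t^m}{m!}$ as a formal power series. For each fixed $n$ only finitely many of the $x_m$ occur, so everything below is a polynomial identity in the variables $x_m$ and a formal parameter $t$, and no convergence question arises. The starting point is the classical identity $\sum_{n\geq k}B_{n,k}(x_1,x_2,\dots)\frac{t^n}{n!}=\frac{\phi(t)^k}{k!}$, which one obtains by expanding the $k$-th power multinomially and comparing with the definition given above. First I would multiply this relation by $\phi(t)$ and read off the coefficient of $\frac{t^n}{n!}$ in
\[
\phi(t)\cdot\frac{\phi(t)^k}{k!}=\frac{\phi(t)^{k+1}}{k!}=(k+1)\,\frac{\phi(t)^{k+1}}{(k+1)!},
\]
using the binomial convolution of exponential generating functions. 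This yields the ``$\beta$-part''
\[
\sum_{j=1}^{n-k}\binom{n}{j}x_j\,B_{n-j,k}\big(x_1,\dots,x_{n-j-k+1}\big)=(k+1)\,B_{n,k+1}\big(x_1,\dots,x_{n-k}\big),
\]
the summation range being forced by the vanishing of $B_{n-j,k}$ when $n-j<k$.

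Next I would extract the ``$\alpha$-part'' by differentiation. The exponential generating function carrying the weight $jx_j$ in degree $j$ is $\sum_{j\geq 1}(jx_j)\frac{t^j}{j!}=t\phi'(t)$, so binomial convolution shows that $\sum_j j\binom{n}{j}x_j B_{n-j,k}$ equals the coefficient of $\frac{t^n}{n!}$ in $t\phi'(t)\cdot\frac{\phi(t)^k}{k!}=t\,\frac{d}{dt}\big(\frac{\phi(t)^{k+1}}{(k+1)!}\big)=t\,\frac{d}{dt}\sum_{m}B_{m,k+1}\frac{t^m}{m!}=\sum_m m\,B_{m,k+1}\frac{t^m}{m!}$. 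Hence $\sum_{j=1}^{n-k}j\binom{n}{j}x_j B_{n-j,k}=n\,B_{n,k+1}$. Forming the linear combination $\alpha\cdot(\text{$\alpha$-part})+\beta\cdot(\text{$\beta$-part})$ then gives $\alpha n\,B_{n,k+1}+\beta(k+1)B_{n,k+1}=(\alpha n+\beta(k+1))B_{n,k+1}$, which is exactly (\ref{Bell-1}).

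This argument is essentially mechanical; the only points deserving care are the bookkeeping of the summation bounds and the degenerate cases, which I would verify directly: for $k=0$ both sides collapse to $(\alpha n+\beta)x_n$ using $B_{m,0}=\delta_{m,0}$ and $B_{n,1}=x_n$, and for $n=k$ both sides vanish, since $B_{n,k+1}=0$ and the left sum is empty. An alternative would avoid generating functions and prove the two building blocks by induction on $k$ from the recursion $B_{n,k}=\sum_j\binom{n-1}{j-1}x_j B_{n-j,k-1}$, but the generating-function route is shorter and makes the factor $(\alpha n+\beta(k+1))$ transparent. Since both building blocks are also recorded in the literature on Bell polynomials (see \cite{Birmajer}), one could simply cite them, but I would keep the two-line derivation for self-containedness.
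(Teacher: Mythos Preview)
Your proof is correct. The two building blocks
\[
\sum_{j=1}^{n-k}\binom{n}{j}x_jB_{n-j,k}=(k+1)B_{n,k+1},\qquad
\sum_{j=1}^{n-k}j\binom{n}{j}x_jB_{n-j,k}=nB_{n,k+1}
\]
follow exactly as you say from the generating function $\sum_{n\geq k}B_{n,k}\frac{t^n}{n!}=\frac{\phi(t)^k}{k!}$, and their linear combination gives (\ref{Bell-1}).

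The paper takes a different route: rather than split into an $\alpha$-part and a $\beta$-part, it invokes \cite[Lemma~8]{Birmajer}, which expresses $\binom{n}{m}B_{m,1}(\{x\})B_{n-m,k-1}(\{x\})$ as a sum over partitions $v\in\pi(n,k)$ with weight $W_{m,1}(v)=v_m$. Summing $\sum_m(\alpha m+\beta)$ against this weight produces $\sum_m(\alpha m v_m+\beta v_m)=\alpha n+\beta k$ directly from the constraints $\sum_m mv_m=n$, $\sum_m v_m=k$, and the factor $(\alpha n+\beta k)$ pulls out in front of the monomial sum $B_{n,k}$. So the paper works at the monomial level and handles arbitrary $\alpha,\beta$ in one stroke, at the cost of relying on an external lemma; your argument is self-contained and makes the origin of the two summands $\alpha n$ and $\beta(k+1)$ transparent (one from $t\frac{d}{dt}$, the other from multiplication by $\phi$), at the cost of proving two identities instead of one. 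Either approach is perfectly adequate here.
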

\begin{proof}
This follows from \cite[Lemma 8]{Birmajer},
\begin{align*}
\binom{n}{m} B_{m,l}(\{x\})B_{n-m,k-l}(\{x\})
=\sum_{v\in \pi(n,k)  } \frac{n!}{v_1!v_2!\cdots}
W_{m,l}(v) \Big(\frac{x_1}{1!}\Big)^{v_1}\Big(\frac{x_2}{2!}\Big)^{v_2}
\cdots,
\end{align*}
where the $\pi(n,k)$ is the set of $v_1,v_2,\dots \geq 0$ with
$1v_1+2v_2+\dots =n$ and $v_1+v_2+\dots =k$.
We only need $l=1$ where the general definition of
$W_{m,l}(v)$ given in \cite[eq.\ (2)]{Birmajer}
reduces to $W_{m,1}(v)=v_m$. Moreover, $B_{m,1}(\{x\})=x_m$. Therefore,
\begin{align*}
&\sum_{m=1}^{n-k+1} \!\! (\alpha m{+}\beta)\binom{n}{m} x_m B_{n-m,k-1}(\{x\})
=\sum_{m=1}^{n-k+1} \!\! (\alpha m{+}\beta)\binom{n}{m}
B_{m,1}(\{x\})B_{n-m,k-1}(\{x\})
\nonumber
\\
&=\sum_{v\in \pi(n,k)  } \frac{n!}{v_1!v_2!\cdots}
\Big(\sum_{m=1}^{n-k+1} \!\! (\alpha mv_m{+}\beta v_m)\Big)
\Big(\frac{x_1}{1!}\Big)^{v_1}\Big(\frac{x_2}{2!}\Big)^{v_2} \cdots
=(\alpha n+\beta k) B_{n,k}(\{x\}).
\end{align*}
A shift in $k$ yields the result.
\end{proof}

\begin{prop}
The solution of (\ref{eq:gammaMB-all}) for $l=-2$ and $l=-1$, 
\begin{align}
\sum_{j=0}^{B-3-M} \binom{M+j}{j} (2j+1)!!
\rho_j \gamma^{M+j}_B
&=\gamma^{M-1}_{B-1} , \label{eq:gammaMB}
\\
\sum_{j=0}^{B-4-M} \binom{M+1+j}{j+1}  (2j+3)!!
\rho_j  \gamma^{M+1+j}_B
&= (2M+B-1) \gamma^M_{B-1},
\label{eq:gammaMB-0}
\end{align}
where $M\in \{0,\dots,B-3\}$ and under initial condition 
$\gamma_3^M=\delta_{M,0}$, is
\begin{align}
\gamma^M_B= \frac{1}{\rho_0^{B-3}} \sum_{K=0}^{B-3-M}
\frac{(B-3+K)!}{(B-3-M)!M!} B_{B-3-M,K}\Big(
\Big\{-\frac{(2r+1)!!\rho_r}{(r+1)\rho_0} \Big\}_{r=1}^{B-2-M-K}\Big).
\label{solution:gammaMB}
\end{align}
\end{prop}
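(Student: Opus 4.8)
The plan is to verify that the claimed closed form \eqref{solution:gammaMB} satisfies the two recursions \eqref{eq:gammaMB} and \eqref{eq:gammaMB-0} together with the initial condition, which by uniqueness of the solution of this triangular linear system (the leading coefficient $(2\cdot0+1)!!\rho_0\gamma^{M+0}_B$ in \eqref{eq:gammaMB} is invertible since $\rho_0\neq0$ near $\tilde\lambda=0$) will establish the proposition. First I would check the initial condition $B=3$: then $B-3-M=0$ forces $M=0$ and only $K=0$ contributes, giving $\gamma^0_3 = B_{0,0}(\{\,\})=1$, while for $M\neq 0$ the sum over $K$ is empty, so $\gamma^M_3=\delta_{M,0}$ as required. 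It is also worth recording that the factor $\rho_0^{-(B-3)}$ and the normalisation $(-2\tilde\lambda)^{3B-4}$ in the ansatz are consistent with the way $\rho_0$ and $\tilde\lambda$ enter \eqref{eq:gammaMB-all}, so the combinatorial content is entirely in the $\gamma$'s.

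For the inductive step I would fix $B$, assume the formula for $B-1$, and treat \eqref{eq:gammaMB} and \eqref{eq:gammaMB-0} as the two conditions pinning down $\{\gamma^M_B\}_{M=0}^{B-3}$. Substituting \eqref{solution:gammaMB} into the left-hand side of \eqref{eq:gammaMB}, one gets a double sum over $j$ (from the recursion) and $K$ (from the Bell-polynomial expansion of $\gamma^{M+j}_B$); reorganising, the inner sum over $j$ together with the binomial $\binom{M+j}{j}(2j+1)!!\rho_j$ and the Bell polynomial $B_{B-3-M-j,K}$ is precisely of the shape appearing in the identity \eqref{Bell-1} with the weight $(\alpha j+\beta)$. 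Choosing $\alpha,\beta$ appropriately — for \eqref{eq:gammaMB} the weight is "constant in $j$" after the $(r+1)$ in the denominator of the Bell argument is absorbed, so effectively $\alpha=0$, while for \eqref{eq:gammaMB-0} the $\binom{M+1+j}{j+1}(2j+3)!!$ produces a genuine linear-in-$j$ weight, matching $(\alpha j+\beta)$ with $\alpha\neq 0$ — the identity collapses the $j$-sum into a single Bell polynomial $B_{B-3-M,K+1}$ or a combination thereof, and after shifting $K\mapsto K-1$ one recognises the right-hand side as $\gamma^{M-1}_{B-1}$ (resp.\ $(2M+B-1)\gamma^M_{B-1}$), using the inductive formula for $B-1$. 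The appearance of the argument $-\frac{(2r+1)!!\rho_r}{(r+1)\rho_0}$ is exactly engineered so that the $(r+1)$ cancels the $(j+1)$-type denominators generated by the recursion's $(2j+1)!!$ and $(2j+3)!!$ factors; verifying this bookkeeping, including the $\rho_0$-powers and the relation $(2j+2l+5)!!=(2j+3)!!\prod_{\text{...}}$ type double-factorial manipulations, is routine but must be done carefully.

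The main obstacle I anticipate is \emph{not} these two $l=-2,-1$ identities but rather showing that the solution of the full system \eqref{eq:gammaMB-all} for \emph{all} $l\geq -2$ is forced by just the $l=-2,-1$ cases — i.e.\ that the higher-$l$ equations, which also involve the quadratic $\gamma\gamma$ term and the $\sum_{\mathcal M'\cup\mathcal M''}$ partition sum, are automatically consistent with \eqref{solution:gammaMB}. This is the place where one genuinely needs the recursive structure of Bell polynomials together with their multiplicativity (the $\binom{n}{m}B_{m,l}B_{n-m,k-l}$ composition formula from \cite{Birmajer} quoted in the proof of \eqref{Bell-1}), since the product $\gamma^{M'+l'}_{\#(\mathcal M')+1}\gamma^{M''+l''}_{\#(\mathcal M'')+1}$ summed over partitions must reassemble into a single Bell polynomial of higher index. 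I would handle this by an induction on $l$ as well: having fixed $\gamma^M_B$ from $l=-2$, I verify $l=-1$, and then show that each successive $l\geq 0$ equation is a consequence of the lower ones plus \eqref{Bell-1} and the composition identity — effectively, the $\gamma$'s are over-determined but the Bell-polynomial identities guarantee compatibility. Expressing $\sum_{r}\frac{(2r+1)!!\rho_r}{(r+1)\rho_0}t^r$ as (a derivative of) a generating function and interpreting \eqref{solution:gammaMB} via the exponential formula / Faà di Bruno would be the cleanest route, reducing the whole proposition to a single functional identity; I expect the write-up to combine that generating-function viewpoint for motivation with the explicit Bell-polynomial identity \eqref{Bell-1} for the actual verification.
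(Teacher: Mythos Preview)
Your approach to the $l=-2$ and $l=-1$ cases is essentially the paper's: both arguments substitute the closed form into the two recursions and reduce via the Bell identity \eqref{Bell-1}, with induction on $B$. The paper organises the $l=-2$ verification as a \emph{double} induction (on $B$ and on $s:=B-3-M$), rearranging \eqref{eq:gammaMB} to solve for $\gamma^M_B$ in terms of the already-known $\gamma^{M-1}_{B-1}$ and $\gamma^{M+j}_B$ ($j\geq 1$), then applies \eqref{Bell-1}; for $l=-1$ it uses \eqref{Bell-1} with $\alpha=2,\beta=3$. Your direct-substitution framing reaches the same computation.

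Where you go astray is scope. The proposition claims \emph{only} that \eqref{solution:gammaMB} solves the $l=-2$ and $l=-1$ instances of \eqref{eq:gammaMB-all}; it says nothing about $l\geq 0$. Your ``main obstacle'' paragraph --- showing that the higher-$l$ equations with the quadratic $\gamma\gamma$ partition sum are consistent with the same formula --- is not part of this proposition at all. In the paper that consistency is isolated as a separate statement (Conjecture~\ref{conj}) and is \emph{not} proved; it is only checked by computer algebra for many values of the parameters. So you should drop the entire $l\geq 0$ discussion from your proof of this proposition, and certainly not claim an inductive argument on $l$ that the paper itself could not carry out.
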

\begin{proof}
We start with (\ref{eq:gammaMB}). 
The formula correctly captures the case $M=B-3$ where only $j=0$
contributes in (\ref{eq:gammaMB}), giving the solution
$\gamma^{B-3}_B=\frac{1}{\rho_0^{B-3}}$. We proceed by twofold
induction in increasing $B$ and increasing $s:=B-3-M$.
We rearrange (\ref{eq:gammaMB}) as an equation for $\gamma^M_B$.
All other terms either have less $B$ (namely $\gamma^{M-1}_{B-1}$)
or less $s$ (namely $\gamma^{M+j}_{B}$, $j\geq 1$) so that the induction
hypothesis applies. We have with
$x_r:=-\frac{(2r+1)!!\rho_r}{(r+1)\rho_0} $
and $s:=B-3-M$ in (\ref{eq:gammaMB}):
\begin{align}
\gamma^M_B &= \frac{1}{\rho_0^{B-3}} \sum_{K=0}^{s}
\frac{(B-4+K)!}{s!(M-1)!} B_{s,K}\big(
\{x_r\}_{r=1}^{s-K+1}\big)
\tag{*}
\\
&+ \frac{1}{\rho_0^{B-3}}
\sum_{j=1}^{s}
 \binom{M{+}j}{j} (j+1) x_j
\sum_{K=0}^{s-j}
\frac{(B-3+K)!}{(s{-}j)!(M{+}j)!} B_{s-j,K}\big(
\{x_r\}_{r=1}^{s-j-K+1}\big).
\tag{**}
\end{align}
We exchange the summation order
$\sum_{j=1}^{s}
\sum_{K=0}^{s-j}= \sum_{K=0}^{s-1} \sum_{j=1}^{s-K}$ and use
(\ref{Bell-1}) to express the last line as
\begin{align*}
(**)=\frac{1}{\rho_0^{B-3}} \sum_{K=0}^{s-1}
\frac{(B-3+K)!}{s!M!}
(s+K+1)
B_{s,K+1}\big(
\{x_r\}_{r=1}^{s-K}\big).
\end{align*}
We shift the index $K+1\mapsto K$ and redistribute the resulting
$(s+K)=(B-3+K)-M$: Its part $(-M)$ cancels the rhs of the first line (*),
and $(B-3+K)$ increases the factorial to the claimed formula
(\ref{solution:gammaMB}).

\smallskip 

We check consistency with (\ref{eq:gammaMB-0}).
For $M=B-4$ the lhs restricts to $j=0$, and both sides evaluate to 
$\frac{3(M+1)}{\rho_0^{B-4}}$. For $M \geq B-5$
we express the lhs in terms of $x_r:=-\frac{(2r+1)!!}{(r+1)}
\frac{\rho_r}{\rho_0}$ and insert 
(\ref{solution:gammaMB}). Then the $j\geq 1$ part of the lhs 
becomes after exchanging the $K$-$j$ summation
\begin{align*}
\eqref{eq:gammaMB-0}^{\text{lhs}}_{j\geq 1} 
&= {-}\!\!\!\!\! \sum_{K=0}^{B-5-M} 
 \sum_{j=1}^{B-4-M-K} \!\!\!\!\! \frac{(B{-}3{+}K)!}{(B{-}4{-}M)!M!}
\binom{B{-}4{-}M}{j}
(2j{+}3) x_j B_{B{-}4{-}M{-}j,K}(\{x_r\})
\\
&= {-}\!\!\!\!\!
\sum_{K=0}^{B-5-M} \!\!\! 
\frac{(B{-}3{+}K)!}{(B{-}4{-}M)!M!}
\underbrace{(2(B{-}M{-}4)+3(K{+}1))}_{=3(B-2+K) -(B+2M-1)}
B_{B-4-M,K+1}(\{x_r\}),
\end{align*}
where (\ref{Bell-1}) has been used for $\alpha=2$, $\beta=3$.
Its part $3(B-2+K)$, after a shift $K+1\mapsto K$, evaluates to 
$-3(M+1)\gamma^{M+1}_B$. The remainder gives 
$\frac{(2M+B-1)}{\rho_0} \gamma^M_{B-1}$, so that 
(\ref{eq:gammaMB-0}) is true.
\end{proof}

Remains (\ref{eq:gammaMB-all}) for $l\geq 0$. Because 
of permutation symmetry we can assume $\mathcal{M}
=(\underbrace{0,\dots,0}_{n_0} ,\dots ,
\underbrace{p,\dots,p}_{n_p})$ 
with $n_0+\dots+n_p=B-1=:N$ and 
$0n_0+1n_1+\dots+pn_p= M$. Then the sum over subtupels 
$\mathcal{M}'
=(\underbrace{0,\dots,0}_{n_0'} ,\dots ,
\underbrace{p,\dots,p}_{n_p'})$ amounts to the sum over 
$0\leq n_i'\leq n_i$ with multiplicity 
$\binom{n_0}{n_0'}\cdots \binom{n_p}{n_p'}$. Therefore
(\ref{solution:gammaMB}) solves 
(\ref{eq:gammaMB-all}) for $l\geq 0$ iff the following is true:
\begin{conj}
\label{conj}
For any $l,n_0,\dots,n_p \in \mathbb{N}$,
the Bell polynomials satisfy the identity 
(with 
$n_0+\dots+n_p=N$ and 
$0n_0+1n_1+\dots+pn_p=M$)
\begin{align}
& \frac{(2l{+}5)!!}{(l+2)!}
\sum_{K\geq 0}
(N{-}2{+}K)! \frac{B_{N{-}M{-}l{-}4,K}(\{x_r\}) }{(N{-}M{-}l{-}4)!}
\label{Bell-l}
\\[-1ex]
&-\sum_{K\geq 0} (N{-}3{+}K)!\frac{B_{N{-}M{-}l{-}4,K}(\{x_r\}) }{(N{-}M{-}l{-}4)!}
\sum_{i=0}^{p} n_i \frac{(2l+2i+3)!!(2i+1) i!}{
(2i+1)!!(l+i+1)!}
\nonumber
\\
&=\sum_{j\geq 1} \sum_{K\geq 0} (N{-}2{+}K)!
\frac{(2j{+}2l{+}5)!!(j{+}1)!}{(2j{+}1)!!(j{+}l{+}2)!}
\cdot \frac{x_j}{j!} \cdot \frac{B_{N{-}M{-}l{-}j{-}4,K}(\{x_r\}) }{
(N{-}M{-}l{-}j{-}4)!}
\nonumber
\\
&+\frac{1}{2}\sum_{l'=0}^{l} \sum_{n_0'=0}^{n_0} \dots 
\sum_{n_p'=0}^{n_p} 
\frac{(2l'{+}1)!!(2l''{+}1)!! }{l'!\, l''!} 
\binom{n_0}{n_0'}\cdots \binom{n_p}{n_p'}
\nonumber
\\
& 
\times \!\!\! \sum_{K',K''\geq 0} \!\!
(N'{-}2{+}K')!
\frac{B_{N'{-}M'{-}l'{-}2,K'}(\{x_r\})}{
(N'{-}M'{-}l'{-}2)!}
(N''{-}2{+}K'')!
\frac{
B_{N''{-}M''{-}l''{-}2,K''}(\{x_r\})}{(N''{-}M''{-}l''{-}2)!},
\nonumber
\end{align}
where $N':=
n_0'+\dots+n_p'$ and 
$M':=0n_0'+1n_1'+\dots+pn_p'$ as well as $l'':=l-l'$,
$N'':=N-N'$ and $M'':=M-M'$.
The sums over $j,K,K',K''$ are restricted to the range of non-trivial 
Bell polynomials and inverse Gamma functions.
\end{conj}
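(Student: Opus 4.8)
The plan is to reduce Conjecture~\ref{conj} to an identity between explicit generating functions. The first step is to repackage the solution (\ref{solution:gammaMB}): with $F(t):=\sum_{r\geq 1}x_r\frac{t^r}{r!}$ where $x_r=-\frac{(2r+1)!!\rho_r}{(r+1)\rho_0}$, the exponential generating function $\sum_{n\geq K}B_{n,K}(\{x_r\})\frac{t^n}{n!}=\frac{F(t)^K}{K!}$ gives $\sum_K s^K B_{n,K}=n!\,[t^n]e^{sF(t)}$, and the Borel representation $(B-3+K)!=\int_0^\infty s^{B-3+K}e^{-s}\,ds$ then absorbs the factorial weights in (\ref{solution:gammaMB}) into the integral $\int_0^\infty s^{B-3}e^{-s(1-F(t))}\,ds$. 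Carrying this out yields the compact form $\gamma^M_B=\frac{(B-3)!}{M!\,\rho_0^{B-3}}\,[t^{B-3-M}]\,(1-F(t))^{-(B-2)}$, which one can sanity-check against $\gamma^{B-3}_B=\rho_0^{-(B-3)}$, $\gamma^M_3=\delta_{M,0}$, and against the already proven (\ref{eq:gammaMB}), (\ref{eq:gammaMB-0}) — the latter become the Taylor-coefficient shadows of $\frac{d}{dt}(1-F)^{-k}=k(1-F)^{-k-1}F'$.

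I would then substitute this closed form into both sides of (\ref{Bell-l}). Each Bell-polynomial block appearing there is again of the type $\sum_K(\text{factorial})\,B_{n,K}$, so the same Borel trick collapses it to a coefficient of a power of $(1-F)^{-1}$; crucially, the multi-index structure $\mathcal{M}=(0^{n_0},\dots,p^{n_p})$ and the split $\mathcal{M}=\mathcal{M}'\cup\mathcal{M}''$ with multiplicities $\binom{n_0}{n_0'}\cdots\binom{n_p}{n_p'}$ are exactly what turn the last two lines of (\ref{Bell-l}) into the ordinary product of two such generating functions — the generating-function avatar of the factorisation $G(X^1|X\triangleleft^J)\,G(X^1|X\triangleleft^{\{2,\dots,B\}\setminus J})$. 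The $l$-dependent double-factorial prefactors $\frac{(2l+5)!!}{(l+2)!}$, $\frac{(2j+2l+5)!!(j+1)!}{(2j+1)!!(j+l+2)!}$, $\frac{(2l+2i+3)!!(2i+1)i!}{(2i+1)!!(l+i+1)!}$ and $\frac{(2l'+1)!!(2l''+1)!!}{l'!\,l''!}$ should each be recognised, starting from $\sum_{l\geq 0}\frac{(2l+1)!!}{l!}u^l=(1-2u)^{-3/2}$ and its shifts of the half-integer exponent, as Taylor coefficients in an auxiliary variable $u$ conjugate to $l$; this $u$ will play the role of $\frac{1}{X^1+c}$, matching the weight $\frac{1}{\sqrt{X^1+c}^{6+2l}}$ against which the preceding Lemma extracts coefficients. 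With these replacements (\ref{Bell-l}) becomes a closed algebraic identity in the two formal variables $t$ and $u$, built from $F(t)$, $F'(t)$, $(1-F(t))^{-1}$, the integral defining $W$ in (\ref{MS-solution}) and $\rho_0$; one clears denominators and verifies it, ideally by showing it coincides term-by-term with the generating-function form of (\ref{G1B-lim}) after using (\ref{MS-c}) and (\ref{W-ansatz-1B}), so that no input beyond the Makeenko--Semenoff equation of Proposition~\ref{prop_MS} is needed.

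A more combinatorial alternative, closer in spirit to the Lemma preparing (\ref{Bell-1}), is to prove (\ref{Bell-l}), equivalently (\ref{eq:gammaMB-all}), by induction on $l$: the cases $l=-2,-1$ are (\ref{eq:gammaMB}) and (\ref{eq:gammaMB-0}), already established, and one derives a three-term recursion expressing the $l$-equation through the $(l-1)$- and $(l-2)$-equations by applying (\ref{Bell-1}) with $(\alpha,\beta)=(2,2l+\mathrm{const})$ to peel off one unit of $l$ at a time — the coefficient form of the quadratic relation satisfied by $\sqrt{X+c}$, i.e.\ of the passage from (\ref{Int_eq_field1}) to (\ref{MS-c}). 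In either route, the main obstacle is the quadratic convolution term on the right of (\ref{Bell-l}): matching it forces the generating function assembled from the $\gamma^M_B$ to obey a quadratic functional equation, and proving that equation is tantamount to re-deriving the nonlinear integral equation at the level of generating functions. The delicate part is the bookkeeping — keeping the shifted Bell indices $N-M-l-4$, $N'-M'-l'-2$, $N''-M''-l''-2$ mutually consistent, respecting the convention that inverse Gamma functions vanish outside the admissible ranges, and correctly tracking the binomial multiplicities in the $\mathcal{M}'\cup\mathcal{M}''$ sum — since this is where sign or range errors would most easily creep in.
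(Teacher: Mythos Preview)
The paper gives no proof of this statement: it is stated as a \emph{Conjecture}, the authors write ``We have checked (\ref{Bell-l}) with a computer algebra program for many different $l,p,n_i$'' and add ``Of course a direct proof will be necessary.'' Theorem~\ref{thm:G1B} is then stated explicitly ``provided that Conjecture~\ref{conj} is true.'' So you are not reproducing the paper's argument --- you are attempting something the paper leaves open.

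Your first step, the Borel/generating-function repackaging of (\ref{solution:gammaMB}) into $\gamma^M_B=\frac{(B-3)!}{M!\,\rho_0^{B-3}}\,[t^{B-3-M}]\,(1-F(t))^{-(B-2)}$, is exactly what the paper does in the paragraph following the conjecture (the computation ending in the displayed formula just before Theorem~\ref{thm:G1B}). So that part is solid and already in the text; the paper simply does not feed it back into (\ref{eq:gammaMB-all}) to attempt a proof.

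Where your proposal remains a sketch is precisely the part that goes beyond the paper. Two points deserve attention. First, the convolution $\sum_{\mathcal{M}'\cup\mathcal{M}''=\mathcal{M}}$ with binomial multiplicities is a convolution in the occupation numbers $(n_0,\dots,n_p)$, not directly in $M$ or $N$; turning it into a product of two generating functions in a single variable $t$ requires introducing a separate marker for each $n_i$ (or equivalently passing to a multivariate generating function in $p{+}1$ variables), and you should say explicitly which variables carry which index before claiming the product structure. Second, you correctly flag that the quadratic term forces a quadratic functional equation on the assembled generating function, and that this is ``tantamount to re-deriving the nonlinear integral equation.'' That is the honest crux: the identity (\ref{Bell-l}) is the coefficient shadow of (\ref{G1B-lim}) once (\ref{goodansatz}) is inserted, and a proof must at some point use the specific form (\ref{MS-solution}) of $W$ through (\ref{MS-c}) --- your plan acknowledges this but does not yet isolate which algebraic identity in $(t,u)$ is the target, nor verify it. Until that identity is written down and checked, the proposal is a plausible strategy rather than a proof.
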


\noindent
We have checked (\ref{Bell-l}) with a computer algebra program 
for many different $l,p,n_i$.
Of course a direct
proof will be necessary\footnote{Other identities found during this
  work include 
for any $m,p,n_2,\dots,n_p \in \mathbb{N}$:
\begin{align*}
&
\sum_{n_i'+n_i''=n_i} \;\sum_{k'+k''=m}
\frac{(2k'{+}1)!!(2k''{+}1)!!
(k'+\sum_{j=2}^p jn'_j)!
(k''+\sum_{j=2}^p jn''_j)!}{
k'!k''!
(2{+}k'+\sum_{j=2}^p (j {-}1)n'_j)!
(2{+}k''+\sum_{j=2}^p (j {-}1)n''_j)!}
\prod_{j=2}^p \binom{n_j}{n_j'}
\nonumber
\\
&
=\frac{2 \cdot (m{+}1{+}\sum_{j=2}^p jn_j)!} {(m{+}4{+}\sum_{j=2}^p (j {-}1)n_j)!}
\Big\{\frac{(2m{+}3)!!}{m!}
+
\sum_{j=2}^{p} n_j\Big(
\frac{(2m{+}3)!!}{(m{+}2)! } ((m{+}3)j {+}m{+}2)
-\frac{j!(2j{+}2m{+}3)!!}{(j{+}m{+}1)!(2j{-}1)!! }
\Big) \Big\}.
\end{align*}
}.

The generating function of Bell polynomials is
\begin{align}
\exp\Big(u\sum_{j=1}^\infty \frac{x_jt^j}{j!}\Big) =\sum_{n,k\geq 0}
u^k \frac{t^n}{n!} B_{n,k}\big(\{x_r\}_{r=1}^{n-k+1}\big).
\end{align}
Multiplying by $e^{-u}u^{B-3}$, integrating over $u\in \mathbb{R}_+$
and differentiating with respect to $t$ gives an alternative
realisation of (\ref{solution:gammaMB}), where we also insert
the definition (\ref{def-rhol}) of $\rho_r$. With the series
$\sum_{j=1}^\infty \frac{(2j+1)!!}{(j+1)!}y^j=
\frac{1}{y} \sum_{k=2}^\infty \binom{-\frac{1}{2}}{k} (-2y)^{k}
=\frac{1}{y} (\frac{1}{\sqrt{1-2y}}-1-y)
=
\frac{2}{(1 + \sqrt{1-2y})\sqrt{1-2y}}-1$, below with $y=\frac{t}{T+c}$,
we arrive at
\begin{align}
&\rho_0^{B-3}M!(B-3-M)!\gamma^M_B
\nonumber
\\
&= \int_0^\infty du \;e^{-u} u^{B-3}
\frac{d^{B-3-M}}{dt^{B-3-M}}
\exp\Big(\frac{u}{\rho_0} \sum_{r=1}^\infty \frac{t^r(2r+1)!!}{(r+1)!}
(-\rho_r)\Big)\Big|_{t=0}
\nonumber
\\
&= \int_0^\infty du \;e^{-u} u^{B-3}
\nonumber
\\
&\times \frac{d^{B-3-M}}{dt^{B-3-M}}
\exp\Big(\frac{u}{2\rho_0} \int_1^\Xi \frac{dT
  \rho(T)}{\sqrt{T{+}c}^3} \Big(\frac{2(T{+}c)}{(\sqrt{T{+}c}+\sqrt{T{+}c{-}2t})
\sqrt{T{+}c{-}2t}}-1\Big)
\Big)\Big|_{t=0}
\nonumber
\\
&=
\rho_0^{B-2}
\frac{d^{B-3-M}}{dt^{B-3-M}}
\frac{(B-3)!}{\big(\frac{1}{\sqrt{Z}}-
\int_1^\Xi \frac{dT
  \rho(T)}{\sqrt{T+c}} \frac{1}{(\sqrt{T+c}+\sqrt{T+c-2t})
\sqrt{T+c-2t}}
\big)^{B-2}}\Big|_{t=0}.
\end{align}

Combined with the ansatz (\ref{goodansatz}) and with $Z=1$ in 2
dimensions we have proved
(provided that Conjecture~\ref{conj} is true):
\begin{thm} 
\label{thm:G1B}
The $(1+\dots+1)$-point function 
with $B\geq 3$ boundary
  components of the $\Phi^3_2$ matricial QFT-model has the solution
\begin{align}
G(X^1|\dots|X^B)
&=
(-2\tilde{\lambda})^{3B-4}
\frac{d^{B-3}}{dt^{B-3}}
\Bigg(\frac{\frac{1}{\sqrt{X^1+c-2t}^3}
\cdots \frac{1}{\sqrt{X^B+c-2t}^3}
}{\big(1-
\int_1^\infty \frac{dT
  \rho(T)}{\sqrt{T+c}} \frac{1}{(\sqrt{T+c}+\sqrt{T+c-2t})
\sqrt{T+c-2t}}
\big)^{B-2}}
\Bigg)\Bigg|_{t=0}.
\label{G1B-thm}
\end{align}
\end{thm}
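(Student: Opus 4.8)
The plan is to prove that the closed form (\ref{G1B-thm}) solves the hierarchy of affine integral equations (\ref{G1B-lim}), by induction on the number $B$ of boundary components. The first step is to pass from the closed form to the power-series ansatz (\ref{goodansatz}): the generating-function computation at the end of the section shows that (\ref{goodansatz}) with coefficients $\gamma_B^M$ given by (\ref{solution:gammaMB}) is exactly the Taylor expansion in $t$ of (\ref{G1B-thm}) (after setting $Z=1$), since multiplying the exponential generating function of the Bell polynomials by $e^{-u}u^{B-3}$, integrating over $u\in\mathbb{R}_+$ and applying $\frac{d^{B-3-M}}{dt^{B-3-M}}\big|_{t=0}$ converts (\ref{solution:gammaMB}) into $\rho_0^{B-2}(B-3)!$ times a $t$-derivative of the reciprocal $(B-2)$-th power in (\ref{G1B-thm}). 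Hence it suffices to show that the numbers $\gamma_B^M$ of (\ref{solution:gammaMB}), inserted into (\ref{goodansatz}), satisfy (\ref{G1B-lim}); this simultaneously proves the ansatz is self-consistent.

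The second step is to decompose (\ref{G1B-lim}) term by term, which the three lemmas above already accomplish: the left-hand side becomes (\ref{W-ansatz-1B}), the $\beta$-sum together with the $|J|\in\{1,B-2\}$ products becomes (\ref{G3+1_2+B}), the remaining $2\le|J|\le B-3$ products become (\ref{GkGB-k}), and matching the coefficient of a fixed monomial in $\frac{1}{\sqrt{X^1+c}}$ and the $\frac{d^{m_\beta}}{dt^{m_\beta}}\frac{1}{\sqrt{X^\beta+c-2t}^3}\big|_{t=0}$ reduces everything to the scalar system (\ref{eq:gammaMB-all}) indexed by $l\ge-2$ and a tuple $\mathcal{M}$. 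The cases $l=-2$ and $l=-1$ carry no quadratic term and read (\ref{eq:gammaMB}) and (\ref{eq:gammaMB-0}). I would solve (\ref{eq:gammaMB}) by the double induction in increasing $B$ and increasing $s=B-3-M$ carried out in the Proposition, the key move being the Bell-polynomial contraction identity (\ref{Bell-1}) (derived from \cite[Lemma 8]{Birmajer}) that collapses the convolution over $j$, after which the reshuffling $(s+K)=(B-3+K)-M$ produces exactly (\ref{solution:gammaMB}); one then checks the $l=-1$ relation (\ref{eq:gammaMB-0}) for the same closed form using (\ref{Bell-1}) with $(\alpha,\beta)=(2,3)$. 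Since (\ref{eq:gammaMB}) determines $\gamma_B^M$ recursively from $\gamma_3^M=\delta_{M,0}$, the candidate is unique.

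The third step is the consistency check for $l\ge0$, where the genuinely bilinear convolution $\sum_{\mathcal{M}'\cup\mathcal{M}''=\mathcal{M}}\gamma\cdot\gamma$ is present. Using permutation symmetry to record $\mathcal{M}$ by multiplicities $(n_0,\dots,n_p)$, substituting (\ref{solution:gammaMB}) everywhere, exchanging the $K$--$j$ summations and applying (\ref{Bell-1}) to the linear pieces exactly as in the Proposition, the whole of (\ref{eq:gammaMB-all}) collapses to the purely combinatorial identity Conjecture~\ref{conj} for Bell polynomials. This is the main obstacle, and I do not expect it to yield to the single contraction identity (\ref{Bell-1}) alone, precisely because of the bilinear term, which after expansion carries a Vandermonde-type sum over $(n_i',n_i'')$ weighted by double-factorial ratios. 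The most promising route is to resum: introduce the one-variable function $g(t)$ equal to the denominator base $1-\int_1^\infty\frac{dT\rho(T)}{\sqrt{T+c}}\frac{1}{(\sqrt{T+c}+\sqrt{T+c-2t})\sqrt{T+c-2t}}$ in (\ref{G1B-thm}), note $g(0)=\rho_0$, and try to show directly --- using only $\frac{\sqrt{A}-\sqrt{B}}{A-B}=\frac{1}{\sqrt{A}+\sqrt{B}}$ and the substitution $A\mapsto A-2t$ that generates the $t$-derivatives --- that $G(X^1|\dots|X^B)$ from (\ref{G1B-thm}) satisfies (\ref{G1B-lim}) \emph{without} ever expanding in $t$, thereby obtaining Conjecture~\ref{conj} as a corollary rather than an input. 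If this generating-function argument goes through, the theorem holds unconditionally; otherwise it stands modulo the computer-verified Conjecture~\ref{conj}, exactly as the statement records.
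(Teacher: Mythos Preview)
Your proposal is correct and follows essentially the same route as the paper: the ansatz (\ref{goodansatz}), the three lemmas yielding (\ref{W-ansatz-1B}), (\ref{G3+1_2+B}), (\ref{GkGB-k}), the reduction to the scalar system (\ref{eq:gammaMB-all}), the solution of the $l=-2,-1$ levels via (\ref{Bell-1}) giving (\ref{solution:gammaMB}), and the generating-function passage to the closed form (\ref{G1B-thm}) are exactly what the paper does, and you correctly identify that the $l\ge 0$ levels rest on the computer-checked Conjecture~\ref{conj}. Your additional suggestion to try a direct generating-function verification of (\ref{G1B-lim}) without expanding in $t$ goes beyond the paper, which does not attempt this and simply states the theorem conditionally on Conjecture~\ref{conj}.
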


\noindent
Together with (\ref{GNB-final-X}) we have thus completely solved
the combined large-($\mathcal{N},V$) limit of the Kontsevich
model.

\section{From $\Phi^3_2$ model on Moyal space to 
Schwinger functions on $\mathbb{R}^2$}

\label{sec:NCG}

This section parallels the treatment of the $\phi^{\star 4}_4$ case in
\cite{Grosse:2012uv}. We refer to that paper for more details.
The $\phi^{\star 3}_2$-model on Moyal-deformed 2D Euclidean space with
harmonic propagation is defined by the action
\begin{align}
S[\phi]:= \int_{\mathbb{R}^2} \frac{d\xi}{8\pi}
\Big( \kappa \phi+ \frac{1}{2}
\phi \star (-\Delta + \|4 \Theta^{-1} \cdot \xi\|^2 +\mu^2) \phi
+\frac{\lambda}{3} \phi\star\phi\star\phi\Big)(\xi).
\label{action-Moyal}
\end{align}
The tadpole contribution proportional to $\kappa\in \mathbb{R}$ is required
for renormalisation.
By $\star$ we denote the 2D-Moyal product parametrised
by $\theta \in \mathbb{R}$,
\begin{align}
(f\star g)(\xi) := \int_{\mathbb{R}^2 \times\mathbb{R}^2} \frac{d\eta
  \,dk}{(2\pi)^2} \,f(\xi+\tfrac{1}{2}\Theta\cdot k)\,g(\xi+\eta)
e^{\mathrm{i}\langle k,\eta\rangle},\qquad
\Theta:=\begin{pmatrix}0 & \theta \\ -\theta & 0
\end{pmatrix}.
\end{align}
The Moyal space possesses a convenient matrix basis
\begin{align}
f_{mn}(\xi)=
2 (-1)^m \sqrt{\frac{m!}{n!}}\Big(\sqrt{\frac{2}{\theta}}
\xi\Big)^{n-m}
L^{n-m}_m\Big(\frac{2\|\xi\|^2)}{\theta}\Big) e^{-\frac{\|\xi\|^2}{\theta}},
\quad m,n\in \mathbb{N},\label{fmn}
\end{align}
where the $L^\alpha_m(t)$ are associated Laguerre polynomials
of degree $m$ in $t$ and
$(\xi_1,\xi_2)^k:=(\xi_1+\mathrm{i}\xi_2)^k$. The matrix basis
satisfies
$(f_{kl}\star f_{mn})(\xi)=\delta_{ml}
f_{kn}(\xi)$ and $\int_{\mathbb{R}^2}d\xi
\;f_{mn}(\xi)=(2\pi\theta) \delta_{mn}$. A convenient
regularisation consists in
restricting the fields $\phi$ to those with finite expansion
$\phi(\xi)=\sum_{m,n=0}^{\mathcal{N}} \Phi_{mn}
f_{mn}(\xi)$. Using formulae for Laguerre polynomials, the action
(\ref{action-Moyal}) takes precisely the form (\ref{action-MM}) 
of a matrix model for $\phi =\phi^*\in M_{\mathcal{N}}( {\mathbb C})$,
with the following identification:
\begin{align}
V=\frac{\theta}{4},\qquad E_m= \frac{m}{V}+\frac{\mu^2}{2}
=\mu^2\Big(\frac{1}{2}+\frac{m}{\mu^2V}\Big).
\end{align}
This explains our interest in linearly spaced eigenvalues $e(x)=x$.

Following \cite{Grosse:2013iva} we \emph{define} connected Schwinger
functions in position space as
\begin{align}
S_c(\mu \xi_1,{\dots},\mu \xi_N)
&:= \!\!\!\!\!\lim_{V\mu^2\to \infty}\lim_{\Lambda\to \infty}
 \sum_{N_1+\dots+N_B=N}
\sum_{q^1_1,\dots, q^B_{N_B} =0}^{\mathcal{N}}
\frac{G_{|q^1_1\dots q^1_{N_1}|\dots
|q^B_1\dots q^B_{N_B}|}}{8\pi \mu^{2(2-B-N)}
S_{(N_1,\dots,N_B)}}
\nonumber
\\*
& \times\!\!\!
\sum_{\sigma \in \mathcal{S}_N} \prod_{\beta=1}^B
\frac{
f_{q_1^\beta q_2^\beta}(
\xi_{\sigma(s_\beta{+}1)})
{\cdots}
f_{q^\beta_{N_\beta}q^\beta_1}(
\xi_{\sigma(s_\beta+N_\beta)})
}{V\mu^2 N_\beta} ,
\label{Schwinger}
\end{align}
where $s_\beta:=N_1{+}{\dots}{+}N_{\beta{-}1}$ and
$\mathcal{N}=\Lambda^2V\mu^2$.  The $G_{\dots}$
are the expansion coefficients of $\log\mathcal{Z}[J]$ in (\ref{logZ}),
where we already absorbed their mass dimension 
given in footnote~\ref{fn-1}. These
Schwinger functions are fully symmetric in $\mu \xi_1,\dots,\mu \xi_N$.

The various factors of $V$ need explanation.
We recall that the prefactor of $G_{\dots}$ in (\ref{logZ})
was $V^{2-B}$. The factor $V^{-B}$ is distributed over the $B$
cycles. In a first step we have thus defined the free energy
density as $(\mu^2V)^{-2} \log \frac{\mathcal{Z}[J]}{\mathcal{Z}[0]}$, in
agreement with the usual procedure in matrix models 
(see e.g.\ the $\frac{1}{N^2}$ prefactor 
in \cite[eq. (4.2)]{Makeenko:1991ec}). Then formally we set
\[
S_c(\mu \xi_1,{\dots},\mu \xi_N)=\frac{1}{8\pi}
\frac{\delta^N ((\mu^2V)^{-2}
\log \frac{\mathcal{Z}[J]}{\mathcal{Z}[0]})}{
\delta J(\xi_1)\dots \delta J(\xi_N)} \Big|_{J=0},
\]
with a \emph{special} definition of $\frac{\delta J_{mn}}{\delta
  J(\xi)}$. Since by properties of the matrix basis (\ref{fmn}) one has
$J_{mn}=\int_{\mathbb{R}^2} \frac{d\eta}{8\pi V} f_{nm}(\eta)J(\eta)$,
the usual convention
$\frac{\delta J(\eta)}{\delta J(\xi)}=\delta(\xi-\eta)$ gives
$\frac{\delta J_{mn}}{\delta  J(\xi)}= \frac{1}{8\pi V} f_{nm}(\xi)$. As
part of the renormalisation process, we change these conventions into
\begin{align}
\frac{\delta J_{mn}}{\delta  J(\xi)}:= \mu^2 f_{nm}(\xi),
\end{align}
or equivalently $S_c(\mu \xi_1,{\dots},\mu \xi_N)=\frac{1}{8\pi}
\frac{(8\pi V\mu^2)^N \delta^N ((\mu^2V)^{-2}
\log \frac{\mathcal{Z}[J]}{\mathcal{Z}[0]})}{
\delta J(\xi_1)\dots \delta J(\xi_N)} \Big|_{J=0}$ with the standard convention.
It is important to note that these field redefinitions are neutral
with respect to the number $B$ of boundary components.

The evaluation of (\ref{Schwinger}) follows the same lines as in
\cite{Grosse:2013iva}.  To keep this paper self-contained, we outline
the steps until the technical lemma proved in \cite[Lemma 4+Corollary
5]{Grosse:2013iva} can be used.  We collect the indices
$\underline{q}^\beta:=(q^\beta_1,\dots,q^\beta_{N_\beta})$ and define
$|\underline{q}^\beta|:=q^\beta_1+\dots+q^\beta_{N_\beta}$ and $\langle
\underline{\omega}^\beta,\underline{q}^\beta\rangle :=
\sum_{i=1}^{N_\beta-1} \omega^\beta_i (q^\beta_i-q^\beta_{i+1})$ for
$\underline{\omega}=(\omega^\beta_1,\dots,\omega^\beta_{N_\beta-1})$.
We assume that the matrix functions $G$ have a representation as
Laplace-Fourier transform,
\begin{align}
\frac{G_{|\underline{q}^1|\dots|\underline{q}^B|}}{\mu^{2(2-B-N)}}
= \int_{\mathbb{R}_+^B} d(t^1,\dots,t^B)
\int_{\mathbb{R}^{N-B}} &d(\underline{\omega}^1,
\dots ,\underline{\omega}^B)
\mathcal{G}_{\mathcal{N},V}
(t^1,\underline{\omega}^1|\dots
|t^B,\underline{\omega}^B)
\label{Laplace-Fourier}
\\*[-2ex]
&\times
\exp\Big(-\frac{1}{V\mu^2} \sum_{\beta=1}^B
\big(t^\beta|\underline{q}^\beta|-\mathrm{i}\langle\underline{\omega}^\beta,\underline{q}^\beta\rangle \big)\Big).
\nonumber
\end{align}
The inverse Laplace-Fourier transforms 
$\mathcal{G}_{\mathcal{N},V}(t^1,\underline{\omega}^1|\dots
|t^B,\underline{\omega}^B)$ depend on $\mathcal{N},V$ but have a limit
$\mathcal{G}(t^1,\underline{\omega}^1|\dots
|t^B,\underline{\omega}^B)= \lim_{ \mathcal{N},V\to \infty} 
\mathcal{G}_{\mathcal{N},V}(t^1,\underline{\omega}^1|\dots
|t^B,\underline{\omega}^B)$ 
satisfying
\begin{align}
\tilde{G}(\underline{x}^1|\dots|\underline{x}^B|)
= \int_{\mathbb{R}_+^B} d(t^1,\dots,t^B)
\int_{\mathbb{R}^{N-B}} &d(\underline{\omega}^1,
\dots ,\underline{\omega}^B)
\mathcal{G}(t^1,\underline{\omega}^1|\dots
|t^B,\underline{\omega}^B)
\label{Laplace-Fourier-lim}
\\*[-2ex]
&\times
\exp\Big(-\sum_{\beta=1}^B
\big(t^\beta|\underline{x}^\beta|-\mathrm{i}\langle\underline{\omega}^\beta,\underline{x}^\beta\rangle \big)\Big).
\nonumber
\end{align}
Inserting (\ref{Laplace-Fourier}) into (\ref{Schwinger}) gives, besides
$\mathcal{G}_ {\mathcal{N},V}(t^1,\underline{\omega}^1|\dots
|t^B,\underline{\omega}^B)$,
the following type of factors (for each $\beta=1,\dots,N_\beta$
omitted below) under the Laplace-Fourier integral and
the sum over permutations and partitions of $N$:
\begin{align}
&\sum_{q_1,\dots,q_N=0}^{\mathcal{N}}
\frac{f_{q_1q_2}(\xi_{\sigma(s+1)})\cdots
  f_{q_Nq_1}(\xi_{\sigma(s+N)})}{V\mu^2 N}
z_1^{q_1}(t,\underline{\omega})
\cdots z_N^{q_N}(t,\underline{\omega}),
\label{sum-Laguerre}
\\
&z_1=e^{-\frac{t}{V\mu^2}+\mathrm{i}\frac{\omega_1}{V\mu^2}},\quad
z_i=e^{-\frac{t}{V\mu^2}+\mathrm{i}\frac{\omega_i-\omega_{i-1}}{V\mu^2}}~~
\text{for } i=2,\dots,N-1,\quad
z_N=e^{-\frac{t}{V\mu^2}-\mathrm{i}\frac{\omega_{N-1}}{V\mu^2}}.
\nonumber
\end{align}
For $\mathcal{N}\to \infty$ but fixed $V$, the index sum was evaluated in
\cite{Grosse:2013iva}:
\begin{lem}[{\cite[Lemma~4+Corollary~5]{Grosse:2013iva}}]
Let $\langle \xi,\eta\rangle$, $\|\xi\|$ and
  $\xi{\times}\eta=\det(\xi,\eta)$ be scalar product, norm and (third component
  of) vector product of $\xi,\eta\in \mathbb{R}^2$. Then for $\xi_i \in
  \mathbb{R}^2$ and $z_i\in \mathbb{C}$ with $|z_i|<1$ one has
(with cyclic identification $N+i\equiv i$ where necessary)
\begin{align}
&\sum_{q_1,\dots,q_N=0}^\infty \frac{1}{V\mu^2} \prod_{i=1}^N
f_{q_iq_{i+1}}(\xi_i) z_i^{q_i}
\label{sumlim-Laguerre}
\\[-3ex]
&=\frac{2^N}{V\mu^2(1-\prod\limits_{i=1}^N (-z_i))}
\exp\bigg(-\frac{\sum\limits_{i=1}^N \|\xi_i\|^2}{4V}
\frac{1+\prod\limits_{i=1}^N (-z_i)}{1-\prod\limits_{i=1}^N (-z_i)}\bigg)
\nonumber
\\[-1.5ex]
& \times \exp\bigg(\!\!\!-\!
\!\!\! \sum_{1\leq k < l \leq N} \!\!\!
\Big(\!
\frac{\big(\langle \xi_k, \xi_l\rangle
{-}\mathrm{i} \xi_k{\times} \xi_l\big)}{2V}
\frac{\prod\limits_{j=k+1}^l ({-}z_j)}{1{-}\prod\limits_{i=1}^N ({-}z_i)}
+\frac{\big(\langle \xi_k, \xi_l\rangle
{+}\mathrm{i} \xi_k{\times} \xi_l\big)}{2V}
\frac{\prod\limits_{j=l+1}^{N+k} ({-}z_j)}{1{-}\prod\limits_{i=1}^N ({-}z_i)}
\! \Big)\!\bigg).
\nonumber
\end{align}
\end{lem}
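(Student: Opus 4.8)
The plan is to recognise the index sum as a Fock-space trace of a product of ``Gaussian operators'' and to evaluate it by the standard calculus of displacement and parity operators. The starting point is the operator interpretation of the Moyal matrix basis: up to the normalisation constant $2$ and a linear rescaling $\xi\mapsto\alpha(\xi)\in\mathbb{C}$ (depending on $\theta$), the numbers $f_{mn}(\xi)$ of \eqref{fmn} are the Fock-space matrix elements $\langle m|\Pi(\xi)|n\rangle$ of the displaced parity operator $\Pi(\xi):=D(\alpha(\xi))\,(-1)^{\hat N}\,D(\alpha(\xi))^{-1}$, where $D(\alpha)$ is the Weyl displacement operator and $\hat N$ the number operator; the $(-1)^m$ and $\sqrt{m!/n!}$ factors in \eqref{fmn} are precisely those produced by this matrix element (equivalently, $f_{mn}$ is the Wigner function of $|m\rangle\langle n|$). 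One may also phrase this through the closed-form generating function $\sum_{m,n}\frac{\bar u^m v^n}{\sqrt{m!\,n!}}f_{mn}(\xi)=\mathrm{Gaussian}(u,\bar u,v,\bar v,\xi)$, which is just the contraction of $\Pi(\xi)$ between two coherent states. First I would substitute this identification into the left-hand side of \eqref{sumlim-Laguerre}: writing $z^{\hat N}$ for the diagonal operator (bounded precisely because $|z|<1$), the cyclic index sum becomes $\tfrac{2^N}{V\mu^2}\,\mathrm{Tr}\!\left[\prod_{i=1}^N z_i^{\hat N}\,\Pi(\xi_i)\right]$, up to the transposition bookkeeping forced by the index convention in \eqref{fmn}.

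Next I would collapse the product inside the trace using the composition laws $D(\alpha)D(\beta)=e^{\mathrm{i}\,\Im(\bar\alpha\beta)}D(\alpha+\beta)$ and $(-1)^{\hat N}D(\alpha)(-1)^{\hat N}=D(-\alpha)$, together with a coherent-state (Glauber--Sudarshan $P$-, or Wigner-) representation of each $z_i^{\hat N}$, whose symbol is a Gaussian in the coherent amplitude, convergent for $|z_i|<1$. This turns the trace into a finite-dimensional Gaussian integral over $N$ coherent amplitudes with a cyclic quadratic form whose ``hopping'' coefficient around the loop is $\prod_{i=1}^N(-z_i)$, the one minus sign per site being the parity and the $z_i$ the diagonal weight. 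Performing the Gaussian integral, the determinant of the quadratic form produces the prefactor $\bigl(1-\prod_{i=1}^N(-z_i)\bigr)^{-1}$, while completing the square feeds the sources $\alpha(\xi_i)$ into the exponent: the real part of $\bar\alpha_k\alpha_l$ produces the $\langle\xi_k,\xi_l\rangle$ and $\|\xi_i\|^2$ terms, and the imaginary part $\Im(\bar\alpha_k\alpha_l)$, coming from the phases in the displacement composition law, produces the $\xi_k{\times}\xi_l$ terms; the asymmetry between the two arcs $\prod_{j=k+1}^l(-z_j)$ and $\prod_{j=l+1}^{N+k}(-z_j)$ is exactly the statement that the propagator of the cyclic quadratic form between sites $k$ and $l$ splits into the two paths around the loop. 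A more elementary variant avoids Gaussian operator calculus wholesale and argues by induction on $N$: the innermost sum $\sum_{q}z_q^{\,q}f_{\cdot\,q}(\xi_q)f_{q\,\cdot}(\xi_{q+1})=\langle\cdot|\Pi(\xi_q)\,z_q^{\hat N}\,\Pi(\xi_{q+1})|\cdot\rangle$ is a product of three Gaussian operators, hence again a scalar multiple of $\Pi$ of a combined complex argument times a Gaussian prefactor; iterating peels off the $N$ sums one at a time, and the final diagonal sum $\sum_k(\mathrm{weight})^k f_{kk}(\cdots)$ yields the geometric-series prefactor.

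The main obstacle is the bookkeeping rather than any deep input: one must fix the branch and sign conventions ($(\xi_1,\xi_2)^k=(\xi_1+\mathrm{i}\xi_2)^k$, the $(-1)^m$ normalisation, the cyclic identification $N{+}i\equiv i$) so that the scalar- and vector-product terms come out with precisely the arc ranges stated, and one must justify interchanging the $N$ infinite index sums with the Gaussian integral --- which is where the hypothesis $|z_i|<1$ (equivalently $\Re t^\beta>0$ in the application) is essential, both to make each $z_i^{\hat N}$ a legitimate trace-class-type operator and to guarantee convergence of the geometric series in $\prod_i(-z_i)$ that produces $\bigl(1-\prod_i(-z_i)\bigr)^{-1}$. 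Since the statement is exactly \cite[Lemma~4+Corollary~5]{Grosse:2013iva}, in the paper we simply invoke that reference; the sketch above records the route one would take to prove it from scratch.
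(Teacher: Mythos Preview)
Your proposal is correct in its final observation: the present paper does not prove this lemma at all but simply quotes it from \cite{Grosse:2013iva}. So there is no ``paper's own proof'' to compare against, and your closing sentence is the accurate summary of what happens here.

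That said, the paper does drop a hint about the mechanism behind the cited proof, and it differs from the route you sketch. Immediately after stating the lemma, the paper remarks that the result ``relies on the generating function $\sum_{n=0}^\infty L^{\alpha-n}_n(t)z^n=e^{-zt}(1+z)^\alpha$'' and that ``the only place where $|z|<1$ matters is a final sum $\sum_{q=0}^\infty \frac{(q+k)!}{q!k!}((-z_1)\cdots(-z_N))^q$''. This indicates that the argument in \cite{Grosse:2013iva} proceeds by direct manipulation of Laguerre-polynomial generating functions: one inserts the explicit form \eqref{fmn}, performs the index sums one after another using the Laguerre generating identity (which converges for all $z$ when the upper index is a non-negative integer), and only at the very end meets a geometric-type series in the product $\prod_i(-z_i)$ that requires $|z_i|<1$. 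Your approach instead reinterprets $f_{mn}(\xi)$ as matrix elements of a displaced parity operator and evaluates the cyclic sum as a Fock-space trace via coherent-state Gaussian integrals. Both routes are legitimate and lead to the same answer; the Laguerre-generating-function method is more elementary and self-contained (no operator calculus needed), while your operator method makes the structural origin of the arc products $\prod_{j=k+1}^l(-z_j)$ and the prefactor $(1-\prod_i(-z_i))^{-1}$ transparent as the two-path propagator and determinant of a cyclic quadratic form. Your inductive variant (peeling off one index at a time) is in fact closer in spirit to what the Laguerre identity does.
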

That the result can be
applied to the combined limit $\mathcal{N},V\to \infty$ with
$\mathcal{N}=\Lambda^2V\mu^2$, where $|z_i|=1$ becomes critical, 
needs some explanation. It is uncritical to move the convergent
$\mathcal{G}(t^1,\underline{\omega}^1|\dots
|t^B,\underline{\omega}^B)$ in front of the limit. The result
(\ref{sumlim-Laguerre}) relies on the generating function
$\sum_{n=0}^\infty L^{\alpha-n}_n(t)z^n=e^{-zt}(1+z)^\alpha$ which
\emph{precisely for $\alpha\in \mathbb{N}$}
is absolutely convergent for \emph{any} $z\in \mathbb{C}$. The only place
where $|z|<1$ matters is a final sum
$\sum_{q=0}^\infty \frac{(q+k)!}{q!k!}((-z_1)\cdots (-z_N))^{q}
=\frac{1}{(1-(-z_1)\cdots (-z_N))^{1+k}}$. Restricting this sum to
$1\leq \mathcal{N}$ gives (for $N$ being even) instead
\[
\sum_{q=0}^\mathcal{N}\frac{(q+k)!}{q!k!}( z_1\cdots z_N)^q
=\frac{1-(z_1\cdots z_N)^{\mathcal{N}+1}P_k(z_1\cdots z_N)}{
(1-z_1\cdots z_N)^{1+k}},
\]
where $P_k(z)$ is a polynomial of degree $k$ with $P_k(1)=1$. 
Since $(z_1\cdots z_N)^{\mathcal{N}}= e^{-\Lambda^2 N t}$, there is a
$V$-uniform multiplicative error of $1+\mathcal{O}(1)e^{-\Lambda^2 N t}$ if we
restrict in (\ref{sumlim-Laguerre}) the sum to $q_i\leq \mathcal{N}$. 
Therefore,
the limit $\lim_{V\to \infty}$ of (\ref{sumlim-Laguerre})
agrees with the scaling limit
 $\mathcal{N},V\to \infty$ with
$\frac{\mathcal{N}}{V\mu^2}=\Lambda^2$ fixed
of (\ref{sum-Laguerre}) followed by sending
$\Lambda \to \infty$. We thus have
\begin{align}
&\lim_{\Lambda\to \infty} \Bigg(\lim_{\stackrel{\mathcal{N},V\to \infty}{
\frac{\mathcal{N}}{V\mu^2}=\Lambda^2}}\;
\sum_{q_1,\dots,q_N=0}^{\mathcal{N}}
\frac{f_{q_1q_2}(\xi_{\sigma(s+1)})\cdots
  f_{q_Nq_1}(\xi_{\sigma(s+N)})}{V\mu^2 N}
z_1^{q_1}(t,\underline{\omega})
\cdots z_N^{q_N}(t,\underline{\omega})
\Bigg)
\nonumber
\\
&= \left\{
\begin{array}{c@{\qquad\text{for }}l} 0 & N \text{ odd,} \\
 \frac{2^N}{N^2 t}
\exp\Big({-}\frac{\mu^2}{2Nt}\|\xi_{\sigma(s+1)}{-}
\xi_{\sigma(s+2)}{+}\dots {-} \xi_{\sigma(s+N)}\|^2\Big)
& N \text{ even.}
\end{array}\right.
\label{sum-Laguerre-final}
\end{align}
Now write
\begin{align}
\frac{2^N}{N^2 t}
e^{-\frac{\mu^2}{2Nt}\|\xi\|^2}
=\frac{2^N}{2\pi N}
\int_{\mathbb{R}^2} dp \; e^{-\frac{N}{2\mu^2}\|p\|^2 t + \mathrm{i}
\langle p,\xi\rangle}
\end{align}
and recall that the $z_i$ factors of (\ref{sum-Laguerre-final}) were
introduced via the Laplace-Fourier transform
(\ref{Laplace-Fourier}) to be inserted into (\ref{Schwinger}).
Combining all these steps and limits,
we can immediately perform the Laplace-Fourier transform
(\ref{Laplace-Fourier-lim}) to a function with arguments
$x^\beta_i=\frac{\|p_i\|^2}{2\mu^2}$ for all $i=1,\dots,N_\beta$.
The final result reads
\begin{align}
S_c(\mu \xi_1,{\dots},\mu \xi_N)
&= \!\!\!\!\!
\sum_{\stackrel{N_1+\dots+N_B=N}{N_\beta \text{ even}}}
\sum_{\sigma \in\mathcal{S}_N}
\prod_{\beta=1}^B \Big(\frac{2^{N_B}}{N_B}
\int_{\mathbb{R}^2} \frac{dp^\beta}{2\pi \mu^2} e^{\mathrm{i}\langle
    p^\beta,\xi_{\sigma(s_\beta+1)}-
\xi_{\sigma(s_\beta+2)}+\dots - \xi_{\sigma(s_\beta+N_\beta)}\rangle}\Big)
\nonumber
\\
& \times \frac{1}{8\pi S_{(N_1,\dots,N_B)}}\;
\tilde{G}\Big(\underbrace{\tfrac{\|p^1\|^2}{2\mu^2},\dots,
\tfrac{\|p^1\|^2}{2\mu^2}}_{N_1}\big|\dots\big|
\underbrace{\tfrac{\|p^B\|^2}{2\mu^2},\dots,
\tfrac{\|p^B\|^2}{2\mu^2}}_{N_B}\Big).
\label{Schwinger-final}
\end{align}

For $N=2$ the formula specifies with (\ref{Gxy}) and
(\ref{MS-solution}) to
\begin{align}
&S_c(\mu \xi_1,\mu \xi_2)
= \int_{\mathbb{R}^2} \frac{dp^\beta}{4\pi^2 \mu^2}
e^{\mathrm{i}\langle p, \xi_1-\xi_2\rangle}\hat{S}_2(p),
\label{Schwinger-2}
\\
&\hat{S}_2(p) = 2
W'\Big(\big(\tfrac{\|p\|^2}{\mu^2}+1\big)^2\Big)
=
\frac{
\displaystyle
1- \tilde{\lambda}^2 \int_1^\infty \!\! \frac{dT}{\sqrt{T}\sqrt{T{+}c}}
\frac{\mu^4}{
\big(\sqrt{(\|p\|^2{+}\mu^2)^2{+}c\mu^4}+\mu^2 \sqrt{T{+}c}\big)^2}
}{\sqrt{(\|p\|^2{+}\mu^2)^2+c\mu^4}}.
\nonumber
\end{align}

It was also pointed out in \cite{Grosse:2013iva} and
\cite{Grosse:2014lxa} that the Schwinger 2-point function is
reflection positive iff the function $\|p\|^2 \mapsto \hat{S}_2(p)$ is
a Stieltjes function. This is not the case, neither for real nor
purely imaginary non-vanishing $\tilde{\lambda}$! For $c>0$ and thus
$\tilde{\lambda} \in \mathrm{i}\mathbb{R}$, the integrand has a pole
(or end point of a branch cut)
in the complex plane at $\|p\|^2=\mu^2(-1\pm \mathrm{i} \sqrt{c})$,
contradicting holomorphicity in $\mathbb{C}\setminus
\mathbb{R}_-$. For $-1<c<0$ and thus 
$\tilde{\lambda} \in \mathbb{R}$ one finds that 
the imaginary part of
$\hat{S}_2(p)$ at $\|p\|^2=(-3-\mathrm{i}\frac{|c|}{10})\mu^2$ is
negative\footnote{Here one should write $\sqrt{(\|p\|^2+\mu^2)^2+c\mu^4}
\mapsto \sqrt{\|p\|^2+(1-\sqrt{-c})\mu^2}
\sqrt{\|p\|^2+(1+\sqrt{-c})\mu^2}$ for a well-defined 
holomorphic extension of (\ref{Schwinger-2}).}.  
This
contradicts the anti-Herglotz property of Stieltjes functions.
A rigorous proof that the 2-point function of $\Phi^3_2$ is not
reflection positive will be given in 
\cite{Grosse:2016qmk}.

\section{Summary}

\noindent 
We have given an alternative solution strategy for the
large-$\mathcal{N}$ limit of the $\Phi^3_2$
matrix model (= renormalsed Kontsevich model). This limit suppresses
non-planar graphs. In principle, punctures (or boundary components)
are also suppressed, but special limits of noncommutative field theory
amplify them to the same level as the disk topology. We have
established exact formulae, analytic in the (squared) coupling
constant, for all these correlation functions. 
Correlation functions of disk topology (single puncture) can certainly
be derived from previous results on the Kontsevich model. The
complete treatment of the multi-punctured cases is new 
(to the best of our knowledge).

In our subsequent paper \cite{Grosse:2016qmk} we extend this work to
the $\Phi^3_4$ and $\Phi^3_6$ models. There the renormalisation is
much more involved, whereas the solution of Schwinger-Dyson equations
is easily adapted from $\Phi^3_2$. We will discuss the issue of
overlapping divergences and renormalons in $\Phi^3_6$. The main result
will be the proof that $\Phi^3_4$ and $\Phi^3_6$, but not $\Phi^3_2$,
have reflection positive 2-point functions.

Reflection posivity of higher correlation functions is work in progress.
Another interesting question concerns the identification of the KdV
hierarchy in the solution we found. 

We also hope that these investigations provide new ideas for attacking
the more difficult equations of the $\Phi^4_4$ model.

\section*{Acknowledgements} 

\noindent This work started in collaboration with Ricardo Kullock
from Brazil (now at Universidade do Estado do Rio de Janeiro) during his stay
in Vienna. We would like to thank him for his contributions and an
enjoyable collaboration.
\noindent A.S.\ was supported by JSPS
KAKENHI Grant Number 16K05138, and R.W.\ by SFB 878.



\begin{thebibliography}{999}

\bibitem{Di Francesco:1993nw}
  P.~Di Francesco, P.~H.~Ginsparg and J.~Zinn-Justin,
``2D gravity and random matrices,''
  Phys.\ Rept.\  {\bf 254} (1995) 1--133
  [hep-th/9306153].

\bibitem{Brezin:1990rb}
  E.~Brezin and V.~A.~Kazakov,
``Exactly solvable field theories of closed strings,''
  Phys.\ Lett.\ B {\bf 236} (1990) 144--150.

\bibitem{Douglas:1989ve}
  M.~R.~Douglas and S.~H.~Shenker,
 ``Strings in less than one dimension,''
  Nucl.\ Phys.\ B {\bf 335} (1990) 635--654.

\bibitem{Gross:1989vs}
  D.~J.~Gross and A.~A.~Migdal,
  ``Nonperturbative two-dimensional quantum gravity,''
  Phys.\ Rev.\ Lett.\  {\bf 64} (1990) 127--130.

\bibitem{Kontsevich:1992ti}
  M.~Kontsevich,
``Intersection theory on the moduli space of curves and the 
matrix Airy function,''
  Commun.\ Math.\ Phys.\  {\bf 147} (1992) 1--23.

\bibitem{Makeenko:1991ec}
  Y.~Makeenko and G.~W.~Semenoff,
``Properties of Hermitean matrix models in an external field,''
  Mod.\ Phys.\ Lett.\ A {\bf 6} (1991) 3455--3466.

\bibitem{Langmann:2003cg}
  E.~Langmann, R.~J.~Szabo and K.~Zarembo,
``Exact solution of noncommutative field theory in background magnetic fields,''
  Phys.\ Lett.\ B {\bf 569} (2003) 95 
  [hep-th/0303082].

\bibitem{Grosse:2005ig}
  H.~Grosse and H.~Steinacker,
``Renormalization of the noncommutative $\phi^3$-model through 
the Kontsevich model,''
  Nucl.\ Phys.\ B {\bf 746} (2006) 202--226
  [hep-th/0512203].

\bibitem{Grosse:2006tc}
  H.~Grosse and H.~Steinacker,
``Exact renormalization of a noncommutative $\phi^3$ model in 6 dimensions,''
  Adv.\ Theor.\ Math.\ Phys.\  {\bf 12} (2008) 605--639
  [hep-th/0607235].


\bibitem{Grosse:2004yu}
  H.~Grosse and R.~Wulkenhaar,
``Renormalisation of $\phi^4$-theory on noncommutative $\mathbb{R}^4$
in the matrix base,''
  Commun.\ Math.\ Phys.\  {\bf 256} (2005) 305
  [arXiv:hep-th/0401128].

\bibitem{Disertori:2006nq}
  M.~Disertori, R.~Gurau, J.~Magnen and V.~Rivasseau,
``Vanishing of beta function of non commutative $\phi^4_4$ theory to all
orders,''
  Phys.\ Lett.\  B {\bf 649} (2007) 95
  [arXiv:hep-th/0612251].

\bibitem{Grosse:2012uv}
  H.~Grosse and R.~Wulkenhaar,
\textit{``Self-dual noncommutative $\phi^4$-theory in four dimensions is a
non-perturbatively solvable and non-trivial quantum field theory,''}
Commun.\ Math. Phys.\ {\textbf 329} (2014) 1069--1130
[arXiv:1205.0465 [math-ph]].

\bibitem{Grosse:2009pa}
  H.~Grosse and R.~Wulkenhaar,
``Progress in solving a noncommutative quantum field theory in
  four dimensions,''
  arXiv:0909.1389 [hep-th].

\bibitem{Grosse:2015fka}
  H.~Grosse and R.~Wulkenhaar,
``On the fixed point equation of a solvable 4D QFT model,''
Vietnam J.\ Math.\ {\bf 44} (2016) 153--180
  [arXiv:1505.05161 [math-ph]].

\bibitem{Grosse:2014lxa}
  H.~Grosse and R.~Wulkenhaar,
\textit{``Solvable 4D noncommutative QFT: phase transitions and quest
for reflection positivity,''}
  arXiv:1406.7755 [hep-th].

\bibitem{Grosse:2013iva}
  H.~Grosse and R.~Wulkenhaar,
\textit{``Solvable limits of a 4D noncommutative QFT,''}
  arXiv:1306.2816 [math-ph].

\bibitem{Grosse:2016qmk}
  H.~Grosse, A.~Sako and R.~Wulkenhaar,
``The $\Phi^3_4$ and $\Phi^3_6$ matricial QFT models have 
reflection positive two-point function,''
  arXiv:1612.07584 [math-ph].


\bibitem{Rivasseau:1991ub}
  V.~Rivasseau,
\emph{From perturbative to constructive renormalization},
  Princeton University Press (1991).

\bibitem{Birmajer}
D. Birmajer,
J. B. Gil and
M. D. Weiner, ``Some convolution identities and an inverse
relation involving partial Bell polynomials,''
Electron.\ J. Combin.\ {\bf 19} (2012) paper 34.




\end{thebibliography}
\end{document}